\documentclass[a4paper,accepted=2025-03-26,onecolumn]{quantumarticle}
\pdfoutput=1
\usepackage{fullpage}

\usepackage[
  colorlinks,
  linkcolor = blue,
  citecolor = blue,
  urlcolor = blue]{hyperref}

\usepackage[T1]{fontenc}

\usepackage{amsthm, amsmath, amsfonts, amssymb}
\usepackage{enumerate}
\usepackage[scr=rsfs,scrscaled=1]{mathalpha}
\usepackage{tikz}

\usepackage{subfigure}
\usepackage{dsfont}
\usepackage{mathrsfs}
\usepackage{microtype}

\usepackage{thmtools,thm-restate}

\theoremstyle{plain}
\newtheorem{theorem}{Theorem}[section]
\newtheorem{lemma}[theorem]{Lemma}
\newtheorem{corollary}[theorem]{Corollary}
\newtheorem{conjecture}[theorem]{Conjecture}
\newtheorem{definition}[theorem]{Definition}
\newtheorem{prop}[theorem]{Proposition}

\theoremstyle{definition}
\newtheorem{example}[theorem]{Example}

\usepackage[capitalise]{cleveref}
\crefname{theorem}{Theorem}{Theorems}
\crefname{lemma}{Lemma}{Lemmas}
\crefname{proposition}{Proposition}{Propositions}
\crefname{definition}{Definition}{Definitions}
\crefname{corollary}{Corollary}{Corollaries}
\crefname{example}{Example}{Examples}
\crefname{section}{Section}{Sections}
\crefname{appendix}{Appendix}{Appendices}
\crefname{table}{Table}{Tables}
\crefname{conjecture}{Conjecture}{Conjectures}

\DeclareMathOperator{\tr}{tr}
\DeclareMathOperator{\Enc}{Enc}
\DeclareMathOperator{\Dec}{Dec}

\DeclareMathOperator{\im}{Im}

\newcommand{\omegac}{\omega_{\mathrm{c}}}
\newcommand{\omegaq}{\omega_{\mathrm{q}}}
\newcommand{\omegans}{\omega_{\mathrm{ns}}}
\newcommand{\pg}[5][]{\omega^{#1}_{\mathrm{#2}}(#3|#4)_{#5}}

\newcommand{\s}{\mathsf}

\newcommand{\expectedbraket}[1]{\langle #1\rangle}

\usepackage{mathtools}
\DeclarePairedDelimiter\bra{\langle}{\rvert}
\DeclarePairedDelimiter\ket{\lvert}{\rangle}
\DeclarePairedDelimiterX\braket[2]{\langle}{\rangle}{#1 \delimsize\vert #2}

\newcommand{\ketbra}[2]{\ket{#1}\bra{#2}}
\newcommand{\proj}[1]{\ketbra{#1}{#1}}

\newcommand{\R}{\mathbb{R}}
\newcommand{\C}{\mathbb{C}}

\newcommand{\N}{\mathbb{N}}

\newcommand{\POVM}[1]{\mathrm{M}(#1)}
\newcommand{\PM}[1]{\mathrm{PM}(#1)}
\newcommand{\D}[1]{\mathrm{D}(#1)}
\newcommand{\U}[1]{\mathrm{U}(#1)}

\newcommand{\reg}[1]{\mathsf{#1}}
\newcommand{\A}{\reg{A}}
\newcommand{\B}{\reg{B}}
\newcommand{\X}{\reg{X}}

\newcommand{\AB}{\reg{AB}}

\newcommand{\XAB}{\reg{XAB}}
\renewcommand{\a}{\reg{A'}}
\renewcommand{\b}{\reg{B'}}
\newcommand{\ab}{{\a\b}}
\newcommand{\Aa}{{\A\a}}
\newcommand{\Bb}{{\B\b}}

\newcommand{\hilb}[1]{\mathcal{#1}}
\newcommand{\HH}{\hilb{H}}
\newcommand{\HA}{\hilb{A}}
\newcommand{\HB}{\hilb{B}}

\newcommand{\HX}{\hilb{X}}
\newcommand{\Ha}{\hilb{A'}}
\newcommand{\Hb}{\hilb{B'}}
\newcommand{\da}{d}
\newcommand{\db}{d}

\newcommand{\fset}[1]{\mathscr{#1}}
\newcommand{\SA}{\fset{A}}
\newcommand{\SB}{\fset{B}}

\newcommand{\SX}{\fset{X}}
\newcommand{\SY}{\fset{Y}}

\usepackage{mathtools}
\mathtoolsset{centercolon}
\DeclarePairedDelimiter{\set}{\lbrace}{\rbrace}
\DeclarePairedDelimiter{\abs}{\lvert}{\rvert}
\DeclarePairedDelimiter{\card}{\lvert}{\rvert}
\DeclarePairedDelimiter{\norm}{\lVert}{\rVert}
\DeclarePairedDelimiter{\of}{\lparen}{\rparen}
\DeclarePairedDelimiter{\pr}{\lparen}{\rparen}
\DeclarePairedDelimiter{\sof}{\lbrack}{\rbrack}

\newcommand{\x}{\otimes}
\newcommand{\tp}{^{\mathsf{T}}}
\newcommand{\ct}{^{\dagger}}

\newcommand{\Id}{\mathbb{I}} 

\newcommand{\0}{\varnothing} 
\newcommand{\eqdef}{:=}

\newcommand{\WolframRef}[1]{\href{https://reference.wolfram.com/language/ref/#1.html}{\texttt{#1}}}
\newcommand{\file}[1]{``\texttt{#1}''}

\usepackage{xcolor}
\newcommand{\edited}[1]{{#1}}
\newcommand{\off}[1]{} 

\usepackage{authblk}
\newcommand{\aff}[1]{\textit{\normalsize#1}}

\author[1,2]{Llorenç Escol\`a Farr\`as}
\author[3]{Jar\`on Has}
\author[1,3,4]{Maris Ozols}
\author[1,2]{Christian Schaffner}
\author[5]{Mehrdad Tahmasbi}

\affil[1]{\aff{\href{https://qusoft.org}{QuSoft}, Amsterdam, The Netherlands}}
\affil[2]{\aff{Informatics Institute, University of Amsterdam, The Netherlands}}
\affil[3]{\aff{Korteweg-de Vries Institute (KdVI), University of Amsterdam, The Netherlands}}
\affil[4]{\aff{Institute for Logic, Language, and Computation (ILLC), University of Amsterdam, The Netherlands}}
\affil[5]{\aff{Tufts University, Medford, MA, USA}}

\title{Parallel repetition of local simultaneous state\newline discrimination}

\begin{document}
\maketitle

\begin{abstract}
Local simultaneous state discrimination (LSSD) is a recently introduced problem in quantum information processing.
Its classical version is a non-local game played by non-commu\-ni\-ca\-ting players against a referee.
Based on a known probability distribution, the referee generates one input for each of the players and keeps one secret value. The players have to guess the referee's value and win if they all do so.
For this game, we investigate the advantage of no-signalling strategies over classical ones.
We show numerically that for three players and binary values, no-signalling strategies cannot provide any improvement over classical ones. For a certain LSSD game based on a binary symmetric channel, we show that no-signalling strategies are strictly better when multiple simultaneous instances of the game are played. Good classical strategies for this game can be defined by codes, and good no-signalling strategies by list-decoding schemes. We expand this example game to a class of games defined by an arbitrary channel, and extend the idea of using codes and list decoding to define strategies for multiple simultaneous instances of these games. Finally, we give an expression for the limit of the exponent of the classical winning probability, and show that no-signalling strategies based on list-decoding schemes achieve this limit.
\end{abstract}

\tableofcontents

\section{Introduction}

The task of discriminating between states is of fundamental importance in information processing and cryptography \cite{Blahut_1974, Polyanskiy2010, Maurer2000}.
A rich and extensive literature exists on this fundamental problem under the name of state discrimination or hypothesis testing \cite{Wasserman_2010, Helstrom_1969, Bae_Kwek_2015}.
In quantum cryptography and quantum information theory, a natural extension of state-discrimination problem is to distinguish \emph{quantum states}. In the context of non-local games, the state-discrimination problem arises in a multi-player setting. In these scenarios, it is interesting to study how non-local resources such as shared randomness, quantum entanglement or no-signaling correlations can help the players to succeed in the state-discrimination task.
Authors of \cite{Nonlocality, Framework} have studied the scenario where local operation and classical communication are allowed between two parties, and they have shown that entanglement can help the players. 

The authors of \cite{majenz2021local} studied another variant of distributed state discrimination in which multiple parties cannot communicate and have to estimate the state locally and simultaneously, hence calling the problem \emph{local simultaneous state discrimination} (LSSD). LSSD problems naturally arise in the context of uncloneable cryptography \cite{broadbent2019uncloneable, MST21, Prabhanjan2022, CLLZ21}, where we encode classical data into a quantum state such that an adversary cannot copy it. In such scenarios, successfully copying translates into successfully distinguishing quantum states. \edited{LSSD problems also appear in the study of monogamy of entanglement games \cite{Tomamichel_2013}, where two parties prepare a tripartite state and perform a measurement to guess the outcome of a measurement performed by a third party. Optimal performance of such games has been crucial to prove the security of uncloneable cryptographic schemes \cite{broadbent2019uncloneable}.} Depending on the resources shared between the parties, one can consider various strategies. The authors of \cite{majenz2021local} showed that even when the state has a classical description, quantum entanglement could enhance the probability of simultaneous state discrimination, and a more powerful resource of no-signaling correlations could enhance it even further.

As \cite{majenz2021local} have shown that finding the optimal strategy for three-party LSSD is NP-hard, it is likely to be challenging to study LSSDs in general. One could, however, characterize the optimal probability of winning and optimal strategies for LSSDs with some specific structure. One natural structure of interest is when an LSSD problem consists of several independent and identical LSSDs, and the parties have to win all these games at once in parallel. We call this type of LSSDs \emph{parallel repetition} of LSSDs, for which we establish several results in this article. Studying parallel LSSD games might have cryptographic implications. Many protocols have product structures, and if we restrict the adversaries only to applying a ``product'' attack, then the performance of such protocols is governed by parallel repetition of LSSDs. \edited{Furthermore, the monogamy of entanglement games with product structure have been important to understand. If we restrict the strategies to those with product states, then the problem can be formulated in terms of parallel repetition of LSSD games.} 

\subsection{Our contributions}

As a first simple observation, we show in \cref{the:symmetry} that for symmetric LSSD problems with classical inputs (as depicted in \cref{fig:LSSD_schematic}), there exists an optimal symmetric strategy. In other words, for an LSSD problem defined by a joint distribution $P_\s{XAB}$ such that $P_\s{X}$ is uniform over $\SX$, $P_\s{AB|X} = P_\s{A|X}P_\s{B|X}$, and $P_\s{A|X} = P_\s{B|X}$, there exist optimal classical deterministic strategies for Alice and Bob that are identical.

In \cref{sec:BSC_game} we analyze an example of an LSSD game introduced in~\cite{majenz2021local}, where the referee sends a bit $x$ over a \emph{binary symmetric channel} (BSC), see \cref{fig:BSC}, to Alice and Bob. We use the symmetry observation above to find optimal classical strategies for two and three parallel repetitions of this game in \cref{result w_c and w_ns repetition,result 3 copies NS}, respectively. We also give optimal no-signalling strategies for two and three copies (our results for two copies are depicted in \cref{fig:twoCopies}). Finally, in \cref{sec: arbitrary copies BSC}, we consider the $n$-fold parallel repetition of this game, and argue how the classical strategies relate to (regular) error-correcting codes and the no-signaling strategies relate to list-decoding schemes.

\begin{figure}[ht]
\centering
\subfigure[]{\includegraphics[width=78mm]{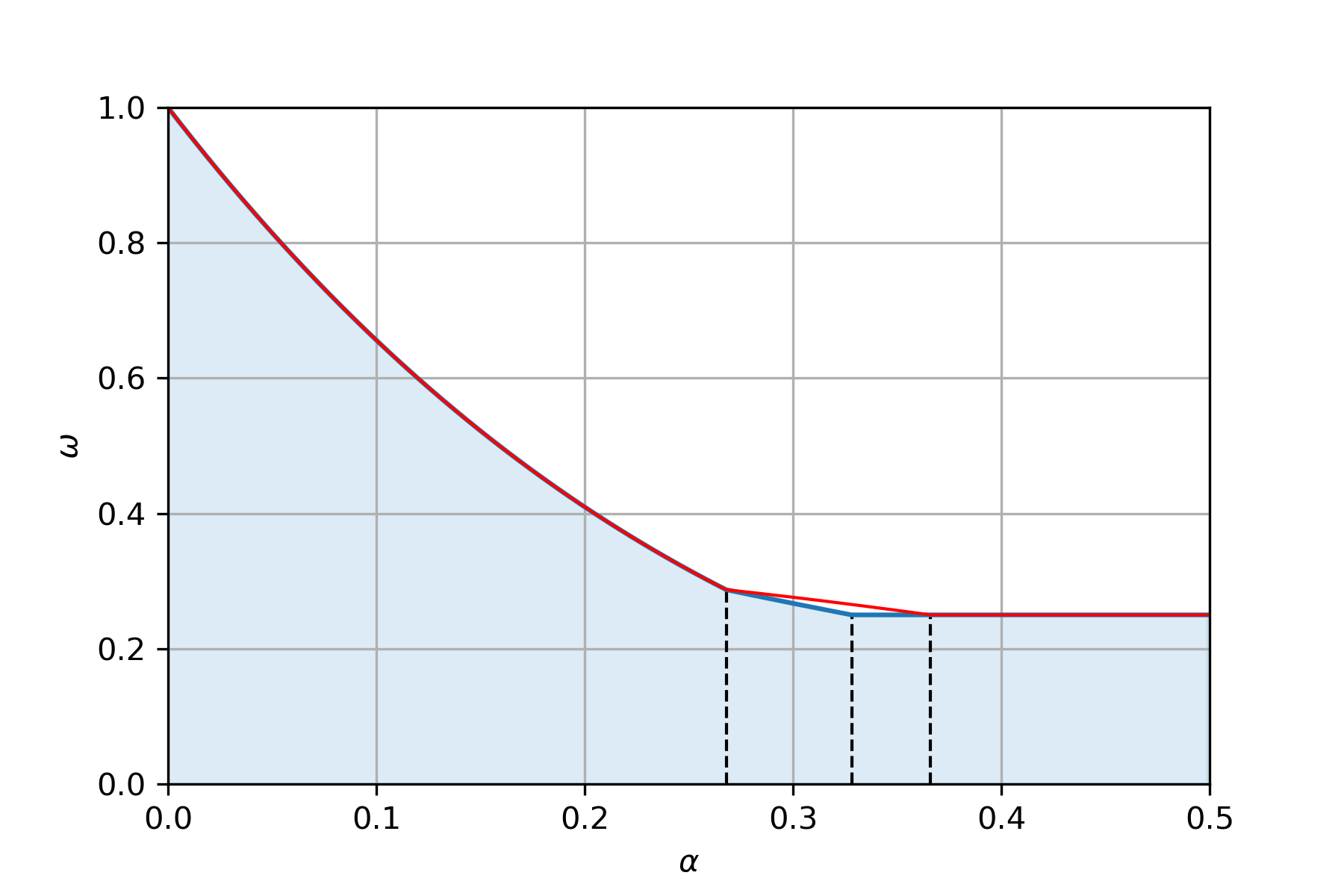}}
\subfigure[]{\includegraphics[width=78mm]{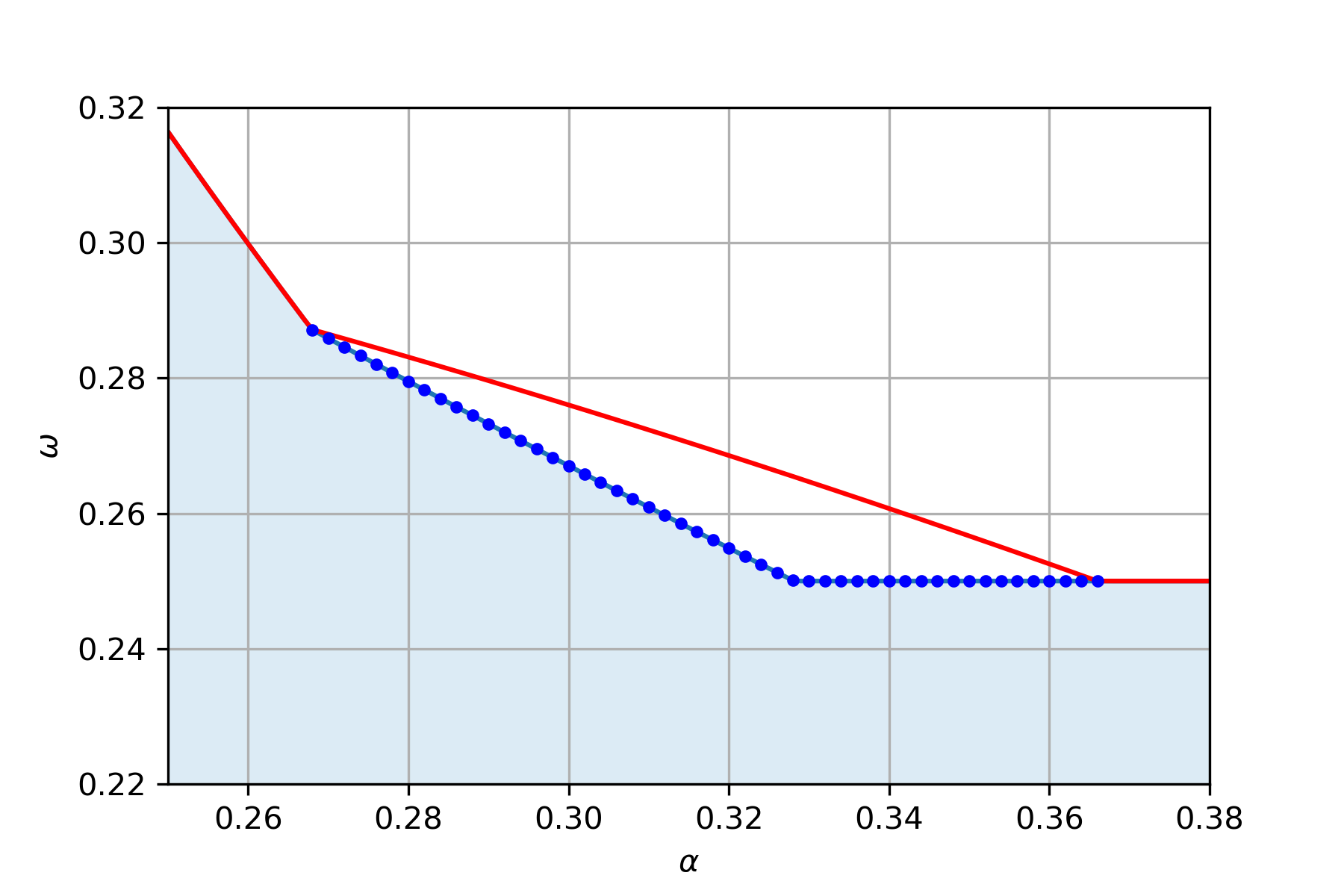}}
\caption{(a) Optimal classical (blue) and no-signalling (red) winning probabilities for the two-fold parallel repetition of the BSC game. The light blue area represents the values below the optimal classical winning probability. (b) Closeup of (a) with an additional numerical upper bound on the optimal quantum winning probability (blue dots) from the level ``$1+NM$'' of the NPA hierarchy for the values of $\alpha$ where the classical and no-signalling values differ. The numerical quantum upper bound is in excellent agreement with the classical value, suggesting its optimality (see \cref{conj:noadvantage}).}
\label{fig:twoCopies}
\end{figure}

In \cref{sec:channel_games} we introduce the notion of channel games, which are an extension of the LSSD problem in \cref{sec:BSC_game}. We  then define classical strategies based on codes and no-signalling strategies based on list-decoding schemes. In \cref{the:classical_exponennt} we provide an expression for the limit of the exponent of the classical winning probability, where we make use of strategies based on codes. Furthermore, we show that no-signalling strategies based on list-decoding schemes achieve the same limit as classical strategies. \edited{As a result, the optimal probability of winning for that class of LSSD games is asymptotically the same for all three types of resources available to the players. This allows one to solve the optimization problem for no-signalling strategies, for which there is an efficient algorithm, to find the asymptotic classical or quantum value which are otherwise computationally expensive to evaluate.}

\edited{As an extension, in \cref{sec:binLSSD} we analyze} three-party LSSD problems with \emph{binary} inputs and outputs. \Cref{lem:no-signal_multiple_parties} extends the two-party characterization from \cite{majenz2021local} for the classical winning probability of binary LSSD games to three parties. The main result of this appendix, \cref{thm:nogap}, shows that no-signalling resources cannot improve the winning probability of the players in this setting. 

\subsection{Open problems}

It would be interesting to examine the settings when there is a gap between the no-signalling and classical winning probabilities in the BSC game. Wherever there is a gap, it is interesting to look for a quantum strategy that also performs better than classical.

In the context of channel games, as introduced in \cref{sec:channel_games}, can we show, like for the BSC, that no-signalling strategies based on list-decoding schemes are asymptotically optimal? Are there more examples of channels for which there is a gap in winning probability between classical and no-signalling strategies in a finite number of parallel repetitions? Can the results be extended to classical-quantum channels where Alice and Bob receive a quantum state? For this last question, we would need to extend the idea of no-signalling to the case where the inputs and outputs can be quantum states.

\cref{sec:channel_games} also gives rise to a new area within information theory: simultaneous decoding. Within this setting, a sender tries to send a message to two receivers using identical channels, and the communication is successful if both receivers decode correctly. We can allow the receivers to share some quantum or no-signalling resources and examine whether this leads to better coding schemes. There are similar settings that have already been researched. In one such setting, the messages sent to the receivers are not necessarily the same, or two different channels are used (like in the book of El Gamal and Kim \cite[Part 2]{el2011network}). In another similar setting, we allow the sender and the receiver to share some entanglement (like in the book by Holevo \cite[Section 9]{holevo2019quantum}). There is even very recent research in a setting with two senders and one receiver that all share a no-signalling box (see the paper by Fawzi and Ferm\'e \cite{Ferme}). 

For the case of multi-player LSSD games with binary inputs and outputs as (see \cref{sec:binLSSD}), it is an open problem whether this result holds for any number of players. However, extending our numerical analysis to a larger number of players requires enumerating over all extrema of the corresponding no-signalling polytope. This polytope quickly grows in the number of vertices, making the analysis infeasible at the moment.

\section{Preliminaries}\label{sec:prelim}

For $n \in \N$, we denote the set $\{0, \dots, n-1\}$ by $[n]$ and the set of all permutations of $[n]$ by $S_n$. We denote by $\delta$ the indicator function, which is $1$ if its argument is true and $0$ otherwise. Throughout, we use binary logarithms and denote them by $\log$ rather than $\log_2$. We denote the bitwise XOR operator on bitstrings by $\oplus$ and the all-zero and all-one bitstrings of length $n$ by $0^n$ and $1^n$, respectively.
Let $X$ be a random variable over a finite set $\SX$. We denote its probability distribution by $P_\s{X}$ where $\s{X}$ is used to label the register that stores the random variable $X$. For any $n \geq 1$, we denote by $P_\s{X}^{\times n} = (P_{\s{X}})^{\times n}$ the product distribution of $n$ copies of $X$ on $\SX^n := \SX\times\dotsb\times\SX$ defined by
\[
    P_{\s{X}}^{\times n}(x^n) := \prod_{i=1}^n P_\s{X}(x_i),
\]
where $x^n = x_1 \dots x_n$ is an element of $\SX^n$. We sometimes omit writing the subscript in $P_\s{X}$, when it is obvious over which set $P$ is a distribution. For $\SA \subset \SX$, we denote by $P_\s{X}(\SA)$ the probability of random variable $X$ taking on a value in $\SA$:
\[
    P_\s{X}(\SA) = \sum_{x \in \SA} P_\s{X}(x).
\]
Lastly, for an arbitrary function $f:\SX \to \SY$, we define $f^{-1}(y) := 
\set{x\in\SX: f(x) = y}$.

\subsection{Quantum information}\label{sec:quantum}
A \emph{quantum state} on $\C^d$ is a $d \times d$ positive semi-definite matrix of unit trace, i.e., $\rho \in \C^{d \times d}$ such that $\rho \succeq 0$ and $\tr \rho = 1$.
We denote the set of all quantum states on $\C^d$ by $\D{\C^d}$.
Operations on quantum states are described by \emph{unitary} matrices, i.e., $U \in \C^{d \times d}$ such that $U\ct U = \Id$ where $\Id$ is the identity matrix.
We denote the set of all unitaries on $\C^d$ by $\U{\C^d}$.

An $n$-outcome \emph{measurement} or POVM on $\C^d$ is a collection of $n$ positive semi-definite $d \times d$ matrices that sum to identity.
We will denote a measurement by $M = \set{M_1, \dotsc, M_n}$ where $M_i \succeq 0$ and $\sum_{i=1}^n M_i = \Id$.
We denote the set of all $n$-outcome measurements on $\C^d$ by $\POVM{\C^d}$ (since the outcome set is always clear from the context, we do not specify it).
If $M_i^2 = M_i$ for all $i = 1, \dotsc, n$, we call the measurement \emph{projective}.
We denote the set of all $n$-outcome projective measurements on $\C^d$ by $\PM{\C^d}$.

\subsection{Linear programming}\label{sec:lin_prog}

Linear programming is a technique for optimizing a linear function over a convex polytope. A polytope is a generalization of a polygon to any number of dimensions. There are two ways of describing a convex polytope: by giving its extreme points (and rays), called the vertex representation, or by linear constraints, called the half-space representation.

The half-space representation of a convex polytope is a collection of (closed) half-spaces, such that their intersection is the convex polytope. A half-space can be described by a linear inequality
\begin{equation}\label{for:half-space}
    a_1 x_1 + \dots + a_n x_n \leq c.
\end{equation}
Using this description, the convex polytope can be represented as a system of linear inequalities, which can be written as a matrix inequality
\[
    Ax \leq d.
\]
Here, $A$ is the matrix containing all coefficients $a_i$ and $d$ the vector containing all constants~$c$, for all inequalities~\eqref{for:half-space} representing the polytope.
Note that we can also include linear equalities, as they can be described by two opposite inequalities.

Given a vertex representation, the corresponding convex polytope is the convex hull of the extreme points. The convex hull of a set of points is the smallest convex set that contains all the points, or simply the set of all convex combinations of the points (i.e., all weighted averages). This representation is especially interesting, since a linear function always has a global maximum in (at least) one of the extreme points of a convex polytope. We make use of this fact in \cref{sec:no_gap_three_player}.

\section{Local simultaneous state discrimination (LSSD)}\label{sec:LSSD}

In this section, we define the local simultaneous state discrimination (LSSD) task, originally introduced in \cite{majenz2021local}. In particular, we discuss strategies with classical, quantum and no-signalling resources for LSSD, and show that the optimal classical success probability can be attained by a symmetric strategy if certain conditions are fulfilled. Here we only consider the case of two players, Alice and Bob, but all definitions can easily be generalized to any number of players.

An LSSD game played by two players and a referee is defined by a classical-quantum-quantum (cqq) state~$\rho_\s{XAB}$, where the referee's register $\s{X}$ is classical while the Alice and Bob's registers $\s{A}$ and $\s{B}$ can generally be quantum. We denote the underlying spaces of $\s{X}$, $\s{A}$ and $\s{B}$ by $\HX = \C^{\SX}$, $\HA = \C^{\SA}$ and $\HB = \C^{\SB}$, respectively, where $\SX$, $\SA$ and $\SB$ are some finite sets.
We can always write the state $\rho_\s{XAB}$ as
\[
    \rho_\s{XAB} = \sum_{x \in \SX} P_\s{X}(x) \, \proj{x}_\s{X} \x \rho_\s{AB}^x,
\]
where $P_\s{X}$ is a probability distribution over $\SX$ and each $\rho^x_\s{AB}$ is a bipartite quantum state on $\HA \x \HB$.
The state $\rho_\s{XAB}$ is known to Alice and Bob, and they try to guess the referee's value $x$ based on their reduced states $\rho_\s{A}$ and $\rho_\s{B}$. We denote their guesses by $x_A$ and $x_B$. In general, Alice and Bob may share some additional resources before the game, but they are not allowed to communicate with each other during the game. They win the game if both guesses are correct: $x_A = x_B = x$.

In most of this paper, we are going to consider the case where $\rho_\s{XAB}$ is entirely classical. Meaning that there exists an orthonormal basis $\set{\ket{a} : a \in \SA}$ of $\HA$ and $\set{\ket{b} : b \in \SB}$ of $\HB$ that are independent of $x \in \SX$, and probability distributions $P_\s{AB}^x$ over $\SA \times \SB$ such that
\[
    \rho_\s{AB}^x = \sum_{\substack{a\in\SA\\b\in\SB}} P_\s{AB}^x(a,b) \, \proj{a}_\s{A} \otimes \proj{b}_\s{B}.
\]

In this case, it is useful to rephrase the problem. Instead of describing the game by a cqq state, we can describe it by a probability distribution $P_\s{XAB}$ on $\SX \times \SA \times \SB$. The referee picks elements $x \in \SX, a\in\SA$ and $b \in \SB$ according to this distribution and gives $a$ and $b$ to Alice and Bob, respectively. Alice and Bob know the distribution $P_\s{XAB}$ and both try to guess the value $x$. Again, they may share some resources, but are not allowed to communicate during the game, and they win if they both guess $x$ correctly. A schematic representation of LSSD is shown in~\cref{fig:LSSD_schematic}.

\begin{figure}
    \centering
    \includegraphics[scale=0.4]{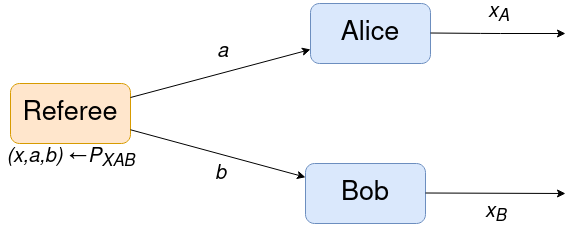}
    \caption{A schematic of the LSSD game. On inputs $a$ and $b$, Alice and Bob make guesses $x_A$ and $x_B$ respectively, and win if $x=x_A=x_B$.}
    \label{fig:LSSD_schematic}
\end{figure}

We now describe different types of strategies based on three different possible shared resources: classical, quantum and no-signalling.
While these additional resources can be of different types, the strategies themselves are in general quantum since the LSSD game is based on a quantum state.

\subsection{Classical resources}\label{sec:Classical}

While strategies for LSSD may in general take advantage of shared randomness, this does not help in increasing the winning probability. Indeed, after a random value is generated, we are left with a deterministic strategy that depends on this value. Thus instead of the original randomized strategy, the players can just use one of the deterministic strategies that achieves the highest winning probability. Hence in the following, we assume that the players do not use shared randomness.

In the quantum case of the LSSD game (meaning that the game is described by a cqq state $\rho_\s{XAB}$), a strategy is completely defined by two measurements $M = \set{M_x : x \in \SX}$ and $N = \set{N_x : x \in \SX}$ on $\HA$ and $\HB$, respectively. Alice and Bob perform these measurements on their subsystems to produce their guesses for $x$. Given the measurements $M$ and $N$, their winning probability is
\[
    \sum_{x\in\SX} P_\s{X}(x) \tr[\rho_\s{AB}^x(M_x \otimes N_x)],
\]
and the optimal winning probability is denoted by
\[
    \omegac (\s{X}|\s{A};\s{B})_\rho := \sup_{\substack{M \in \POVM{\HA}\\N\in \POVM{\HB}}} \sum_{x\in\SX} P_\s{X}(x) \tr[\rho_\s{AB}^x(M_x \otimes N_x)],
\]
where $\POVM{\HA}$ and $\POVM{\HB}$ denote the sets of all measurements on $\HA$ and $\HB$, respectively.

In case $\rho_\s{XAB}$ is purely classical and described by a probability distribution $P_\s{XAB}$, the strategy of Alice and Bob is given by two conditional probability distributions $Q_{\s{X}_A|\s{A}}$ and $Q_{\s{X}_B|\s{B}}$ describing their local behaviour. The winning probability is then given by 
\[
    \sum_{\substack{x\in\SX\\a\in\SA, b\in\SB}} P_\s{XAB}(x, a, b) Q_{\s{X}_A|\s{A}}(x|a) Q_{\s{X}_B|\s{B}}(x|b).
\]
The optimal winning probability can now be obtained by maximizing over all conditional probabilities. However, we can restrict this optimization to maximizing over all deterministic strategies, i.e., strategies that can be described by two functions $f \colon \SA \to \SX$ and $g \colon \SB \to \SX$. Similarly to shared randomness, Alice and Bob can condition any local randomness on the realization that maximizes their probability of winning. Now, the optimal winning probability is given by 
\[
    \omegac(\s{X}|\s{A};\s{B})_P = \max_{f,g} \sum_{\substack{x\in\SX\\a\in\SA, b\in\SB}}P_\s{XAB}(x, a, b) \delta[f(a) = g(b) = x].
\]

We say that a strategy is \emph{symmetric} if Alice and Bob perform the same local strategy, i.e., if $f = g$. In the following theorem, we show that symmetric strategies attain optimal classical values for classical LSSD games (see \cref{sec: proot theorem symmetry} for proof).

\begin{theorem}\label{the:symmetry}
Let $P_\s{XAB}$ be a distribution over $\SX \times \SA \times \SB$, with $\SA = \SB$, satisfying the following:
\begin{enumerate}[(i)]
    \item The marginal distribution $P_\s{X}$ over $\SX$ is uniform.
    \item $P_\s{AB|X} = P_\s{A|X}P_\s{B|X}$.
    \item $P_\s{A|X} = P_\s{B|X}$.
\end{enumerate}
Then the classical LSSD game defined by $P_\s{XAB}$ has an optimal deterministic strategy that is symmetric.
\end{theorem}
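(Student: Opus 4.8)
The plan is to show that, for deterministic strategies, the classical winning probability factorizes over the referee's value $x$ into a product of one quantity depending only on Alice's strategy and one depending only on Bob's, and then to invoke an elementary inequality (AM--GM, equivalently Cauchy--Schwarz) to conclude that one of the two ``symmetrizations'' of an optimal asymmetric pair is itself optimal.

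First I would substitute hypotheses (i)--(iii) into $\omegac(\s{X}|\s{A};\s{B})_P = \max_{f,g}\sum_{x,a,b}P_\s{XAB}(x,a,b)\,\delta[f(a)=g(b)=x]$, with $f,g\colon\SA\to\SX$. For $h\colon\SA\to\SX$ set $\alpha_x(h) := \sum_{a\in h^{-1}(x)} P_\s{A|X}(a\mid x)$. Condition (ii) factors the joint conditional, condition (i) makes the $P_\s{X}(x)$ prefactor the constant $1/\card{\SX}$, and condition (iii) lets us use the \emph{same} map $\alpha$ for both players (with $\SA=\SB$ making the equality $f=g$ meaningful to begin with), giving
\[
    W(f,g) = \frac{1}{\card{\SX}}\sum_{x\in\SX}\alpha_x(f)\,\alpha_x(g),
\]
a symmetric bilinear expression with nonnegative terms.

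Next, for each $x$ the bound $\alpha_x(f)\alpha_x(g)\le\tfrac12(\alpha_x(f)^2+\alpha_x(g)^2)$ yields
\[
    W(f,g)\;\le\;\tfrac12\big(W(f,f)+W(g,g)\big)\;\le\;\max\{W(f,f),W(g,g)\}
\]
(Cauchy--Schwarz would give the sharper $W(f,g)\le\sqrt{W(f,f)\,W(g,g)}$, with the same consequence). Finally, choose an optimal deterministic pair $(f^\star,g^\star)$, which exists since there are finitely many functions $\SA\to\SX$; the displayed bound shows that $(f^\star,f^\star)$ or $(g^\star,g^\star)$ attains winning probability at least $W(f^\star,g^\star)=\omegac(\s{X}|\s{A};\s{B})_P$, and hence is an optimal symmetric deterministic strategy.

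I do not expect a genuine obstacle: the one substantive step is spotting the product-over-$x$ structure forced by (i)--(iii), after which the symmetrization inequality is immediate. The only points needing mild care are the reduction to deterministic strategies (already carried out in the text) and bookkeeping which hypothesis is used where.
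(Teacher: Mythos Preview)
Your proposal is correct and follows essentially the same route as the paper: both rewrite the winning probability as $\frac{1}{|\SX|}\langle q_f,q_g\rangle$ (your $\alpha_x(h)$ is the paper's $q_h(x)$) and then use Cauchy--Schwarz (or, as you note, the equivalent termwise AM--GM) to conclude that one of the symmetric pairs $(f,f)$ or $(g,g)$ does at least as well. The only cosmetic difference is that the paper phrases the inequality globally via $\langle q_f,q_g\rangle^2\le\langle q_f,q_f\rangle\langle q_g,q_g\rangle$ while you sum the pointwise bound, but the argument and its use of hypotheses (i)--(iii) are identical.
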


\subsection{Quantum resources}\label{sec:quantum resources}

In this case, Alice and Bob can share an entangled state prior to receiving their inputs. Let $\Ha = \Hb = \C^d$ be two complex Euclidean spaces of dimension $d$.
Alice and Bob first jointly prepare a quantum state $\sigma_\ab$ on $\Ha \x \Hb$, after which Alice and Bob keep systems $\a$ and $\b$, respectively.
After receiving their inputs, Alice and Bob determine their output by measuring the registers $\Aa$ and $\Bb$ with local measurements $M$ and $N$, respectively (this is the most general strategy because no communication is allowed).

When the local dimensions of the shared entangled state $\sigma_\ab$ are limited to $d$ for both parties, the optimal probability of winning is
\begin{align}
  \pg[d]{q}{\X}{\A;\B}{\rho} \eqdef
  \sup_{\substack{\sigma_\ab \in \D{\C^\da \x \C^\db}}}
  \sup_{\substack{M \in \POVM{\HA \x \C^\da} \\ N \in \POVM{\HB \x \C^\db}}}
  \sum_{x \in \SX} P_\X(x)
  \tr \sof[\big]{\of{\rho_{\AB}^x \x \sigma_\ab} \of{M_x \x N_x}}.
  \label{eq:pd}
\end{align}
When the dimensions of $\a$ and $\b$ are not limited, the optimal winning probability is
\begin{align}
	\pg{q}{\X}{\A;\B}{\rho} \eqdef
  \sup_{d \geq 1} \pg[d]{q}{\X}{\A;\B}{\rho}.
  \label{eq:pq def}
\end{align}
When $\rho_{\XAB}$ is classical and described by a probability distribution $P_{\XAB}$, we can simplify \cref{eq:pd} as follows:
\begin{align}
  \pg[d]{q}{\X}{\A;\B}{P}
 &= \sup_{\sigma_\ab \in \D{\C^\da \x \C^\db}}
    \sup_{\substack{M: \SA \to \POVM{\C^{\da}} \\ N: \SB \to \POVM{\C^{\db}}}}
    \sum_{\substack{x\in \SX \\ a \in \SA , b \in \SB}} P_{\XAB}(x, a, b)
    \tr \sof[\big]{\sigma_\ab \of[\big]{M_x(a) \otimes N_x(b)}}
    \label{eq:quantum value} \\
 &= \sup_{\substack{M: \SA \to \POVM{\C^{\da}} \\ N: \SB \to \POVM{\C^{\db}}}}
    \norm[\bigg]{\sum_{\substack{x\in \SX \\ a \in \SA , b \in \SB}} P_{\XAB}(x,a,b) M_x(a) \x N_x(b)},
    \label{eq:pq for distribution}
\end{align}
where $M$ and $N$
are collections of measurements,
i.e., for every input $a \in \SA$ and $b \in \SB$,
we have that
$M(a) = \set{M_x(a) : x \in \SX}$ and
$N(b) = \set{N_x(b) : x \in \SX}$
are measurements on $\C^d$ with outcomes in $\SX$.

\subsection{No-signalling resources}

We define strategies with no-signaling resources only when $\rho_{\XAB}$ is classical and described by a probability distribution $P_{\XAB}$. Given classical inputs $a \in \SA$ and $b \in \SB$ for Alice and Bob, respectively, they output their estimates $x_A$ and $x_B$ of $x \in \SX$ according to a conditional probability distribution $Q_{\X_A\X_B|\AB}$ on $\SX \times \SX \times \SA \times \SB$ satisfying
\begin{align}
  \forall x_B, a, a', b: \quad
    \sum_{x_A \in \SX} Q_{\X_A\X_B|\AB}(x_A, x_B | a, b)
 &= \sum_{x_A \in \SX} Q_{\X_A\X_B|\AB}(x_A, x_B | a', b), \label{eq:ns1} \\
  \forall x_A, a, b, b': \quad
    \sum_{x_B \in \SX} Q_{\X_A\X_B|\AB}(x_A, x_B | a, b)
 &= \sum_{x_B \in \SX} Q_{\X_A\X_B|\AB}(x_A, x_B | a, b'). \label{eq:ns2}
\end{align}
An optimal no-signaling strategy succeeds with probability
\begin{align}
  \pg{ns}{\X}{\A;\B}{P} \eqdef
  \sup_{Q_{\X_A\X_B|\AB}}
  \sum_{\substack{x\in \SX \\ a \in \SA , b \in \SB}}
  P_{\XAB}(x,a,b)
  Q_{\X_A\X_B|\AB}(x,x|a,b).
  \label{eq:pns def}
\end{align}

The set of classical correlations is a subset of the set of quantum correlations, and the latter is a subset of the set of no-signalling correlations, see \cite{Bell_non_locality_report} for more details. Therefore, we have that
\begin{equation}\label{eq:pgrelations}
    \pg{c}{\X}{\A;\B}{P} \leq \pg{q}{\X}{\A;\B}{P} \leq \pg{ns}{\X}{\A;\B}{P}.
\end{equation}

Notice that the winning probability for a given no-signalling strategy is a linear function in the values $Q_{\s{X}_A\s{X}_B|\s{AB}}(x_A,x_B|a,b)$. This, together with the fact that the set of no-signalling correlations forms a convex polytope, see e.g.~\cite{Bell_non_locality_report}, implies that we can use linear programming to find the optimal no-signalling winning probability of an LSSD game. It also implies that there is always an optimal strategy at one of the extreme points of the no-signalling polytope.

This last fact is what Majenz et al.~used to prove that there exists no probability distribution $P_\s{XAB}$ with binary $x,a$ and $b$, such that the corresponding LSSD game can be won with higher probability using no-signalling strategies \cite[Proposition~3.3]{majenz2021local}. They showed that none of the no-signalling correlations at the extreme points of the no-signalling polytope could ever perform better than the simple classical strategy of outputting the most likely value for $x$. We do something similar in \cref{sec:binLSSD} for the tripartite case. However, it turns out that this argument is not enough in the tripartite case, and we take a numerical approach to finish the argument.

\section{The binary-symmetric-channel game}\label{sec:BSC_game}

A binary symmetric channel (BSC) with error $\alpha\in[0,1/2]$ is a channel with a single bit of input that transmits the bit without error with probability $1-\alpha$ and flips it with probability $\alpha$, see \cref{fig:BSC}. In this section, we study a particular LSSD problem: the binary-symmetric-channel game, originally introduced in \cite[Example 1]{majenz2021local}, where a referee sends a bit to Alice and Bob over two identical and independent binary symmetric channels, both with error probability $\alpha$, see \cref{def:BSCgame} for a formal definition. In \cite{majenz2021local}, an explicit optimal classical strategy for this game is shown and its corresponding optimal winning probability for every $\alpha$ is obtained. Moreover, the authors show that the winning probability cannot be improved by any quantum nor no-signalling strategy. In addition, they show that if two copies of the game are played in parallel for $\alpha=1-\frac{1}{\sqrt{2}}$, there is an explicit optimal classical strategy that performs better than repeating the optimal classical strategy for a single copy of the game twice and, as a consequence, quantum and no-signalling optimal strategies must perform better than repeating the respective optimal strategies for a single copy of the game.

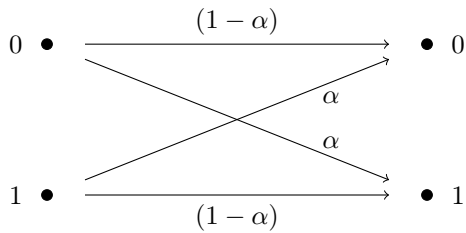
\begin{figure}[ht]
\centering
\begin{tikzpicture}
\filldraw[black] (0,0) circle (2pt) node[anchor=east, xshift=-0.2cm]{1};
\filldraw[black] (0,2) circle (2pt) node[anchor=east ,xshift=-0.2cm]{0};
\filldraw[black] (5,0) circle (2pt) node[anchor=west, xshift=0.2cm]{1};
\filldraw[black] (5,2) circle (2pt) node[anchor=west, xshift=0.2cm]{0};
\draw[->] (0.5, 0) -- (4.5, 0) node[anchor=north, xshift=-2.0cm]{$(1-\alpha)$};
\draw[->] (0.5, 2) -- (4.5,2) node[anchor=south, xshift=-2.0cm]{$(1-\alpha)$};
\draw[->] (0.5, 0.2) -- (4.5, 1.8) node[anchor=west, yshift=-0.5cm, xshift=-1.0cm]{$\alpha$};
\draw[->] (0.5, 1.8) -- (4.5, 0.2) node[anchor=west, yshift=0.5cm, xshift=-1.0cm]{$\alpha$};
\end{tikzpicture}
\caption{Schematic representation of a binary symmetric channel with error probability $\alpha$.}
\label{fig:BSC}
\end{figure}

In \cref{sec:parallelrepetitionBSCgame}, we study the parallel repetition of the BSC game and, for the case of two copies, we provide the optimal classical, quantum and no-signalling values, showing that for most $\alpha$ the three values coincide (and in most of the cases the optimal values are obtained just by repeating the optimal strategy for a single copy of the BSC game). Nevertheless, for certain values of $\alpha$, the classical and quantum values coincide but there is a no-signalling advantage.

In \cref{sec:threecopiesBSC}, we provide the optimal no-signalling winning probabilities for the three-fold parallel repetition of the BSC game. We study the `good' classical and no-signalling strategies for arbitrary number $n$ of parallel rounds of the BSC game in \cref{sec: arbitrary copies BSC}.

\begin{definition}[Example~1 in \cite{majenz2021local}]\label{def:BSCgame}
Let $X,Y$ and $Z$ be independent binary random variables such that $X$ is uniformly random, i.e., $\Pr[X=1] = 1/2$, and $\Pr[Y=1] = \Pr[Z=1] = \alpha$ for $\alpha \in [0,1/2]$. Let $A:=X\oplus Y$ and $B:=X\oplus Z$, and denote the joint probability mass function of $(X,A,B)$ by $P^{\alpha}_{\s{XAB}}$. The \emph{binary-symmetric-channel} (BSC) game is defined as the task of simultaneously guessing $X$ from $A$ and $B$.
\end{definition}

\begin{prop}[Example~1 in \cite{majenz2021local}]
For every $\alpha\in[0,1/2]$, the optimal classical, quantum and no-signalling winning probabilities for the BSC game $P^\alpha$ are equal and given by
\begin{equation}\label{eq w_c=w_q=w_ns BSC}
    \omegac(\s{X|A;B})_{P^{\alpha}} =
    \omegaq(\s{X|A;B})_{P^{\alpha}} =
    \omegans(\s{X|A;B})_{P^{\alpha}} =
    \begin{cases}
        (1 - \alpha)^2 & \textrm{ if } \alpha\in[0,1 - \frac{1}{\sqrt{2}}],
\\
        \frac{1}{2} & \textrm{ if } \alpha\in(1 - \frac{1}{\sqrt{2}},\frac{1}{2}].
    \end{cases}
\end{equation}
\end{prop}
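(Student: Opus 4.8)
The plan is to establish the three claimed equalities in two parts: first pin down the classical value $\omegac(\s{X|A;B})_{P^\alpha}$ exactly, and then invoke \cref{eq:pgrelations} to squeeze the quantum and no-signalling values between this classical value and a matching upper bound. For the classical value, I would use the characterization from \cref{sec:Classical}: a deterministic strategy is a pair of functions $f,g\colon\{0,1\}\to\{0,1\}$, and since \cref{def:BSCgame} satisfies the hypotheses of \cref{the:symmetry} (uniform $X$, conditionally independent and identically distributed $A,B$ given $X$), it suffices to optimize over symmetric strategies $f=g$. There are only four functions on one bit — the two constants and the two of the form $a\mapsto a$ and $a\mapsto a\oplus1$ — and by the symmetry $X\leftrightarrow X\oplus1$ one checks that $f(a)=a$ and $f(a)=a\oplus1$ give the same winning probability, as do the two constants. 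So the classical optimum is $\max\{(1-\alpha)^2,\ 1/2\}$: the strategy ``output your received bit'' wins exactly when both channels transmitted correctly, which happens with probability $(1-\alpha)^2$ by independence, while the constant strategy ``always guess $0$'' wins with probability $\Pr[X=0]=1/2$. Comparing the two, $(1-\alpha)^2\ge1/2 \iff 1-\alpha\ge1/\sqrt2 \iff \alpha\le1-\tfrac1{\sqrt2}$, which yields the stated case distinction for $\omegac$.

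For the upper bound that closes the chain, I would bound $\omegans(\s{X|A;B})_{P^\alpha}$ directly. For $\alpha\le1-\tfrac1{\sqrt2}$ I need to show no no-signalling strategy beats $(1-\alpha)^2$; for $\alpha>1-\tfrac1{\sqrt2}$ I need the bound $1/2$. The cleanest route is to note that any no-signalling strategy in particular gives a correct guess for Alice alone with some marginal conditional distribution $Q_{\X_A|\A}$, so the joint winning probability is at most $\min$ of the two single-player success probabilities; but more useful here is a symmetrization/averaging argument on the no-signalling polytope. Since the winning probability is linear in $Q$ and the no-signalling correlations form a convex polytope, the optimum is attained at a vertex; for the $2\times2\times2$ scenario (binary $x_A,x_B$, binary inputs $a,b$) the vertices are explicitly known — the deterministic boxes together with the PR-box-type boxes — and one evaluates the objective $\sum_{x,a,b}P^\alpha_{\s{XAB}}(x,a,b)\,Q(x,x|a,b)$ on each. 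This is exactly the computation underlying \cite[Proposition~3.3]{majenz2021local}, which I would cite or reproduce: every extreme no-signalling box achieves at most $\max\{(1-\alpha)^2,1/2\}$ against this distribution. That gives $\omegans\le\max\{(1-\alpha)^2,1/2\}=\omegac$, and combined with \cref{eq:pgrelations} all three values collapse to the common value.

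Concretely, the steps in order: (1) verify the hypotheses of \cref{the:symmetry} for $P^\alpha_{\s{XAB}}$ and reduce to symmetric deterministic strategies; (2) enumerate the four one-bit functions, use the bit-flip symmetry to cut to two cases, and compute their winning probabilities as $(1-\alpha)^2$ and $1/2$, giving $\omegac=\max\{(1-\alpha)^2,1/2\}$ with the threshold at $\alpha=1-\tfrac1{\sqrt2}$; (3) enumerate the vertices of the binary no-signalling polytope and evaluate the linear objective at each to get $\omegans\le\max\{(1-\alpha)^2,1/2\}$; (4) chain with $\omegac\le\omegaq\le\omegans$ from \cref{eq:pgrelations} to conclude equality throughout.

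The main obstacle is step (3): one must correctly enumerate all extreme points of the no-signalling polytope in the $2$-input/$2$-output bipartite setting and check the objective on each — this is a finite but slightly fiddly case analysis, and getting the PR-box-type vertices right (and confirming they do not help, because winning requires $x_A=x_B$ and a PR box anticorrelates or correlates the outputs in an input-dependent way that is misaligned with the target $x$) is where the real content lies. Everything else is routine. Since this proposition is attributed to \cite{majenz2021local} (``Example~1''), in the write-up I would keep steps (1)–(2) self-contained and defer step (3) to a citation of \cite[Proposition~3.3 and Example~1]{majenz2021local}, noting that the single-copy no-signalling bound is a special case of their general binary-input result.
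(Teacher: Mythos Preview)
The paper does not give its own proof of this proposition: it is quoted as ``Example~1 in \cite{majenz2021local}'' and followed only by an informal paragraph naming the two optimal classical strategies (identity for small $\alpha$, constant for large $\alpha$). Your proposal is correct and is essentially a fleshed-out version of exactly that informal discussion together with the citation you already identify: steps (1)--(2) reproduce the paragraph after the proposition (using \cref{the:symmetry} from the present paper to justify restricting to symmetric deterministic strategies, which is a mild addition), and step (3) is precisely \cite[Proposition~3.3]{majenz2021local}, which the paper itself invokes in \cref{sec:binLSSD} for the general binary two-party no-gap result. So there is nothing to compare against beyond what you already wrote; your plan matches both the paper's treatment and the original source.
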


The optimal winning probability for $\alpha \in [0,1-1/\sqrt{2}]$ is achieved by the strategy where Alice and Bob output the input they received. The intuition behind this strategy is that for `small' $\alpha$, the bits they receive most likely have not been flipped. Notice that if Alice and Bob were playing this game without having to coordinate their answers, such a strategy would be optimal for all $\alpha$. In fact, the optimal strategy for `high'-noise BSC channels, $\alpha \in (1-1/\sqrt{2}, 1/2]$, is achieved by both parties outputting some previously agreed bit. 

\subsection{Two-fold parallel repetition of the binary-symmetric-channel game} \label{sec:parallelrepetitionBSCgame}

Let $(X',A',B')$ be an independent copy of $(X,A,B)$, as described in \cref{def:BSCgame}. The two-fold parallel repetition of the BSC game consists of simultaneously guessing $(X,X')$ from $(A,A')$ and $(B,B')$.
This game is described by the probability distribution $P^{\alpha}_{\s{XAB}} \otimes P^{\alpha}_{\s{X'A'B'}}$.
According to \cite{majenz2021local}, the optimal classical winning probability for the two-fold parallel repetition of the BSC game for $\alpha=1-\frac{1}{\sqrt{2}}$ is
\begin{equation}
    \frac{1}{4}(1-\alpha^2)^2+\frac{1}{4}(1-\alpha)^4.
\end{equation}
Hence, for $\alpha=1-\frac{1}{\sqrt{2}}$, $\omegac(\s{X X'| AA';BB'})_{P^{\alpha}\otimes P^{\alpha}}>\omegac(\s{X | A;B})_{P^{\alpha}}^2$ and, from \eqref{eq:pgrelations} and \eqref{eq w_c=w_q=w_ns BSC}, we also have 
\begin{equation}
    \omegaq(\s{X X'| AA';BB'})_{P^{\alpha}\otimes P^{\alpha}}>\omegaq(\s{X | A;B})_{P^{\alpha}}^2,
\end{equation}
\begin{equation}
    \omegans(\s{X X'| AA';BB'})_{P^{\alpha}\otimes P^{\alpha}}>\omegans(\s{X | A;B})_{P^{\alpha}}^2.
\end{equation}

\edited{Here we study the full range of $\alpha$ (namely, $\alpha \in [0,1/2]$). In the following Theorem, we provide the optimal classical and no-signalling winning probabilities for the two-fold parallel repetition of the BSC game, graphically represented in \cref{fig:twoCopies}. The Theorem shows that for most values of $\alpha$, the classical and no-signalling optimal success probabilities coincide (and therefore so does the quantum value).}

 \begin{theorem}\label{result w_c and w_ns repetition} Let $(X',A',B')$ be an independent copy of $(X,A,B)$. Let $\alpha_0 < 1$ be the real solution of $(1-\alpha^2)^2+(1-\alpha)^4=1$, i.e.~$\alpha_0\simeq0.32814$, and let $I_1=[0,2-\sqrt{3}]$, $I_2=(2-\sqrt{3},\alpha_0]$, $I_3=(\alpha_0,\frac{\sqrt{3}-1}{2}]$ and $I_4=(\frac{\sqrt{3}-1}{2},\frac{1}{2}]$. Then, for the two-fold parallel repetition of the BSC game, we have
\begin{equation}\label{eq wc repetition BSC}
    \omegac(\s{XX'|AA';BB'})_{P^{\alpha}\otimes P^{\alpha}} =
    \begin{cases}
        (1 - \alpha)^4 & \textrm{ if }\alpha\in I_1, \\
        \frac{1}{4}(1-\alpha^2)^2+\frac{1}{4}(1-\alpha)^4 & \textrm{ if } \alpha\in I_2, \\
        \frac{1}{4} & \textrm{ if } \alpha\in I_3\cup I_4,
    \end{cases}
\end{equation}
and 
\begin{equation}\label{eq wns repetition BSC}
    \omegans(\s{XX'|AA';BB'})_{P^{\alpha}\otimes P^{\alpha}} =
    \begin{cases}
        (1 - \alpha)^4 & \textrm{ if } \alpha\in I_1, \\
        \frac{(1-\alpha^2)^2}{3}& \textrm{ if } \alpha\in I_2 \cup I_3, \\
        \frac{1}{4} &\textrm{ if }  \alpha\in I_4.
    \end{cases}
\end{equation}
\end{theorem}

\begin{proof}
\edited{Since the BSC game fulfills the conditions of \cref{the:symmetry}, a symmetric strategy will provide the optimal classical value. We determine $\omegac$ by considering all deterministic classical strategies.
For each strategy, we compute the winning probability as a function of $\alpha$. Then we obtain the analytical value~\eqref{eq wc repetition BSC} by taking the maximum and applying the \WolframRef{PiecewiseExpand} command. For more details on this derivation, see the \textit{Mathematica} file \file{BSC classical strategy n=2.nb} in \cite{code}.}

\newcommand{\eq}{\mathrm{eq}}

\edited{The optimal no-signalling value can be found via a linear program, i.e., a maximization of a linear function subject to linear constraints. In \textit{Mathematica}, the standard form to represent a linear program that optimizes over $x\in\R^n$ is}
\edited{\begin{equation}\label{eq:primal}
    \begin{split}
        \text{\emph{Primal problem:}}\hspace{2cm}
        \text{minimize: }
            & \langle c,x \rangle = \sum_{i=1}^nc_ix_i \\
        \text{subject to: }
            & A x + b \geq 0, \\
            & A_{\eq} x + b_{\eq} = 0,
    \end{split}
\end{equation}}
\edited{where $x,c\in\R^n$, $A\in\R^{m\times n}$, $b\in\R^m$, $A_{\eq}\in\R^{k\times n}$, $b_{\eq}\in\R^k$ (see \WolframRef{LinearOptimization} for more details).
Its dual, which optimizes over $\lambda\in\R^m$ and $\nu\in\R^k$, is given by
\begin{equation}\label{eq:dual}
    \begin{split}
        \text{\emph{Dual problem:}}\hspace{1cm}
        \text{maximize: }
            & -\of[\big]{\langle b,\lambda\rangle+\langle b_{\eq},\nu\rangle} = -\sum_{i=1}^m b_i\lambda_i-\sum_{i=1}^k b_{\eq,i}\nu_i \\
        \text{subject to: }
            & A\tp\lambda + A_{\eq}\tp \nu - c = 0, \\
            & \lambda \geq 0.
    \end{split}
\end{equation}
}

\edited{A common technique in linear programming is to use one of the two problems to obtain a bound on the other.
In the above formulation, any feasible solution to the dual problem~\eqref{eq:dual} provides a lower bound on the optimal solution of the primal problem~\eqref{eq:primal}.
The optimal value of both problems can be determined by finding feasible primal and dual solutions that have the same value. Then, as a consequence of strong duality, both solutions must be optimal.}

\edited{
Since the original linear program for computing $\omegans$ for the BSC game is quite large, see \cref{eq:ns1,eq:ns2,eq:pns def}, we first simplify it by reducing the number of parameters. We do this by imposing the following symmetries on Alice's and Bob's no-signalling strategy $Q$:\footnote{Here we consider only two parallel repetitions of the BSC game. But the same symmetry reductions can be performed for any number of repetitions (see \cref{result 3 copies NS}).}
\begin{enumerate}
    \item By \cref{lem:permutations} below, there is an optimal no-signalling strategy that is invariant under any permutation of the instances of the game, i.e., $Q\of[\big]{\sigma(x),\sigma(y)|\sigma(a),\sigma(b)} = Q(x,y|a,b)$, for any permutation $\sigma$ of positions within a string.
    \item Since the BSC game is symmetric under exchanging Alice and Bob, we can also exchange Alice's and Bob's strategies, i.e., $Q(y,x|b,a) = Q(x,y|a,b)$.
    \item Since the BSC game is symmetric under negating any subset of input and output bits, we can do the same to Alice's and Bob's strategy, i.e.,
    $Q(x \oplus s, y \oplus s | a \oplus s, b \oplus s) = Q(x,y|a,b)$ for any bit string $s$.
\end{enumerate}
}

\edited{
After performing the above symmetry reductions, we need to find feasible primal and dual solutions of equal value.
These solutions should be $\alpha$-dependent, i.e., work not just for a single value of $\alpha$ but for whole intervals of $\alpha$.
We managed to find such solutions with the help of \textit{Mathematica}, and we have provided them in the format of \cref{eq:primal,eq:dual} in the notebook \file{BSC no-signalling strategy n=2.nb} \cite{code}.
The primal and dual objective values of these solutions match and agree with \cref{eq wns repetition BSC} in each of the intervals $I_1,\dotsc,I_4$ (occasionally we could not obtain a single $\alpha$-dependent solution for a whole interval, in which case we broke it into smaller subintervals).
}

\edited{
Finding these exact $\alpha$-dependent solutions required some numerical tricks.
Indeed, while it is easy to solve the linear program for any particular value of $\alpha$, obtaining continuous $\alpha$-dependent solutions is nontrivial -- it requires interpolating from a small number of solutions, or often even a single solution.
We used a combination of the following numerical tricks to cover all cases in \cref{eq wns repetition BSC} (often obtaining the same solution with different methods):
\begin{itemize}
    \item \emph{Rational multiples of $\pi$}:
    We chose a rational number $r$ so that $\alpha = r \pi$ lies in a given interval~$I_i$.
    Using \WolframRef{LinearOptimization} we then find a symbolic solution that is polynomial in $\pi$.\footnote{This works since on one hand \textit{Mathematica} treats $\pi$ symbolically, while on the other it can compare $\pi$ to any other number by calculating its numerical value to arbitrary accuracy. It is also important that \textit{Mathematica} can manipulate rational numbers symbolically and that $\pi$ is irrational.}
    Substituting back $\pi = \alpha/r$ gives us an exact polynomial $\alpha$-dependent solution.
    This is quite remarkable since we have effectively interpolated a polynomial function from a single irrational point.
    This strategy unfortunately did not work for $3$ repetitions of the game since the linear program was too large.
    \item \emph{Rational solutions}:
    We choose a sequence of equally spaced rational values of $\alpha$ and find exact rational solutions for these values by using \WolframRef{LinearOptimization}.
    We then interpolate between them by using \WolframRef{FindSequenceFunction}.
    This method generally requires some fiddling with the chosen sequence since nearby values of $\alpha$ can lead to completely different and unrelated solutions.
    \item \emph{Algebraic solutions}:
    We choose an algebraic $\alpha$ from the given interval $I_i$ and find a numerical solution for this $\alpha$ to extremely high accuracy ($300$ digits).
    Then we use \WolframRef{RootApproximant} to turn this numerical solution into exact algebraic numbers.
    Reconstructing the minimal polynomial for each of these numbers gives us an interpolated $\alpha$-dependent solution that is polynomial.
    This trick effectively interpolates from a single algebraic point.
\end{itemize}
Checking the primal and dual constraints of the resulting interpolated solution gives us constraints on $\alpha$ that capture the interval in which this solution holds.}

\edited{
It is important to note that, irrespective of how dirty the above numerical methods are, once an exact $\alpha$-dependent solution is found, it can be easily verified that it satisfies all constraints and gives equal primal and dual values, hence implying optimality.
For more details, see \file{BSC no-signalling strategy n=2.nb} in \cite{code}.}
\end{proof}

Notice that, unlike a single copy of the BSC game, the optimal winning probabilities have different behaviors split into three different intervals. We see that 
\begin{equation}
    \omegac(\s{XX'|AA';BB'})_{P^{\alpha}\otimes P^{\alpha}} = \omegans(\s{XX'|AA';BB'})_{P^{\alpha}\otimes P^{\alpha}}=\omegac(\s{X|A;B})_{P^{\alpha}}^2\hspace{0.5cm} \forall \alpha \in I_1 \cup I_4,
\end{equation}
and therefore, due to \eqref{eq:pgrelations}, the quantum value is the same value as the classical. Analogously to the single copy of the BSC game, for `small' $\alpha$, $\alpha\in I_1$, an optimal classical and no-signalling strategy is given by Alice and Bob outputting their input. The intuition behind it is that, due to `low' noise, every bit has low probability of being flipped, $(1-\alpha)$, and thus the winning probability using this strategy is $(1-\alpha)^4$. On the other hand, an optimal classical and no-signalling strategy for a `high' noisy channel,  $\alpha\in I_3\cup I_4$ and $\alpha\in I_4$, respectively, is that both Alice and Bob output some previously agreed bit string. This leads to the conclusion that the corresponding optimal winning probabilities for these values of $\alpha$ can be achieved by just repeating the optimal classical and no-signalling strategies mentioned above for a single copy of the BSC game. Nevertheless, this is not always the case, since
\begin{equation}\label{eq c<ns for I2 U I3}
    \omegac(\s{XX'|AA';BB'})_{P^{\alpha}\otimes P^{\alpha}} < \omegans(\s{XX'|AA';BB'})_{P^{\alpha}\otimes P^{\alpha}} \hspace{0.5cm} \forall\alpha\in I_2\cup I_3.
\end{equation}
An optimal classical strategy for $\alpha\in I_2$ is given by Alice and Bob both outputting $00$ if their input contains a $0$ and outputting $11$, otherwise, which gives an optimal winning probability of $\frac{1}{4}(1-\alpha^2)^2+\frac{1}{4}(1-\alpha)^4$, which was already given in \cite{majenz2021local} for $\alpha=1-\frac{1}{\sqrt{2}}$. An optimal no-signalling strategy for $\alpha\in I_2\cup I_3$ is given by 
\begin{align}\label{two_copy_ns_strat}
    Q_2(x,y|a,b) = \begin{cases}
    \frac{1}{3} & \text{if } (x=y \text{ or } x\oplus b = 11 = y\oplus a) \text{ and } (x \oplus a \not= 11 \not= y\oplus b),\\
    0 & \text{otherwise.}
    \end{cases}
\end{align}
This strategy, see \cref{n_copies_section}, has winning probability $(1-\alpha^2)^2/3$. More specifically, for $\alpha\in I_2$ and for $\alpha\in I_2\cup I_3$ there exist classical and no-signalling strategies, respectively, that perform better than repeating the optimal strategy, i.e.
\begin{equation}\begin{split}
    \omegac(\s{XX'|AA';BB'})_{P^{\alpha}\otimes P^{\alpha}}&>\omegac(\s{X|A;B})_{P^{\alpha}}^2 \textrm{ }\forall \alpha\in I_2,\\
    \omegans(\s{XX'|AA';BB'})_{P^{\alpha}\otimes P^{\alpha}}&>\omegans(\s{X|A;B})_{P^{\alpha}}^2 \textrm{ }\forall \alpha\in I_2\cup I_3.
    \end{split}
\end{equation}

We are left with characterizing the value $\omegaq(\s{XX'|AA';BB'})_{P^{\alpha}\otimes P^{\alpha}} $ for $\alpha\in I_2\cup I_3$. From \eqref{eq c<ns for I2 U I3}, the optimal quantum value for $\alpha\in I_2\cup I_3$ has to be in between the two values.
\edited{Based on strong numerical evidence (see \cref{fig:twoCopies}), in \cref{conj:noadvantage} below we conjecture that there is no quantum advantage with over the optimal classical strategy for any $\alpha$.}

Unlike the set of classical and the set of no-signaling correlations, the set of quantum correlations, $\mathcal{Q}$, has uncountably many extremal points, see e.g.~\cite{Bell_non_locality_report}, making the optimization problem a tough task. In \cite{NPA2008}, Navascués, Pironio and Acín (NPA) introduced an infinite hierarchy of conditions necessarily satisfied by any set of quantum correlations with the property that each of them can be tested using semidefinite programming (SDP) and thus they can be used to exclude non-quantum correlations, see \cref{section Appendix NPA}. The authors introduced a recursive way to construct subsets ${\mathcal{Q}_{\ell}\supset\mathcal{Q}_{\ell+1}\supset\mathcal{Q}}$ for all $\ell\in\N$, each of them can be tested using semidefinite programming and are such that $\cap_{\ell\in\N}\mathcal{Q}_{\ell}=\mathcal{Q}$, i.e.~they converge to the set of quantum correlations. 

By using an intermediate level between the first and the second levels of the NPA hierarchy, the so-called level ``$1+MN$'' (see \cref{section Appendix NPA} for a detailed explanation and \file{NPA\_hierarchy\_BSC\_Game.py} \cite{code} for the numerical code), we find that for $\alpha\in I_2$,  $\omegaq(\s{XX'|AA';BB'})_{P^{\alpha}\otimes P^{\alpha}}$ is upper bounded by $\omegac(\s{XX'|AA';BB'})_{P^{\alpha}\otimes P^{\alpha}}$, see \cref{fig:twoCopies}~(b). Therefore, this shows that the values coincide in the interval $I_2$. The reason to restrict ourselves to the level ``$1+MN$'' is that it requires less computational resources than computing the level $2$ and it already provides tight bounds. \edited{Based on the fact that the numerical upper bounds on the quantum value obtained by solving the semidefinite programs match the (analytical) lower bounds given by the classical values, we state the following conjecture.}

\begin{conjecture}\label{conj:noadvantage}
There is no quantum advantage over the best classical strategy for the two-fold parallel repetition of the BSC game for any value of $\alpha$.
\end{conjecture}

\subsection{Three-fold parallel repetition of the BSC game} \label{sec:threecopiesBSC}

Consider the three-fold parallel repetition of the BSC game. \edited{In the following Theorem, we provide the optimal classical and no-signalling winning probabilities, and we will see that for a vast range of values of $\alpha$ they coincide and therefore so does the quantum. }



\begin{theorem}\label{result 3 copies NS} Let $(X',A',B')$ and $(X'',A'',B'')$ be two independent copies of $(X,A,B)$ and let $\alpha_1$ be the root of the polynomial $2(1-\alpha)^4(1+2\alpha)-1$ taking the value $\alpha_1\simeq0.358121$,  $\alpha_2=\frac{1}{8}(3-\sqrt{7}+\sqrt{2(32-11\sqrt{7})})$ and $\alpha_3=2^{-\frac{2}{3}}(4-\sqrt{14})^{\frac{1}{3}}$. Then, for three copies of the BSC game, 
\begin{equation}\label{eq wc 3 repetition BSC}
    \edited{\omegac(\s{XX'X''|AA'A'';BB'B''})_{P^{\alpha}\otimes P^{\alpha}\otimes P^{\alpha}} =
    \begin{cases}
    (1-\alpha)^6 & \textrm{ if } \alpha\in[0,\frac{1}{4}], \\
        \frac{1}{4}(1 - \alpha)^4(1+2\alpha) & \textrm{ if }\alpha\in(\frac{1}{4},\alpha_1], \\
        \frac{1}{8} & \textrm{ if } \alpha\in(\alpha_1,\frac{1}{2}],
    \end{cases}}
\end{equation}
\begin{equation}\label{eq wns 3 copies}
    \omegans(\s{XX'X''|AA'A'';BB'B''})_{P^{\alpha}\otimes P^{\alpha}\otimes P^{\alpha}} =
    \begin{cases}
        (1 - \alpha)^6 & \textrm{ if } \alpha\in [0,\frac{1}{4}] =: J_1, \\
        \frac{1}{4}(1-\alpha)^4(1+2\alpha)^2& \textrm{ if } \alpha\in (\frac{1}{4},\alpha_2] =: J_2, \\
        \frac{1}{7}(1-\alpha^3)^2 &\textrm{ if }  \alpha\in [\alpha_2,\alpha_3] =: J_3, \\
        \frac{1}{8} &\textrm{ if } \alpha\in[\alpha_3,\frac{1}{2}] =: J_4.
    \end{cases}
\end{equation}
\end{theorem}

\begin{proof}
   \edited{The proof is analogous to the proof of \cref{result w_c and w_ns repetition} for two parallel repetitions. In \file{BSC classical strategy n=3.nb} \cite{code} we perform an optimized search over all symmetric classical strategies leading to \eqref{eq wc 3 repetition BSC}.
   In \file{BSC no-signalling strategy n=3.nb} \cite{code} we provide explicit analytic $\alpha$-dependent solutions for the primal and dual linear programs for the no-signalling value. Both solutions have identical objective value that agrees with \eqref{eq wns 3 copies}.}
\end{proof}

\begin{figure}[ht]
\centering
\subfigure[]{\includegraphics[width=78mm]{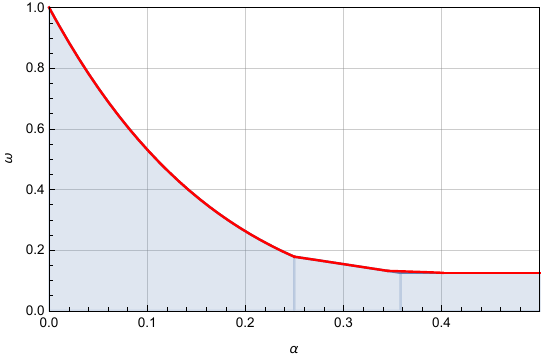}}
\subfigure[]{\includegraphics[width=78mm]{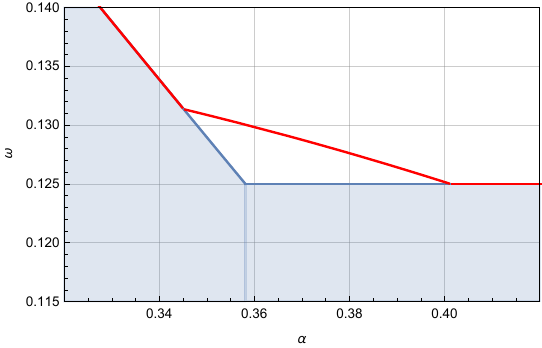}}
\caption{\edited{(a) Optimal classical (blue) and no-signalling (red) winning probabilities for the three-fold parallel repetition of the BSC game. The blue area represents the values below the optimal classical winning probabilities. (b) Zoom in of (a) for the values of $\alpha$ around $0.37$ where the classical and no-signalling values differ.}}
\label{fig:threeCopies}
\end{figure}

See \cref{fig:threeCopies} for a graphical representation of the optimal values from \cref{result 3 copies NS}. For `low' noise, $\alpha\in J_1$, the optimal value is attained by the classical strategy consisting on Alice and Bob outputting the received bit, i.e.~repeating three times the optimal classical strategy for a single copy of the game. On the other side, for `high' noise, $\alpha\in J_4$, the optimal value is attained by the classical strategy where Alice and Bob output a pre-agreed bit, which is also obtained by repeating the optimal strategy for a single copy. Therefore, 
\begin{equation}
    \omegac(\s{XX'X''|AA'A'';BB'B''})_{P^{\alpha}\otimes P^{\alpha}\otimes P^{\alpha}}=\omegans(\s{XX'X''|AA'A'';BB'B''})_{P^{\alpha}\otimes P^{\alpha}\otimes P^{\alpha}}=\omegac(\s{X|A;B})_{P^{\alpha}}^{3}, \textrm{ }\forall \alpha\in J_1\cup J_4.
\end{equation}
For $\alpha\in J_2$, the no-signalling optimal value can be attained by the deterministic strategy consisting on Alice and Bob outputting $111$ if they receive an input with more zeros than ones and outputting $000$ otherwise. See \cref{sec: arbitrary copies BSC} for no-signalling and classical strategies attaining this optimal value. For this interval, the optimal strategy for three copies is better than any combination of optimal two and one copies of the BSC game, i.e.
\begin{equation}
\begin{split}
    &\omegac(\s{XX'X''|AA'A'';BB'B''})_{P^{\alpha}\otimes P^{\alpha}\otimes P^{\alpha}}=\omegans(\s{XX'X''|AA'A'';BB'B''})_{P^{\alpha}\otimes P^{\alpha}\otimes P^{\alpha}}\\&>\omegans(\s{XX'|AA';BB'})_{P^{\alpha}\otimes P^{\alpha}}\omegans(\s{X''|A'';B''})_{P^{\alpha}}\geq \omegac(\s{XX'|AA';BB'})_{P^{\alpha}\otimes P^{\alpha}}\omegac(\s{X''|A'';B''})_{P^{\alpha}}\\&>\omegans(\s{X|A;B})_{P^{\alpha}}^{3}, \textrm{ }\forall\alpha\in J_2.
\end{split}
\end{equation}
For $\alpha\in J_3$ the following no-signalling strategy achieves the optimal value, as we explain in \cref{sec: arbitrary copies BSC}, 
\begin{align}\label{eq:three_copy_ns_strat}
    Q_3(x,y|a,b) =
    \begin{cases}
        \frac{1}{7} & \text{if } (x=y \text{ or } x\oplus b = 111 = y\oplus a) \text{ and } (x \oplus a \not= 111 \not= y\oplus b), \\
        0 & \text{otherwise.}
    \end{cases}
\end{align}

\subsection{Arbitrary parallel repetition}\label{sec: arbitrary copies BSC}

In this section, we will look to find classes of good strategies, both classical and no-signalling, for the $n$-fold parallel repetition of the BSC game.

\subsubsection{Classical strategies}\label{subsec:n_copy_class}

We have already seen some similarities in classical strategies between one, two and three copies of the game. For small $\alpha$, the best strategy is always to output the input (identity strategy). For $\alpha$ close to $1/2$ the best strategy is to output some fixed bitstring regardless of the input (constant strategy). The winning probabilities of these strategies for $n$ copies are $(1-\alpha)^{2n}$ and $2^{-n}$, respectively. For two and three copies, we also found similar strategies ``in between'' the identity and constant strategies. These strategies can also be extended to $n$ copies: outputting $0^n$ if the input contains at least as many zeros as ones and outputting $1^n$ otherwise (majority strategy). For odd $n$, the winning probability of the majority strategy is given by
\begin{equation}\label{for:majority_win_prob}
    \frac{1}{2^{n-1}}
    \left(\sum_{i=0}^{(n-1)/2} \binom{n}{i} \alpha^i (1-\alpha)^{n-i} \right)^2.
\end{equation}

An error-correcting code for the BSC consists of a message set $M$ and two functions $\Enc\colon M \to \{0,1\}^n$ and $\Dec\colon \{0,1\}^n \to M$. The objective of an error-correcting code is to send a message $m$ over the BSC by first encoding it using $\Enc$, sending the result over the BSC and recovering $m$ using $\Dec$, such that the probability of a correct recovery of $m$ is maximized. We will look at error-correcting codes more formally in \cref{sec:channel_games}. The readers already familiar with error-correcting codes will notice that the majority strategy is exactly applying $\Enc \circ \Dec$ from the repetition code to the input: the repetition code encodes messages 0 and 1 to $0^n$ and $1^n$ respectively and decodes by picking the bit that appears the most in the input. This motivates us to look at error-correcting codes to define strategies for $n$ repetitions of the BSC game.

\begin{example}
We consider the (7,4)-Hamming code, perhaps the most famous code for the BSC, introduced by Richard Hamming \cite{hamming1950error}. This code encodes bitstrings $d_1d_2d_3d_4$ of length 4 as bitstrings of length 7 by appending three parity bits: $d_1d_2d_3d_4p_1p_2p_3$. These bits represent the parity (XOR) of three of the original 4 bits (see \cref{fig:HamTikz}).

Decoding works by checking if the parity bits are still correct (still equal to the parity of the corresponding 3 bits). If this is the case, we just remove the last three bits of the received bitstring. Now suppose an error occurred in exactly one bit.
\begin{itemize}
    \item If the error occurred in $d_4$, all the parity bits are incorrect.
    \item If the error occurred in $d_1$, $d_2$ or $d_3$, two of the parity bits are incorrect ($p_1$ and $p_2$ for $d_1$, $p_1$ and $p_3$ for $d_2$ and $p_2$ and $p_3$ for $d_3$).
    \item If the error occurred in one of the parity bits, only that parity bit will be incorrect.
\end{itemize}
Using the above, we can perfectly deduce in which bit the error occurred and correct it accordingly. If more than one error occurs, this method never decodes correctly.

\begin{figure}[ht]
    \centering
    \begin{tikzpicture}
\def\size{\textwidth/5}
    
    \begin{scope}[transparency group]
    \begin{scope}[blend mode=screen]
    \node [circle, fill=blue, opacity=0.5, minimum size = \size] at (\size/2, 0) () {};
    \node [circle, fill=red, opacity=0.5, minimum size = \size] at (0, 0) () {};
    \node [circle, fill=green, opacity=0.5, minimum size = \size] at (\size/4, \size/2) () {};
    \end{scope}
    \end{scope}
    \node at (\size/4,3\size/4) {$p_1$};
    \node at (-\size/6,-\size/6) {$p_2$};
    \node at (4\size/6,-\size/6) {$p_3$};
    \node at (-\size/12,2\size/6) {$d_1$};
    \node at (7\size/12,2\size/6) {$d_2$};
    \node at (\size/4,-\size/6) {$d_3$};
    \node at (\size/4,\size/6) {$d_4$};
\end{tikzpicture}
    \caption{The Hamming code visualized: The bitstring $d_1d_2d_3d_4$ is encoded by appending the parity bits $p_1$, $p_2$ and $p_3$, where each parity bit represents the parity of the three bits inside their circle. A single error in one of the seven bits can be perfectly detected by checking which parity bits are incorrect.}
    \label{fig:HamTikz}
\end{figure}
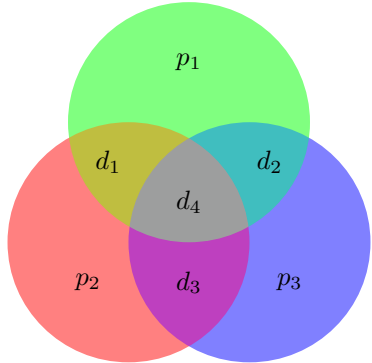

Since the Hamming code corrects exactly 0 or 1 error, we can write the average success probability of this code as
\[
    (1-\alpha)^7 + 7\alpha(1-\alpha)^6.
\]

Now consider the following strategy for 7 copies of the BSC game based on the Hamming code: both players perform the correction part of the Hamming code on their input and output the result (this is the same as decoding and then encoding again). It is obvious that the players win if and only if the initial bitstring $x$ is in the range of the encode function and the decoding of both players was successful. This observation results in the following winning probability:
\[
    \frac{2^4}{2^7} \left((1-\alpha)^7 + 7\alpha(1-\alpha)^6\right)^2.
\]

It turns out that this Hamming code strategy is strictly better for a large range of $\alpha$ than the identity, constant and majority strategy for 7 copies of the game. This confirms the idea that error-correcting codes define good classical strategies. 
\end{example}

\subsubsection{No-signalling strategies}\label{n_copies_section}

For two and three copies of the BSC game, we found the optimal no-signalling strategies~$Q_2$ and $Q_3$ (described in \cref{two_copy_ns_strat,eq:three_copy_ns_strat}). We can extend these no-signalling strategies to $n$ copies as follows:
\begin{equation}\label{for:no_signalling_strat}
    Q(x,y|a,b) =
    \begin{cases}
        \frac{1}{2^n - 1} & \text{if } (x=y \text{ or } x\oplus b = 1^n = y\oplus a) \text{ and } (x \oplus a \not= 1^n \not= y\oplus b), \\
        0 & \text{otherwise.}
    \end{cases}
\end{equation}

There is, however, a more intuitive way to describe this no-signalling correlation.
Alice outputs uniformly at random any bit string, except the negation of her input. Bob outputs the same string as Alice, except when that string happens to be the negation of his input, in which case he outputs the negation of Alice's input (see \cref{fig:list_strat}). Note that the roles of Alice and Bob in this description can be exchanged.

\begin{figure}[ht]
    \centering
    \begin{tikzpicture}
        \def\W{2.0cm}
        \def\H{0.5cm}
        \node at (-\W,\H) {Alice's input: 101};
        \node at ( \W,\H) {Bob's input: 001};
        \foreach \s/\i in {
            000/0,
            001/1,
            011/3,
            100/4,
            101/5,
            111/7} {
            \node (A\s) at (-\W,-\i*\H) {\s};
            \node (B\s) at ( \W,-\i*\H) {\s};
        }
        \node[red] (A010) at (-\W,-2*\H) {010};
        \node      (B010) at ( \W,-2*\H) {010};
        \node      (A110) at (-\W,-6*\H) {110};
        \node[red] (B110) at ( \W,-6*\H) {110};
        \foreach \s in {000,001,011,100,101,111} {
            \draw (A\s) -- (B\s);
        }
        \draw (A110.east) -- (B010.west);
    \end{tikzpicture}
    \caption{An example of a pairing of elements between the output sets of Alice and Bob, for three simultaneous copies. Each line represents a pair, and at the end of the process, one pair is chosen uniformly at random.\label{fig:list_strat}}
\end{figure}
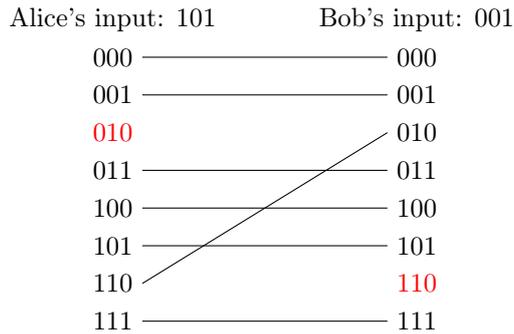

This formulation makes it obvious that we can define a more general class of no-signalling strategies: instead of the output sets consisting of everything apart from the opposite of the input, we can let the output sets consist of all bitstrings within Hamming distance $d$ from the input. We can then pair up the elements from the output sets and say that each of those pairs is output with equal probability. Again, if an element occurs in both lists, we pair it with itself. This description defines a no-signalling strategy, since Alice and Bob always output each of the elements of their output sets with the same probability, regardless of the input of the other. We denote by $Q_n^d$ a no-signalling strategy for $n$ copies of the BSC game defined by Hamming distance $d$. Note that for~$d \in \{1, \dots, n-2\}$ the strategy $Q_n^d$ is not unique, but they all achieve the same winning probability.

Let us find the winning probability of a strategy $Q_n^d$. Suppose that $x$ is the bitstring generated by the referee. The only way the players could output the combination $(x,x)$ is if both $d(x,a)\leq d$ and $d(x,b) \leq d$, in which case it is output with probability~$\left(\sum_{i=0}^d \binom{n}{d}\right)^{-1}$, since the sum is the size of their output sets. The probability that~$a$ lies within distance $d$ from $x$ is $\sum_{i=0}^d \binom{n}{i} \alpha^i(1-\alpha)^{n-i}$. We conclude that the winning probability of $Q_n^d$ is given by
\[
    \frac{1}{\sum_{i=0}^d \binom{n}{i}} \left(\sum_{i=0}^d \binom{n}{i} \alpha^i(1-\alpha)^{n-i}\right)^2.
\]

It turns out that all the optimal winning probabilities for one, two and three simultaneous copies of the BSC game can be achieved by a strategy of the form $Q_n^d$. If we pick~$d=0$ we get exactly the identity strategy. If we pick $d=n$, we get the average of all possible constant strategies (and by linearity, this achieves the same winning probability as a constant strategy). If we pick $d = n-1$, we get exactly the strategy defined in \cref{for:no_signalling_strat}. This strategy achieves winning probability
\begin{align*}
    \frac{1}{2^n - 1} \left(\sum_{i=0}^n \binom{n}{i} \alpha^i(1-\alpha)^{n-i} - \alpha^{n}\right)^2 = \frac{1}{2^n - 1} \left(1 - \alpha^{n}\right)^2.
\end{align*}

We are left with segment two for three copies. The strategy $Q_3^1$ achieves winning probability 
\[
    \frac{1}{4}\left((1-\alpha)^3 + 3\alpha(1-\alpha)^2\right)^2.
\]
This probability is exactly the same winning probability as the majority strategy, which we found to be optimal in this segment. We conclude that all optimal winning probabilities for one, two and three copies of the game can be achieved by a strategy of the form~$Q_n^d$.

It turns out that the class of strategies defined in this section can be described using a list-decoding scheme for the BSC channel. In the next section, we discuss strategies for a general channel $P_\s{A|X}$ based on error-correcting codes and list-decoding schemes.

\section{Channel LSSD games}\label{sec:channel_games}

In the previous section, we constructed an LSSD game based on a BSC. In this section, we extend this construction and define an LSSD game based on an arbitrary channel. For $n$ parallel instances of these games, we discuss classical strategies based on error-correcting codes and no-signalling strategies based on list-decoding schemes. We also investigate the asymptotic behaviour of the optimal winning probability as $n$ approaches infinity. Note that for any non-local game with optimal no-signalling winning probability smaller than 1 (and no promise on the input distribution), the optimal winning probability for $n$ parallel instances of the game exponentially goes to~$0$ \cite[Theorem 16]{buhrman_et_al}. Thus, we will be considering the limit of the exponent of the winning probability normalized by $n$.

We briefly recap basic concepts from information theory that we need in this section including entropic quantities and method of types. For a more in-depth introduction, see \cite[Chapter 2]{thomas2006elements} and \cite[Chapter~2]{csiszar2011information}.

Let $P$ be a probability distribution over $\SX$, and let $X$ be a random variable distributed according to $P$. We define the \emph{entropy} $H(X)_P = H(P)$ of $X$ as 
\[
    H(X)_P := -\sum_{x \in \SX} P(x)\log(P(x)),
\]
with the convention that $P(x)\log(P(x)) = 0$ wherever $P(x) = 0$. We drop subscript $P$ whenever the distribution of $X$ is clear from the context. Let $X$ and $Y$ be two random variables with joint probability distribution $P_\s{XY}$. The \emph{joint entropy} of $X$ and $Y$ is $H(X,Y)_P = H(P_\s{XY})$ and the \emph{conditional entropy} is
\[
    H(X|Y)_P := H(X,Y)_P - H(Y)_{P}.
\]
The \emph{mutual information} of two random variables $X$ and $Y$ is
\[
    I(X;Y)_P := H(X)_P + H(Y)_P - H(X,Y)_P.
\]
For two probability distributions $P$ and $Q$ over $\SX$, the \emph{relative entropy} is
\[
    D(P\|Q) := \sum_{x \in \SX} P(x) \log\left(\frac{P(x)}{Q(x)}\right).
\]  
If $P^1_\s{X|Y}$ and $P^2_\s{X|Y}$ are two conditional distributions over $\SX \times \SY$ and $Q_\s{Y}$ is a distribution over $\SY$, the corresponding \emph{conditional relative entropy} is
\[
    D(P^1_\s{X|Y} \| P^2_\s{X|Y} \mid Q_\s{Y}):= \sum_{y \in \SY} Q_\s{Y}(y) D(P^1_{\s{X}|\s{Y}=y} \| P^2_{\s{X}|\s{Y}=y}).
\]

We next introduce preliminaries on the \emph{method of types} (see \cite[Chapter~2]{csiszar2011information} for further reading). Let $\SX$ be a finite set and $n$ be a positive integer. For a sequence $x^n \in \SX^n$, its \emph{type} is a probability distribution $P$ over $\SX$ defined as
\begin{align}
    P(x) := \frac{\abs{\set{i: x_i = x}}}{n}.
\end{align}
Let $\mathcal{P}_n(\SX)$ denote the set of all types of sequences in $\SX^n$. For a given distribution $P$ over $\SX$, we denote by $\mathcal{T}_P$ all sequences in $\SX^n$ whose type is $P$. If $P_{\s{AX}}$ is a joint probability distribution on $\SA \times \SX$ and $x^n$ is a sequence in $\mathcal{T}_{P_{\s{X}}}$, we let $\mathcal{T}_{P_{\s{A|X}}}(x^n) := \set{a^n: (a^n,x^n) \in \mathcal{T}_{P_{\s{AX}}}}$. We need the following inequalities whose proofs are in \cite{csiszar2011information}:
\begin{align}
    |\mathcal{P}_n(\SX)| &\leq (n+1)^{|\SX|}, \\
    |\mathcal{T}_{P_{\s{X}}}| &\leq 2^{nH(X)_P} \label{eq:type2}, \\
    |\mathcal{T}_{P_{\s{A|X}}}(x^n)| &\geq \frac{2^{nH(A|X)_P}}{(n+1)^{|\SA|}}.\label{eq:type3}
\end{align}

We are now ready to define the main object of this section, channel LSSD games.

\begin{definition}
The channel LSSD game defined by $P_\s{A|X}$ is given by the probability distribution 
\[
    P_\s{XAB} = P_\s{X}P_\s{A|X}P_\s{B|X},
\]
with $P_\s{X}$ the uniform distribution over $\SX$, $\SA = \SB$, and $P_\s{B|X} = P_\s{A|X}$.
\end{definition}

Playing $n$ parallel copies of this channel game is the same as playing the channel game defined by the channel $P^{\times n}_\s{A|X}$, which can be thought of as the referee generating a string $x^n \in \SX^n$ and sending it to Alice and Bob by $n$ independent uses of their channels. Our main result of this section is the following characterization of the exponent of the optimal probability of winning for all three classes of strategies.

\begin{theorem}\label{the:classical_exponennt}
Let $P_\s{A|X}$ be a channel and let $P^{\times n}_{\s{XAB}}$ be the probability distribution defining the channel game corresponding to the channel $P^{\times n}_\s{A|X}$.
We have
\begin{align*}
        \lim_{n\to\infty} \frac{\log(\omegac(\s{X}^n|\s{A}^n;\s{B}^n)_{P^{\times n}})}{n} &=
    \lim_{n\to\infty} \frac{\log(\omegaq(\s{X}^n|\s{A}^n;\s{B}^n)_{P^{\times n}})}{n} = 
    \lim_{n\to\infty} \frac{\log(\omegans(\s{X}^n|\s{A}^n;\s{B}^n)_{P^{\times n}})}{n} \\
    &=\max_{Q_\s{XA}} I(X;A)_Q - 2D(Q_\s{A|X} \| P_\s{A|X}\mid Q_\s{X}) - \log(|\SX|).
\end{align*}
\end{theorem}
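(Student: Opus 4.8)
The plan is to prove the chain of equalities by a sandwich argument: since $\omegac \le \omegaq \le \omegans$ by \eqref{eq:pgrelations}, it suffices to (a) lower bound $\omegac(\s{X}^n|\s{A}^n;\s{B}^n)_{P^{\times n}}$ by a classical strategy built from a good error-correcting code, and (b) upper bound $\omegans(\s{X}^n|\s{A}^n;\s{B}^n)_{P^{\times n}}$ by the same exponent, so that all three limits are squeezed to the claimed value $E := \max_{Q_\s{XA}} I(X;A)_Q - 2D(Q_\s{A|X}\|P_\s{A|X}\mid Q_\s{X}) - \log(|\SX|)$.

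For the lower bound I would fix the maximizing input-type distribution $Q_\s{XA}$ (with marginal $Q_\s{X}$), and consider the following strategy for $n$ copies. Both players share a fixed code $\mathcal{C} \subseteq \SX^n$: take $\mathcal{C}$ to be a near-maximal subset of $\mathcal{T}_{Q_\s{X}}$ such that the ``typical'' decoding sets $\mathcal{T}_{Q_\s{A|X}}(c)$ are essentially disjoint across codewords $c \in \mathcal{C}$; by the standard packing/random-coding argument together with \eqref{eq:type2} and \eqref{eq:type3}, one can take $|\mathcal{C}| \gtrsim 2^{n(I(X;A)_Q - o(1))}$. Each player, on receiving $a^n$, checks whether $a^n$ lies in $\mathcal{T}_{Q_\s{A|X}}(c)$ for a (unique) codeword $c$, and if so outputs $c$; otherwise outputs an arbitrary fixed string. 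The game is won when the referee's string $x^n$ happens to be a codeword (probability $|\mathcal{C}|/|\SX|^n$, using that $P_\s{X}$ is uniform) \emph{and} both independent channel outputs $a^n, b^n$ land in $\mathcal{T}_{Q_\s{A|X}}(x^n)$. The probability that a single channel output $a^n \sim P^{\times n}_\s{A|X}(\cdot|x^n)$ lands in $\mathcal{T}_{Q_\s{A|X}}(x^n)$ is, by a type computation, $2^{-n(D(Q_\s{A|X}\|P_\s{A|X}\mid Q_\s{X}) + o(1))}$; squaring for the two independent copies and multiplying by $|\mathcal{C}|/|\SX|^n$ gives winning probability $2^{n(E - o(1))}$, hence $\liminf_n \frac{1}{n}\log \omegac \ge E$.

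For the upper bound on $\omegans$ I would partition the event that a no-signalling strategy wins $n$ copies according to the joint type $Q_\s{XA}$ of $(x^n, a^n)$ and the type of $(x^n, b^n)$; since there are only $\poly(n)$ types, it suffices to bound the contribution of each type class and lose only a polynomial factor. Fixing a type $Q_\s{XA}$, the key is a no-signalling ``no-cloning''-style bound: for the event to contribute, the referee's string $x^n$ must be guessed correctly by \emph{both} players, but a no-signalling box cannot let both players reliably identify $x^n$ from independent noisy copies when the mutual information $I(X;A)_Q$ is too small relative to the uniform prior on $\SX^n$. Concretely, I expect to bound the per-type winning probability by the product of: the prior weight $2^{-n\log|\SX|}$ available to ``hit'' the correct $x^n$, a factor $2^{nI(X;A)_Q}$ counting how much the type constraint narrows down $x^n$ given one player's view, and \emph{two} factors $2^{-nD(Q_\s{A|X}\|P_\s{A|X}\mid Q_\s{X})}$ for the probability that each of the two independent channel realizations is of the correct conditional type --- mirroring the lower bound exactly. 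Summing over the $\poly(n)$ types and taking $\frac1n\log$ then yields $\limsup_n \frac1n \log\omegans \le E$.

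The main obstacle is the no-signalling upper bound: unlike the classical case, we cannot simply reason about deterministic functions, and the no-signalling constraints \eqref{eq:ns1}–\eqref{eq:ns2} must be used carefully to prevent the box from correlating the two players' outputs with the hidden $x^n$ beyond what the marginal channel statistics allow. I anticipate the cleanest route is to follow the linear-programming / extreme-point structure noted after \eqref{eq:pgrelations}, or alternatively to reduce to a single-copy no-signalling bound and then invoke a parallel-repetition-type estimate for no-signalling value with a non-uniform hitting probability; care is needed because the two channels are independent but the no-signalling resource is shared, so the argument must decouple the channel randomness from the box. Everything else --- the type-counting, the relative-entropy computation for the probability of a conditional type class, and the packing bound for the code --- is routine given \eqref{eq:type2}, \eqref{eq:type3}, and $|\mathcal{P}_n(\SX)| \le (n+1)^{|\SX|}$.
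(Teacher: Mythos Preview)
Your achievability argument is essentially the paper's: both players apply $f=\Enc\circ\Dec$ for a code of rate $\approx I(X;A)_Q$ and minimum success probability $\approx 2^{-nD(Q_{\s{A|X}}\|P_{\s{A|X}}\mid Q_\s{X})}$ (the paper's \cref{lem:achievable_code}), winning when $x^n$ is a codeword and both independent channel outputs decode to it. That part is fine.

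The converse, however, has a genuine gap. You correctly begin by partitioning according to the joint types of $(x^n,a^n)$ and $(x^n,b^n)$, but the heuristic ``multiply $2^{-n\log|\SX|}$, $2^{nI(X;A)_Q}$, and two factors $2^{-nD}$'' is not backed by any inequality that uses the no-signalling constraints, and neither of the two routes you propose will close it. Enumerating extreme points of the no-signalling polytope is hopeless here because the polytope depends on $n$ and its vertex set grows super-exponentially. Invoking a single-copy no-signalling value together with a parallel-repetition theorem would yield at best the exponent $\log\omegans(\s{X}|\s{A};\s{B})_P$, which is the \emph{wrong} quantity: the whole content of the theorem is that the true exponent $E$ can be strictly larger (cf.\ the BSC game, where $\omegans$ for one copy equals $\omegac$ but the $n$-copy exponent is strictly above $n\log\omegac$ for intermediate $\alpha$).

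The paper's converse goes through a different mechanism: a reduction to \emph{list decoding}. One first symmetrizes the optimal no-signalling strategy under $S_n$ (\cref{lem:permutations}), so that the marginal $Q(x^n|a^n)$, which is well-defined by \eqref{eq:ns1}--\eqref{eq:ns2}, depends only on the joint type of $(x^n,a^n)$. Fixing the dominant type and forcing each player's output into the list $C_{a^n}:=\{x^n:(x^n,a^n)\in\mathcal{T}_{Q_{\s{XA}}}\}$ makes $Q(x^n|a^n)$ uniform over $C_{a^n}$, of size $L$. The bound $Q(x^n,x^n|a^n,b^n)\le Q(x^n|a^n)=1/L$ then gives
\[
\omegans \;\lesssim\; \frac{1}{L\,|\SX|^n}\sum_{x^n} q(x^n)^2,
\qquad q(x^n):=P^{\times n}_{\s{A|X}}\bigl(\{a^n:C_{a^n}\ni x^n\}\,\big|\,x^n\bigr),
\]
which is precisely a list-decoding success probability with identity encoder and list size $L$. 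A separate list-decoding strong converse (\cref{lem:list-decoding_upperbound}) then produces the term $I(X;A)_Q$ (as the trade-off between rate, list size, and success exponent), while the second factor $2^{-nD}$ arises from the squaring $q(x^n)^2$, not from Bob's channel directly. This list-decoding step is the missing idea in your sketch; without it, the no-signalling constraints alone do not force the $2^{nI(X;A)_Q}$ factor you are hoping for.
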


Note that to prove the theorem it is enough to prove the following two lemmas because of \cref{eq:pgrelations}.

\begin{lemma}[Achievability]
\label{lm:exponent-achv}
We have
\begin{align}
    \liminf_{n\to \infty}\frac{\log(\omegac(\s{X}^n|\s{A}^n;\s{B}^n)_{P^{\times n}})}{n} \geq \max_{Q_\s{XA}} I(X;A)_Q - 2D(Q_\s{A|X} \| P_\s{A|X}\mid Q_\s{X}) - \log(|\SX|).
\end{align}
\end{lemma}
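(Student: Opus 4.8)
The plan is to exhibit an explicit classical strategy based on a good error-correcting code for the channel $P_{\s{A|X}}$ and show that its winning probability achieves the claimed exponent. First I would fix an arbitrary target distribution $Q_{\s{XA}}$ (with $Q_{\s{X}}$ not necessarily uniform) achieving the maximum on the right-hand side; by a standard continuity/density argument it suffices to handle types, so I may assume $Q_{\s{X}} \in \mathcal{P}_n(\SX)$ and $Q_{\s{A|X}}$ is such that $\mathcal{T}_{Q_{\s{A|X}}}(x^n)$ is nonempty for $x^n \in \mathcal{T}_{Q_{\s{X}}}$. The code will be a maximal packing: pick a set $\mathcal{C} \subseteq \mathcal{T}_{Q_{\s{X}}}$ of codewords whose conditional-type ``decoding balls'' $\mathcal{T}_{Q_{\s{A|X}}}(x^n)$ are disjoint, chosen greedily so that $\abs{\mathcal{C}} \gtrsim \abs{\mathcal{T}_{Q_{\s{X}}}} / \abs{\mathcal{T}_{Q_{\s{A|X}}}(x^n)} \geq 2^{n(H(X)_Q - H(A|X)_Q)} / \poly(n) = 2^{n I(X;A)_Q}/\poly(n)$, using \cref{eq:type2} and \cref{eq:type3}. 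Alice and Bob both use the symmetric deterministic strategy: on input $a^n$, if $a^n$ lies in the (unique, by disjointness) ball $\mathcal{T}_{Q_{\s{A|X}}}(x^n)$ of some codeword $x^n \in \mathcal{C}$, output that $x^n$; otherwise output an arbitrary fixed string.

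Next I would lower-bound the winning probability. The referee's $x^n$ is uniform over $\SX^n$, so with probability $\abs{\mathcal{C}}/\abs{\SX}^n \geq 2^{n I(X;A)_Q}/(\poly(n)\cdot\abs{\SX}^n)$ it is a codeword. Conditioned on $x^n \in \mathcal{C}$, Alice receives $a^n$ drawn from $P^{\times n}_{\s{A|X}}(\cdot\,|x^n)$ and she decodes correctly precisely when $a^n$ lands in $\mathcal{T}_{Q_{\s{A|X}}}(x^n)$; the probability of this is at least the probability that $a^n$ has the exact conditional type $Q_{\s{A|X}}$ given $x^n$, which by the method of types is at least $2^{-n D(Q_{\s{A|X}}\|P_{\s{A|X}}\,\mid\,Q_{\s{X}})}/\poly(n)$. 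Since Alice and Bob decode from independent channel outputs and use the same decoder, the joint success probability (both correct) is the square of this, i.e.\ at least $2^{-2n D(Q_{\s{A|X}}\|P_{\s{A|X}}\,\mid\,Q_{\s{X}})}/\poly(n)$. Multiplying the two contributions gives winning probability at least
\[
    \frac{1}{\poly(n)}\cdot 2^{n I(X;A)_Q}\cdot \abs{\SX}^{-n}\cdot 2^{-2n D(Q_{\s{A|X}}\|P_{\s{A|X}}\,\mid\,Q_{\s{X}})},
\]
so taking $\log$, dividing by $n$, and letting $n\to\infty$ kills the $\poly(n)$ factor and yields the bound $I(X;A)_Q - 2 D(Q_{\s{A|X}}\|P_{\s{A|X}}\mid Q_{\s{X}}) - \log\abs{\SX}$ for this particular $Q$; maximizing over $Q$ finishes the proof.

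There is a subtlety to handle: the maximizing $Q_{\s{XA}}$ is over \emph{all} joint distributions, but the construction naturally produces types, and $n I(X;A)_Q$, $n D(\cdot\|\cdot)$ and $n\log\abs{\SX}$ need to be tracked with the correct $\poly(n)$ slack — this is routine but must be done carefully, e.g.\ by first restricting the maximum to rational distributions with denominator $n$ and invoking uniform continuity of the objective in $Q$ on the probability simplex (away from the boundary where $D$ blows up, which only helps us since those $Q$ are suboptimal). The main obstacle, conceptually, is the packing/greedy argument: one must argue that a set of codewords of size roughly $\abs{\mathcal{T}_{Q_{\s{X}}}}/\abs{\mathcal{T}_{Q_{\s{A|X}}}(x^n)}$ with pairwise-disjoint conditional-type balls exists. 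This follows because each ball has the same size (it depends only on the type $Q_{\s{X}}$ of the center, by symmetry of the type class), so a maximal disjoint family covers $\mathcal{T}_{Q_{\s{X}}}$ up to the ball size — but care is needed since a ball $\mathcal{T}_{Q_{\s{A|X}}}(x^n)$ lives in $\SA^n$ while centers live in $\SX^n$; the right statement is that the union of balls over a maximal family must contain, for its ``shadow'' back in $\SX^n$, all of $\mathcal{T}_{Q_{\s{X}}}$, giving the counting bound. Alternatively, and more cleanly, I would avoid disjointness and instead use a random code with expurgation, bounding the expected number of decoding collisions — whichever is less painful. Everything else is bookkeeping with the standard type inequalities already quoted in the excerpt.
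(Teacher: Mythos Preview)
Your overall plan---build a code of rate $\approx I(X;A)_Q$ and per-codeword success probability $\approx 2^{-nD(Q_{\s{A|X}}\|P_{\s{A|X}}\mid Q_{\s{X}})}$, have both players apply $\Enc\circ\Dec$, and lower-bound the winning probability by $\frac{|\mathcal{C}|}{|\SX|^n}\alpha^2$---is exactly the paper's approach. The winning-probability calculation and the passage to the limit are fine.

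The greedy disjoint-packing construction, however, does not work as written. First, the displayed identity $H(X)_Q - H(A|X)_Q = I(X;A)_Q$ is simply false: $I(X;A)=H(A)-H(A|X)$, not $H(X)-H(A|X)$. Second, and relatedly, the ratio $|\mathcal{T}_{Q_{\s X}}|/|\mathcal{T}_{Q_{\s{A|X}}}(x^n)|$ is dimensionally confused (centers live in $\SX^n$, balls in $\SA^n$), and the ``shadow'' heuristic you sketch does not recover the correct count: a maximal family of disjoint conditional-type balls in $\mathcal{T}_{Q_{\s A}}$ is not guaranteed to have $2^{nI(X;A)_Q}$ members, because maximality only tells you that any \emph{new} ball meets the union, and bounding how many centers can have balls meeting a fixed ball costs you an extra $H(X|A)$ in the exponent.

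Your stated fallback---random coding with expurgation, allowing small overlap rather than exact disjointness---is precisely the right fix, and it is exactly what the paper does: it invokes a separate lemma (proved via Csisz\'ar--K\"orner's packing lemma) producing an $(n,2^{n(I(X;A)_Q-\delta)},2^{-nD(Q_{\s{A|X}}\|P_{\s{A|X}}\mid Q_{\s X})}/\poly(n))$-code, and then plugs it into the strategy you describe. So once you abandon the greedy packing and commit to the random-coding route, your proposal coincides with the paper's proof.
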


\begin{lemma}[Converse]
\label{lm:exponent-converse}
We have
\begin{align}
    \limsup_{n\to \infty}\frac{\log(\omegans(\s{X}^n|\s{A}^n;\s{B}^n)_{P^{\times n}})}{n} \leq \max_{Q_\s{XA}} I(X;A)_Q - 2D(Q_\s{A|X} \| P_\s{A|X}\mid Q_\s{X}) - \log(|\SX|).
\end{align}
\end{lemma}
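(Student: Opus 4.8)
The plan is to discard the correlations between Alice and Bob, keeping only their individual output distributions, and then to estimate the resulting quantity with the method of types.

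\emph{Step 1 (reduction to single-party decoders).} Given a no-signalling strategy $Q$ for $n$ copies of the channel game, write $Q(x^n,y^n\mid a^n,b^n)$ for the probability that Alice outputs $x^n$ and Bob outputs $y^n$ on inputs $a^n,b^n$. The no-signalling constraints~\eqref{eq:ns1}--\eqref{eq:ns2} make $q_A(x^n\mid a^n):=\sum_{y^n}Q(x^n,y^n\mid a^n,b^n)$ independent of $b^n$ and $q_B(x^n\mid b^n):=\sum_{y^n}Q(y^n,x^n\mid a^n,b^n)$ independent of $a^n$, and both are probability distributions over $\SX^n$ for every input. Using $Q(x^n,x^n\mid a^n,b^n)\le\min\{q_A(x^n\mid a^n),q_B(x^n\mid b^n)\}\le\sqrt{q_A(x^n\mid a^n)\,q_B(x^n\mid b^n)}$ together with $P^{\times n}_{\s{XAB}}(x^n,a^n,b^n)=|\SX|^{-n}P^{\times n}_{\s{A|X}}(a^n\mid x^n)P^{\times n}_{\s{A|X}}(b^n\mid x^n)$, the winning probability factorises over the players:
\[
    \omegans(\s{X}^n|\s{A}^n;\s{B}^n)_{P^{\times n}}\;\le\;\frac{1}{|\SX|^n}\sum_{x^n\in\SX^n}f(x^n)\,g(x^n),\qquad f(x^n):=\sum_{a^n}P^{\times n}_{\s{A|X}}(a^n\mid x^n)\sqrt{q_A(x^n\mid a^n)},
\]
with $g$ defined analogously from $q_B$. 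Cauchy--Schwarz over $x^n$ then reduces the task to upper bounding $S:=\sum_{x^n}f(x^n)^2$, since $\sum_{x^n}g(x^n)^2$ obeys the same bound (the channels $P_{\s{A|X}}$ and $P_{\s{B|X}}$ coincide).

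\emph{Step 2 (type counting).} Next I would group $x^n$ by its type $Q_\s{X}$ and, for each fixed $x^n$, group $a^n$ by its conditional type $V_{\s{A|X}}$ given $x^n$, using that on the shell $\mathcal{T}_{V_{\s{A|X}}}(x^n)$ one has $P^{\times n}_{\s{A|X}}(a^n\mid x^n)=2^{-n(H(A|X)_{Q_\s{X}\times V}+D(V_{\s{A|X}}\|P_{\s{A|X}}\mid Q_\s{X}))}$ and $|\mathcal{T}_{V_{\s{A|X}}}(x^n)|\le 2^{nH(A|X)_{Q_\s{X}\times V}}$. After a Cauchy--Schwarz inside each shell, $\sum_{a^n\in\mathcal{T}_V(x^n)}\sqrt{q_A(x^n\mid a^n)}\le|\mathcal{T}_V(x^n)|^{1/2}(\sum_{a^n\in\mathcal{T}_V(x^n)}q_A(x^n\mid a^n))^{1/2}$, squaring $f(x^n)$, absorbing the $\poly(n)$ cross terms over the (polynomially many) conditional types, and summing over $x^n\in\mathcal{T}_{Q_\s{X}}$, one arrives at
\[
    \sum_{x^n\in\mathcal{T}_{Q_\s{X}}}f(x^n)^2\;\le\;\poly(n)\sum_{V_{\s{A|X}}}2^{-n(H(A|X)_{Q_\s{X}\times V}+2D(V_{\s{A|X}}\|P_{\s{A|X}}\mid Q_\s{X}))}\sum_{(x^n,a^n)\in\mathcal{T}_{Q_\s{X}\times V}}q_A(x^n\mid a^n).
\]
The crucial estimate is the rightmost sum: regrouping it by $a^n$ and discarding the joint-type restriction on $x^n$, it is at most $\sum_{a^n\in\mathcal{T}_{Q_\s{A}}}\sum_{x^n}q_A(x^n\mid a^n)=|\mathcal{T}_{Q_\s{A}}|\le 2^{nH(A)_{Q_\s{A}}}$, where $Q_\s{A}$ is the $\s{A}$-marginal of $Q_\s{X}\times V$. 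Since $H(A)_{Q_\s{A}}-H(A|X)_{Q_\s{X}\times V}=I(X;A)_{Q_\s{X}\times V}$ and there are only $\poly(n)$ relevant types, summing over $Q_\s{X}$ and $V_{\s{A|X}}$ gives
\[
    S\;\le\;\poly(n)\,\max_{Q_\s{XA}}2^{n\left(I(X;A)_Q-2D(Q_\s{A|X}\|P_\s{A|X}\mid Q_\s{X})\right)},
\]
the maximum over types being bounded by the maximum over all distributions $Q_\s{XA}$.

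\emph{Step 3 (conclusion) and the hard part.} Feeding this into Step~1 gives $\omegans(\s{X}^n|\s{A}^n;\s{B}^n)_{P^{\times n}}\le\poly(n)\,2^{n\max_{Q_\s{XA}}(I(X;A)_Q-2D(Q_\s{A|X}\|P_\s{A|X}\mid Q_\s{X})-\log(|\SX|))}$, and taking $\tfrac1n\log$ and $\limsup_{n\to\infty}$ removes the polynomial prefactor and produces the statement; combined with \cref{lm:exponent-achv} and~\eqref{eq:pgrelations}, this also finishes the proof of \cref{the:classical_exponennt}. The delicate point I expect to be the main obstacle is getting the coefficient $2$ in front of the conditional relative entropy: a cruder Cauchy--Schwarz that pulls the channel probabilities out of $f(x^n)$ before exploiting any structure of $q_A$ yields only a single $D$, and is not even tight for the identity strategy on a BSC. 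The factor $2$ survives precisely because one organises the double sum over $(x^n,a^n)$ by joint types first and only then invokes the normalisation $\sum_{x^n}q_A(x^n\mid a^n)=1$ within a single type shell; the rest is routine bookkeeping of polynomial type-counting factors, which are harmless in the exponent.
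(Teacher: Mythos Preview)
Your argument is correct and genuinely different from the paper's. The paper first symmetrises to obtain a permutation-invariant optimal no-signalling strategy, then decomposes the winning probability by the joint types of $(x^n,a^n)$ and $(x^n,b^n)$, picks the dominant type pair, and builds from it a modified strategy whose marginals are uniform on conditional type shells $C_{a^n}$; this reduces the problem to a list-decoding scenario with list size $L=|C_{a^n}|$, and a separately proved list-decoding converse (\cref{lem:list-decoding_upperbound}) together with a dyadic level-set partition of $q(x^n)$ finishes the bound. By contrast, you never touch permutation symmetry, never invoke a list-decoding converse, and never need a level-set decomposition: the inequality $Q(x^n,x^n\mid a^n,b^n)\le\sqrt{q_A\,q_B}$ decouples the two players, and two applications of Cauchy--Schwarz (once over $x^n$ to separate $f$ and $g$, once inside each conditional type shell) plus the single counting estimate $\sum_{(x^n,a^n)\in\mathcal{T}_{Q_{\s{XA}}}}q_A(x^n\mid a^n)\le|\mathcal{T}_{Q_\s{A}}|$ already produce the exponent $I(X;A)_Q-2D(Q_{\s{A|X}}\|P_{\s{A|X}}\mid Q_\s{X})$ with the correct coefficient~$2$. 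Your route is therefore shorter and more self-contained; what the paper's detour buys is an explicit operational link between optimal no-signalling strategies and list decoding (already visible in the finite-$n$ strategies $Q_n^d$ of \cref{n_copies_section}), which your purely analytic argument does not make apparent.
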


Our proof for these two lemmas is based on tools from information theory that we introduce in \cref{sec:tools-coding}.
We prove the first lemma in \cref{sec:achiev-exponent} by constructing a classical LSSD strategy from a code for the corresponding channel, and then by choosing an appropriate sequence of codes that optimize the winning probability. We prove the second lemma in \cref{sec:conv-exponent} by first relating the winning probability of an arbitrary no-signalling strategy to a list-decoding code, and then using a converse for list-decoding codes.

\subsection{Tools from information theory}\label{sec:tools-coding}

We recall here basic definitions concerning error-correction codes.  A code for $n$ uses of channel $ P_{\s{A|X}}$ operates as follows. The sender has a message set $M$ of possible messages. He picks one message $m \in M$  to send and encodes $m$ as a codeword $x^n$ of $\SX^n$, using a function $\Enc \colon M \to \SX^n$. Next, he transmits each of the symbols $x_i$ of this codeword to the receiver by consecutive uses of the channel; the receiver receives an $a^n$ in $\SA^n$ and decodes it to a message $m'$ using a function $\Dec \colon \SA^n \to M$. The communication was successful if $m = m'$. 

The minimum success probability of a code is given by 
\[
    \min_{m \in M} P^{\times n}_\s{A|X}(\Dec^{-1}(m)|\Enc(m)),
\]
and the rate of a code is $\frac{1}{n}\log|M|$. 

\begin{definition}\label{def:code}
We call a code $(\Enc, \Dec)$ for a channel $P_\s{A|X}$ an \emph{$(n,2^{nR},\alpha)$-code} if 
\begin{align}
    &\Enc \colon [2^{nR}] \to \SX^n,\\
    &\Dec\colon\SA^n\to[2^{nR}],\\
    &P^{\times n}_{\s{A|X}}(\Dec^{-1}(m)|\Enc(m)) \geq \alpha,
    \quad \forall m \in [2^{nR}].
\end{align}
\end{definition}

We know since Shannon's groundbreaking work \cite{Shannon1948} that there exists a sequence of codes with rate less than the capacity of the channel and probability of success tending to one. We also know from the strong-converse results \cite{Arimoto1973} that if the rate is above the capacity, the success probability exponentially tends to zero. In \cite{dueck1979reliability}, the optimal exponent of the success probability has been characterized. The following lemma is what we require in our achievability proof. Its proof resembles the proof in  \cite{dueck1979reliability}, but we need to modify it because we consider the minimum success probability, not the average. We leave the proof to \cref{app:A}.

\begin{restatable}{lemma}{achievability}\label{lem:achievable_code}
Let $P_\s{A|X}$ be a channel, $Q_\s{XA}$ a probability distribution over $\SX\times \SA$ and $\delta >0$. For \edited{$n\geq n_0(\card{\SX}, \card{\SA}, \delta)$}, there exists an 
\[
    \left(n, 2^{n(I(X;A)_Q - \delta)}, \frac{2^{-nD(Q_\s{A|X} \| P_\s{A|X}\mid Q_\s{X})}}{\edited{p(n)}}\right)
\]
code for the channel $P_\s{A|X}$ \edited{where $p(n)$ is a polynomial depending only on $\card{\SA}$}.
\end{restatable}

We next recall the definition of list decoding. The decoder here outputs a list of $L$ messages, instead of a single message. The decoding is successful if the list contains the correct message. We denote the list output by the decoder on input $a^n$ by $C_{a^n}$. The minimum success probability is then
\[
    \min_{m \in M} \sum_{a^n \in \SA^n :\; C_{a^n} \ni m}P_\s{A|X}^{\times n}(a^n|\Enc(m)). 
\]

\begin{definition}
We call a list-decoding code an $(n, 2^{nR}, L, \alpha)$-code if $\Dec$ maps elements $a^n$ of $\SA^n$ to subsets $C_{a^n}$ of $[2^{nR}]$ of size L and 
\begin{align*}
    &\Enc \colon [2^{nR}] \to \SX^n,\\
    &\sum_{a^n \in \SA^n :\; C_{a^n} \ni m}P^{\times n}_\s{A|X}(a^n|\Enc(m))\geq \alpha,
    \quad \forall m \in [2^{nR}].
\end{align*}
\end{definition}

We have the following converse for list-decoding schemes, see \cref{app:proof-list-decode-conv} for the proof.

\begin{restatable}{lemma}{converse}\label{lem:list-decoding_upperbound}
{For any  list-decoding $(n, 2^{nR}, 2^{nR_L}, 2^{-n\zeta_n})$ code for $P_\s{A|X}$, we have
\[
    \zeta_n \geq \min_{Q_\s{XA}}\left[D(Q_\s{A|X} \| P_\s{A|X}\mid Q_\s{X}) + \max \{R -R_L- I(X;A)_Q,0\}\right] + \edited{O\pr*{\frac{\log n}{n}}},
\]}
\edited{where the constant hidden in $O(\cdot)$ depends only on $\card{\SX}$ and $\card{\SA}$.}
\end{restatable}

\subsection{Achievability: Classical strategies from error-correction codes}\label{sec:achiev-exponent}

We prove \cref{lm:exponent-achv} in this section, which we re-state for readers' convenience.

\begin{lemma} 
We have
\begin{align}
    \liminf_{n\to \infty}\frac{\log(\omegac(\s{X}^n|\s{A}^n;\s{B}^n)_{P^{\times n}})}{n} \geq \max_{Q_\s{XA}} I(X;A)_Q - 2D(Q_\s{A|X} \| P_\s{A|X}\mid Q_\s{X}) - \log(|\SX|).
\end{align}
\end{lemma}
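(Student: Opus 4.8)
The plan is to turn a good channel code into a classical LSSD strategy and push the parameters to their information-theoretic limit. Fix a distribution $Q_\s{XA}$ and a small $\delta>0$. By \cref{lem:achievable_code}, for all sufficiently large $n$ there is an $\bigl(n, 2^{n(I(X;A)_Q-\delta)}, 2^{-nD(Q_\s{A|X}\|P_\s{A|X}\mid Q_\s{X})}/\poly(n)\bigr)$-code $(\Enc,\Dec)$ for $P_\s{A|X}$, with message set $M=[2^{nR}]$ where $R=I(X;A)_Q-\delta$. The classical strategy for Alice and Bob is the natural one suggested by the Hamming-code example in \cref{subsec:n_copy_class}: on input $a^n\in\SA^n$, each player computes $\Dec(a^n)\in M$ and outputs $\Enc(\Dec(a^n))\in\SX^n$ as its guess for $x^n$. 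This is symmetric, so it is a legitimate deterministic strategy.

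Next I would lower-bound its winning probability. The referee draws $x^n$ uniformly from $\SX^n$; the players both win precisely when $x^n$ is a codeword, say $x^n=\Enc(m)$, and both $a^n$ and $b^n$ decode to $m$. Since $x^n$ is uniform over $\SX^n$, it equals a fixed codeword $\Enc(m)$ with probability $1/|\SX|^n = 2^{-n\log|\SX|}$, and there are $|M|=2^{nR}$ codewords. Given $x^n=\Enc(m)$, the channels to Alice and Bob act independently, so $\Pr[\Dec(A^n)=m \text{ and } \Dec(B^n)=m \mid x^n=\Enc(m)] = \bigl(P^{\times n}_\s{A|X}(\Dec^{-1}(m)\mid\Enc(m))\bigr)^2 \geq \bigl(2^{-nD(Q_\s{A|X}\|P_\s{A|X}\mid Q_\s{X})}/\poly(n)\bigr)^2$, using that every message has success probability at least the guaranteed $\alpha$. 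Putting these together,
\[
\omegac(\s{X}^n|\s{A}^n;\s{B}^n)_{P^{\times n}} \geq 2^{nR}\cdot 2^{-n\log|\SX|}\cdot \frac{2^{-2nD(Q_\s{A|X}\|P_\s{A|X}\mid Q_\s{X})}}{\poly(n)}.
\]
Taking $\tfrac1n\log$ of both sides, the $\poly(n)$ factor vanishes in the limit, and we get $\liminf_{n\to\infty}\tfrac1n\log\omegac \geq R - \log|\SX| - 2D(Q_\s{A|X}\|P_\s{A|X}\mid Q_\s{X}) = I(X;A)_Q-\delta-\log|\SX|-2D(Q_\s{A|X}\|P_\s{A|X}\mid Q_\s{X})$.

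Finally, since $\delta>0$ was arbitrary, let $\delta\to0$; then take the supremum over all $Q_\s{XA}$ (which is attained as a maximum by compactness of the probability simplex and continuity of the objective), yielding the claimed bound. I expect the only genuinely delicate point to be bookkeeping around \cref{lem:achievable_code}: making sure the code's guarantee is in terms of the \emph{minimum} (worst-message) success probability — which it is, by the definition of an $(n,2^{nR},\alpha)$-code — so that the squared term is uniformly lower-bounded no matter which codeword the referee happened to pick; everything else is the routine exponential arithmetic above. One should also note the degenerate cases (e.g. the bracketed bound being negative, or $\SX$ a singleton), where the inequality holds trivially, so no separate argument is needed.
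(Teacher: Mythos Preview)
Your proposal is correct and follows essentially the same route as the paper: define the symmetric strategy $f=\Enc\circ\Dec$ from the code of \cref{lem:achievable_code}, lower-bound the winning probability by $\frac{2^{nR}}{|\SX|^n}\alpha^2$ using the worst-message success guarantee, and then take $\frac{1}{n}\log$, let $\delta\to0$, and optimize over $Q_\s{XA}$. The only cosmetic difference is that the paper writes out the intermediate sum $\frac{1}{|\SX|^n}\sum_{x^n}P_\s{A|X}^{\times n}(f^{-1}(x^n)\mid x^n)^2$ explicitly before restricting to $x^n\in\im(\Enc)$, whereas you jump straight to the codeword count; your remark that the guarantee is on the \emph{minimum} success probability is exactly the point labeled $(a)$ in the paper.
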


\begin{proof}
We first explain how to use error-correction codes to find classical strategies for the parallel repetition of a channel LSSD game. Let $(\Enc, \Dec)$ be an $(n, 2^{nR}, \alpha)$-code for the channel $P_\s{A|X}$.
We consider a classical LSSD strategy for $n$ parallel repetitions of the channel LSSD game in which both players use the estimation function $f := \Enc \circ \Dec$.
This strategy can be interpreted as the players decoding directly to the codeword of a message instead of to the message itself.  We lower bound  the winning probability of the strategy given by $f$ as
\begin{align}
    \frac{1}{|\SX|^n} \sum_{x^n \in \SX^n} P_\s{A|X}^{\times n}(f^{-1}(x^n) | x^n)^2 &= \frac{1}{|\SX|^n} \sum_{x^n \in \SX^n} \left(\sum_{a^n \in \SA^n : f(a^n) = x^n}P_\s{A|X}^{\times n}(a^n | x^n)\right)^2 \label{for:win_the5.1}\\
    &\overset{(a)}{\geq} \frac{1}{|\SX|^n} \sum_{x^n \in \im(\Enc)}\alpha^2 = \frac{2^{nR}}{|\SX|^n}\alpha^2,\label{for:lower_bound}
\end{align}
where $(a)$ follows since $(\Enc, \Dec)$ is an $(n, 2^{nR}, \alpha)$-code. Notice that there is a trade-off between the success probability and the number of messages. We simultaneously want the success probability and the number of messages to be large. However, increasing one necessarily means decreasing the other.

Let $Q_\s{XA}$ be a distribution on $\SX \times \SA$ and $\delta>0$. \edited{Let $n_0(\card{\SX}, \card{\SA}, \delta)$ and $p(n)$ be as in \cref{lem:achievable_code} where $p(n)$ is a polynomial only depending on $\card{\SA}$.} By \cref{lem:achievable_code}, for \edited{$n\geq n_0(\card{\SX}, \card{\SA}, \delta)$}, there exists an $\left(n, 2^{n(I(X;A)_Q - \delta)}, \frac{2^{-nD(Q_\s{A|X} \| P_\s{A|X}\mid Q_\s{X})}}{\edited{p(n)}}\right)$-code for $P_\s{A|X}$. Let $f = \Enc \circ \Dec$ be the strategy defined by this code. The winning probability of this strategy is at most the optimal classical winning probability, so by using \cref{for:lower_bound} we find
\begin{align*}
    \omegac(\s{X}^n|\s{A}^n;\s{B}^n)_{P^{\times n}} \geq \frac{2^{n(I(X;A)_Q - \delta -D(Q_\s{A|X}\|P_\s{A|X}\mid Q_\s{X}))}}{|\SX|^n \edited{p(n)}},
\end{align*}
and therefore
\begin{align}\label{for:before_limit}
    \frac{\log(\omegac(\s{X}^n|\s{A}^n;\s{B}^n)_{P^{\times n}})}{n} \geq I(X;A)_Q -\delta - 2D(Q_\s{A|X}\|P_\s{A|X}\mid Q_\s{X}) - \log(|\SX|) - \frac{\log(\edited{p(n)})}{n}.
\end{align}
Since \cref{for:before_limit} holds for any $Q_\s{XA}$ and $\delta >0$, and $\lim_{n \to \infty} \frac{\log(\edited{p(n)})}{n} = 0$, we conclude that
\[
    \lim_{n\to\infty} \frac{\log(\omegac(\s{X}^n|\s{A}^n;\s{B}^n)_{P^{\times n}})}{n} \geq \max_{Q_\s{XA}} I(X;A)_Q - 2D(Q_\s{A|X} \| P_\s{A|X}\mid Q_\s{X}) - \log(|\SX|).
\]
This completes the proof of the achievability of the error exponent.
\end{proof}

\subsection{Converse: No-signalling LSSD strategies and list-decoding codes}\label{sec:conv-exponent}

We prove \cref{lm:exponent-converse} in this section which we restate for readers' convenience.

\begin{lemma}\label{lem:converse}
We have
\begin{align}
    \limsup_{n\to \infty}\frac{\log(\omegans(\s{X}^n|\s{A}^n;\s{B}^n)_{P^{\times n}})}{n} \leq \max_{Q_\s{XA}} I(X;A)_Q - 2D(Q_\s{A|X} \| P_\s{A|X}\mid Q_\s{X}) - \log(|\SX|).
\end{align}
\end{lemma}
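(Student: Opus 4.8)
The plan is to show that any no-signalling strategy for the $n$-fold channel game can be converted into a list-decoding code for $P_\s{A|X}$, and then invoke the list-decoding converse (\cref{lem:list-decoding_upperbound}). First I would fix an optimal no-signalling strategy $Q_{\X_A\X_B|\AB}$ for $n$ copies, i.e.\ for the game defined by $P^{\times n}_\s{XAB}$. The key structural fact to exploit is the no-signalling constraint: by \eqref{eq:ns1}, Alice's marginal output distribution $Q_{\X_A|\A}(x_A \mid a^n)$ does not depend on Bob's input, and symmetrically for Bob. Since in the channel game Alice and Bob receive independent outputs of the \emph{same} channel with the \emph{same} input $x^n$, and $P_\s{X}$ is uniform, the winning probability is
\[
    \omegans(\s{X}^n|\s{A}^n;\s{B}^n)_{P^{\times n}}
    = \frac{1}{|\SX|^n}\sum_{x^n} \sum_{a^n,b^n}
      P^{\times n}_\s{A|X}(a^n|x^n) P^{\times n}_\s{A|X}(b^n|x^n)\,
      Q_{\X_A\X_B|\AB}(x^n,x^n \mid a^n,b^n).
\]
The idea is to bound the joint quantity $Q_{\X_A\X_B|\AB}(x^n,x^n\mid a^n,b^n)$ by one of the marginals, say $Q_{\X_A|\A}(x^n\mid a^n)$, and then define, for each $a^n$, the list $C_{a^n} := \set{x^n : Q_{\X_A|\A}(x^n\mid a^n) \geq \theta}$ for a suitable threshold $\theta$; the list size is at most $1/\theta$ by normalization.

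The second step is the counting/averaging argument that turns this into an honest list-decoding code with a per-message (minimum) success guarantee. One subtlety is that a generic no-signalling strategy gives only an \emph{average} success bound over $x^n$, whereas \cref{lem:list-decoding_upperbound} is stated for the minimum over messages; I would handle this by the standard expurgation trick — restrict to the sub-collection of inputs $x^n$ that contribute at least half the average, keeping at least half of them, which only costs an $o(1)$ term in the exponent. Splitting the outer sum over $x^n$ according to the joint type $Q_\s{XA}$ of $(x^n,a^n)$ — there are only polynomially many types by $|\mathcal{P}_n(\SX\times\SA)|\leq(n+1)^{|\SX||\SA|}$ — lets me reduce to a single dominant type. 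For that type, the messages form a set of size roughly $2^{nR}$ (where $R$ is related to the code constructed), the channel factor is $\approx 2^{-nD(Q_\s{A|X}\|P_\s{A|X}\mid Q_\s{X})}$, the list size is $2^{nR_L}$, and the number of channel outputs $a^n$ of the right conditional type is $\leq 2^{nH(A|X)_Q}$; combining these with the threshold choice yields a list-decoding $(n, 2^{nR}, 2^{nR_L}, 2^{-n\zeta})$-code whose parameters satisfy the hypothesis of \cref{lem:list-decoding_upperbound}.

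Finally I would feed these parameters into \cref{lem:list-decoding_upperbound}, which gives
\[
    \zeta \geq \min_{Q_\s{XA}} D(Q_\s{A|X}\|P_\s{A|X}\mid Q_\s{X}) + \max\set{R - R_L - I(X;A)_Q, 0} + o(1),
\]
and unwind the bookkeeping: the winning probability is (up to polynomial factors and the expurgation loss) $\tfrac{1}{|\SX|^n} 2^{nR} \cdot 2^{-2nD(\cdot)}$-ish once the second channel copy for Bob is also accounted for — the factor of $2$ in front of the relative entropy in the final bound comes precisely from the two independent channel uses to Alice and Bob, mirroring the square in \eqref{for:win_the5.1}. Rearranging the inequality on $\zeta$ and taking $\limsup$ over $n$ and the maximum over the dominant type $Q_\s{XA}$ produces exactly $\max_{Q_\s{XA}} I(X;A)_Q - 2D(Q_\s{A|X}\|P_\s{A|X}\mid Q_\s{X}) - \log(|\SX|)$. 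The main obstacle I anticipate is the first step: correctly using the no-signalling conditions to decouple the joint output $Q(x^n,x^n\mid a^n,b^n)$ into a product-like bound involving the two marginals, so that both channel copies contribute a full $D(Q_\s{A|X}\|P_\s{A|X}\mid Q_\s{X})$ each; getting the right list (from one marginal) while still capturing the quadratic dependence on the channel (from both copies) is the delicate part, and the threshold/type bookkeeping must be arranged so that the $o(1)$ slack genuinely vanishes after normalizing by $n$.
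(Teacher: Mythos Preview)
Your overall architecture matches the paper's: reduce the no-signalling winning probability to a list-decoding scheme and invoke \cref{lem:list-decoding_upperbound}. The type-splitting, the pigeonhole over joint types, and the level-set partition of $q(x^n)$ are all essentially what the paper does.

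The gap is exactly the one you flag yourself, and it is real. Bounding $Q(x^n,x^n\mid a^n,b^n)\leq Q_{\X_A|\A}(x^n\mid a^n)$ and summing over $b^n$ kills Bob's channel entirely: you end up with the one-player guessing quantity $\tfrac{1}{|\SX|^n}\sum_{x^n,a^n} P^{\times n}_\s{A|X}(a^n|x^n)\,Q_{\X_A|\A}(x^n|a^n)$, which yields only a single $D(Q_\s{A|X}\|P_\s{A|X}\mid Q_\s{X})$ in the exponent, not two. A threshold list on one marginal cannot recover the missing copy; the ``expurgation'' and type bookkeeping do not help here because the loss is first-order in the exponent.

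The paper's fix is a symmetrization step you do not mention: first replace $Q$ by its average over permutations of the $n$ coordinates (\cref{lem:permutations}), so that each marginal $Q(x^n\mid a^n)$ depends only on the joint \emph{type} of $(x^n,a^n)$. After restricting to one dominant pair of types, the marginals are therefore \emph{uniform} over the conditional type classes $C_{a^n}$ and $D_{b^n}$ (of fixed sizes $L_A,L_B$). Now $\min\{Q(x^n|a^n),Q(x^n|b^n)\}=\tfrac{1}{\max(L_A,L_B)}\,\delta[x^n\in C_{a^n}]\,\delta[x^n\in D_{b^n}]$ is a genuine product of indicators, and summing over $a^n,b^n$ gives $q_A(x^n)q_B(x^n)$; a Cauchy--Schwarz/symmetry argument (as in \cref{the:symmetry}) lets you take $q_A=q_B$, producing $q(x^n)^2$ and hence the factor $2$ in front of $D$. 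Your thresholding idea is a reasonable substitute for uniformity in a single-marginal argument, but it does not by itself produce the factorization $\delta[\cdot]\delta[\cdot]$; the permutation-invariance step is the missing ingredient.
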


We first prove the existence of an optimal strategy invariant under permutations of inputs and outputs. To this end, we need the following notation: For a permutation $\sigma \in S_n$ and a sequence $x^n \in \SX^n$ we denote by $\sigma(x^n) \in \SX^n$ the sequence obtained from $x^n$ by permuting its entries according to $\sigma$.

\begin{lemma}\label{lem:permutations} For $n$ parallel repetitions of a channel LSSD game, there is  an optimal no-signalling strategy $Q$ such that 
\begin{equation}\label{for:second_property}
    \forall \sigma \in S_n: \;\;\; Q(\sigma(x^n), \sigma(y^n) | \sigma(a^n), \sigma(b^n)) = Q(x^n,y^n|a^n,b^n).
\end{equation}
\end{lemma}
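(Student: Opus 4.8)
The plan is to use a standard symmetrization argument. Let $Q$ be any optimal no-signalling strategy for the $n$-fold parallel repetition of the channel LSSD game, whose existence is guaranteed since the winning probability is a linear function optimized over the (compact) no-signalling polytope. For each permutation $\sigma \in S_n$, define a new strategy $Q^\sigma$ by $Q^\sigma(x^n, y^n \mid a^n, b^n) := Q(\sigma^{-1}(x^n), \sigma^{-1}(y^n) \mid \sigma^{-1}(a^n), \sigma^{-1}(b^n))$, and then set $\bar Q := \frac{1}{n!}\sum_{\sigma \in S_n} Q^\sigma$. The claim is that $\bar Q$ is itself a valid no-signalling strategy, that it satisfies the permutation invariance \eqref{for:second_property}, and that it wins with the same (optimal) probability as $Q$.

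First I would check that each $Q^\sigma$ is a legitimate no-signalling conditional distribution: non-negativity and normalization are immediate from those of $Q$, and the no-signalling constraints \eqref{eq:ns1}, \eqref{eq:ns2} are preserved because relabelling the coordinates of Alice's input/output and Bob's input/output by the same permutation $\sigma$ commutes with the partial-sum conditions defining no-signalling (the sum over all of Alice's outputs is invariant under permuting the coordinates of that output string, and the condition ``independent of Bob's input'' is unaffected by permuting Bob's input coordinates). Since the no-signalling polytope is convex, the average $\bar Q$ is also a valid no-signalling strategy. Permutation invariance of $\bar Q$ follows from the fact that precomposing the averaging with a fixed permutation $\tau$ merely reindexes the sum over $S_n$, i.e. $\bar Q^\tau = \bar Q$; I would spell this out by substituting $\sigma \mapsto \sigma\tau$ in the sum.

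The second thing to verify is that averaging does not change the winning probability. Here I would use the key structural fact that the game is defined by a product distribution $P^{\times n}_\s{XAB}$ with $P_\s{X}$ uniform and $P_\s{B|X} = P_\s{A|X}$, hence $P^{\times n}_\s{XAB}(x^n, a^n, b^n) = P^{\times n}_\s{XAB}(\sigma(x^n), \sigma(a^n), \sigma(b^n))$ for every $\sigma \in S_n$: the product structure makes the joint distribution symmetric under simultaneously permuting the coordinates of all three strings. Writing out the winning probability of $Q^\sigma$, substituting $x^n \to \sigma(x^n)$ etc. in the summation over $\SX^n \times \SA^n \times \SB^n$, and using this symmetry of $P^{\times n}$ shows that each $Q^\sigma$ has exactly the same winning probability as $Q$. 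By linearity of the objective in the strategy, $\bar Q$ then also achieves the optimal value, so $\bar Q$ is an optimal no-signalling strategy satisfying \eqref{for:second_property}.

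I do not anticipate a serious obstacle here; the argument is a routine convexity-plus-symmetry averaging. The one point that needs mild care is confirming that the no-signalling constraints really are invariant under the coordinatewise action of $S_n$ on the input/output strings (as opposed to some more subtle action), and being careful that the permutation acts on Alice's and Bob's coordinates in a synchronized way so that the pairing structure of the game is respected — but this is exactly what the hypothesis $\SA = \SB$, $P_\s{B|X} = P_\s{A|X}$ and the product form of $P^{\times n}_\s{XAB}$ deliver.
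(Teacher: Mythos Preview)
Your proposal is correct and follows essentially the same symmetrization argument as the paper: start from an optimal strategy, average over the $S_n$-action on coordinates, and use the permutation invariance of $P^{\times n}_\s{XAB}$ together with linearity to conclude that the averaged strategy is still optimal. If anything you are slightly more explicit than the paper in checking that each $Q^\sigma$ (and hence the average) remains no-signalling; one minor remark is that the hypotheses $\SA=\SB$ and $P_\s{B|X}=P_\s{A|X}$ are not actually needed for this lemma---only the product form of $P^{\times n}_\s{XAB}$ matters for the coordinate-permutation symmetry.
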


\begin{proof}
Let $Q$ be an optimal strategy and $\sigma \in S_n$. The strategy $Q_\sigma$ defined by $Q_\sigma(x^n,y^n| a^n,b^n) = Q(\sigma(x^n) , \sigma(y^n)  | \sigma(a^n), \sigma(b^n))$ has the same winning probability as $Q$, since the $n$-fold probability distribution is invariant under permutations: $P^{\times n}_\s{XAB}(x^n,a^n,b^n) = P^{\times n}_\s{XAB}(\sigma(x^n),\sigma(a^n),\sigma(b^n))$. We define 
\[
    \hat{Q} := \frac{1}{n!}\sum_{\sigma \in S_n}Q_\sigma.
\]
The strategy $\hat{Q}$ satisfies \eqref{for:second_property}: for any $\tau \in S_n$,
 \begin{align*}
     \hat{Q}(\tau(x^n) , \tau(y^n) | \tau(a^n), \tau(b^n)) &= \frac{1}{n!}\sum_{\sigma \in S_n}Q_\sigma(\tau(x^n), \tau(y^n) | \tau(a^n), \tau(b^n))\\
     &= \frac{1}{n!}\sum_{\sigma \in S_n}Q(\sigma(\tau(x^n)) , \sigma(\tau(y^n)) | \sigma(\tau(a^n)), \sigma(\tau(b^n)))\\
     &= \frac{1}{n!}\sum_{\pi \in S_n}Q(\pi(x^n), \pi(y^n)| \pi(a^n), \pi(b^n))\\
     &= \frac{1}{n!}\sum_{\pi \in S_n}Q_\pi(x^n, y^n | a^n, b^n)\\
     &= \hat{Q}(x^n,y^n|a^n,b^n).
 \end{align*}
 Finally, by linearity of the winning probability, $\hat{Q}$ also achieves the same winning probability as $Q$, which means that it is optimal.
\end{proof}

\begin{proof}[Proof of \cref{lem:converse}]
Let $Q$ be an optimal strategy satisfying \eqref{for:second_property}. Its marginal distributions $Q(x^n|a^n)$ and $Q(y^n|b^n)$ only depend on the joint type of $(x^n, a^n)$ and $(y^n,b^n)$, respectively. In particular, we can write the winning probability of $Q$ as follows:
\begin{align}
    \omegans(\s{X}^n|\s{A}^n;\s{B}^n)_{P^{\times n}} &= \sum_{x^n,a^n,b^n} P^{\times n}_\s{XAB}(x^n,a^n,b^n) Q(x^n,x^n|a^n,b^n) \nonumber \\
    &= \sum_{\substack{Q_{\s{XA}}\in \mathcal{P}_n(\SX \times \SA) \\ Q'_{\s{XB}}\in \mathcal{P}_n(\SX \times \SB)}}
    \sum_{\substack{x^n,a^n,b^n:\\
    (x^n,a^n) \in \mathcal{T}_{Q_\s{XA}}\\
    (x^n,b^n) \in \mathcal{T}_{Q'_\s{XA}}}} P^{\times n}_\s{XAB}(x^n,a^n,b^n)Q(x^n,x^n|a^n,b^n), \label{for:weight_win_prob}
\end{align}
where $\mathcal{P}_n(\cdot)$ denotes the set of types of length-$n$ strings over a given set, and $\mathcal{T}_{\cdot}$ denotes all sequences of a given type.

Since there are $(n + 1)^{2|\SX||\SA|}$ terms in the first sum of \eqref{for:weight_win_prob}, there must exist $Q_{\s{XA}}\in \mathcal{P}_n(\SX \times \SA)$ and  $Q'_{\s{XB}}\in \mathcal{P}_n(\SX \times \SB)$ such that 
\begin{equation}\label{for:ij}
    \sum_{\substack{x^n,a^n,b^n:\\
    (x^n,a^n) \in \mathcal{T}_{Q_\s{XA}}\\
    (x^n,b^n) \in \mathcal{T}_{Q'_\s{XA}}}} P^{\times n}_\s{XAB}(x^n,a^n,b^n)Q(x^n,x^n|a^n,b^n) \geq \frac{\omegans(\s{X}^n|\s{A}^n;\s{B}^n)_{P^{\times n}}}{(n+1)^{2|\SX||\SA|}}.
\end{equation}

Let us define for each $a^n$ and $b^n$
\begin{align}
    C_{a^n} := \{x^n: (x^n, a^n)\in\mathcal{T}_{Q_\s{XA}} \},\\
    D_{b^n} := \{x^n: (x^n, b^n)\in\mathcal{T}_{Q'_\s{XB}} \}.
\end{align}
Now consider the following strategy $\tilde{Q}$:
\begin{itemize}
    \item on input $(a^n,b^n)$, Alice and Bob generate $(x^n,y^n)$ according to $Q$;
    \item Alice checks if $x^n \in C_{a^n}$ and if not, uniformly generates a new output $\tilde{x}^n$ from $C_{a^n}$ (if $C_{a^n}$ is the empty set, Alice generates an arbitrary output);
    \item Bob checks if $y^n \in D_{b^n}$ and if not, uniformly generates a new output $\tilde{y}^n$ from $D_{b^n}$ (if $D_{b^n}$ is the empty set, Bob generates an arbitrary output).
\end{itemize}
This strategy is no-signalling and has winning probability of at least $\frac{\omegans(\s{X}^n|\s{A}^n;\s{B}^n)_{P^{\times n}}}{(n+1)^{2|\SX||\SA|}}$, by \eqref{for:ij}. We also have that $\tilde{Q}(x^n|a^n)$ is uniform over $C_{a^n}$ when $C_{a^n}\neq \0$, since $Q(x^n|a^n)$ only depends on the joint type of $(x^n,a^n)$. Similarly, $\tilde{Q}(y^n|b^n)$ is uniform over $D_{b^n}$ when $D_{b^n}\neq \0$. 

Note that for any $a^n$ and ${a'}^n$, if $C_{a^n}$ and $C_{{a'}^n}$ are non-empty, then $|C_{a^n}| = |C_{{a'}^n}|$. We define $L_A := |C_{a^n}|$ for a non-empty $C_{a^n}$ and similarly define $L_B := |D_{b^n}|$ for a non-empty $D_{b^n}$. We find 
\begin{align*}
    \frac{\omegans(\s{X}^n|\s{A}^n;\s{B}^n)_{P^{\times n}}}{(n+1)^{2|\SX||\SA|}} &\leq \sum_{x^n,a^n,b^n} P^{\times n}_\s{XAB}(x^n,a^n,b^n)\tilde{Q}(x^n,x^n|a^n,b^n)\\
    &\leq \sum_{x^n,a^n,b^n} P^{\times n}_\s{XAB}(x^n,a^n,b^n)\min\{\tilde{Q}(x^n|a^n) , \tilde{Q}(x^n|b^n)\}\\
    &\leq \frac{1}{\max\{L_A, L_B\}} \sum_{x^n,a^n,b^n} P^{\times n}_\s{XAB}(x^n,a^n,b^n) \delta(x^n \in C_{a^n}) \delta(x^n \in D_{b^n})\\
    &= \frac{1}{\max\{L_A, L_B\} |\SX|^n} \sum_{x^n} \left(\sum_{a^n: \; C_{a^n}\ni x^n} P^{\times n}_\s{A|X}(a^n|x^n) \right)\left(\sum_{b^n:\; D_{b^n} \ni x^n} P^{\times n}_\s{A|X}(b^n|x^n) \right).
\end{align*}
\edited{
Upon defining 
\begin{align*}
    q_A(x^n) &:= \sum_{a^n:\; C_{a^n} \ni x^n} P^{\times n}_\s{A|X}(a^n|x^n),\\
    q_B(x^n) &:= \sum_{b^n:\; D_{b^n}\ni x^n} P^{\times n}_\s{A|X}(b^n|x^n),
\end{align*}
we can write 
\begin{align}
    \sum_{x^n} \left(\sum_{a^n: \; C_{a^n}\ni x^n} P^{\times n}_\s{A|X}(a^n|x^n) \right)\left(\sum_{b^n:\; D_{b^n} \ni x^n} P^{\times n}_\s{A|X}(b^n|x^n) \right) = \sum_{x^n} q_A(x^n) q_B(x^n).
\end{align}
By Cauchy-Schwartz inequality, we have
\begin{align}
\sum_{x^n} q_A(x^n) q_B(x^n) \leq \sqrt{\pr*{\sum_{x^n}p_A(x^n)^2}\pr*{\sum_{x^n}p_B(x^n)^2}}.
\end{align}
Therefore, without loss of generality, we can assume that
\begin{align}
    \sum_{x^n} q_A(x^n) q_B(x^n) \leq \sum_{x^n}p_A(x^n)^2.
\end{align}
We can upper-bound the winning probability of the strategy as
\begin{align}
    \frac{\omegans(\s{X}^n|\s{A}^n;\s{B}^n)_{P^{\times n}}}{(n+1)^{2|\SX||\SA|}} \leq \frac{1}{\max(L_A, L_B)|\SX|^n} \sum_{x^n}  q_A(x^n)^2\leq \frac{1}{L_A|\SX|^n} \sum_{x^n}  q_A(x^n)^2. \label{eq:ns-vs-list}
\end{align}}

Let $\delta > 0$. For each $i \geq 0$, we define
\[
    \mathcal{R}_i := \{x^n \in \SX^n \;|\; 2^{-n\delta(i+1)} \leq q_A(x^n) < 2^{-n\delta i}\}.
\]
We define a list-decoding scheme $(\Enc_i, \Dec_i)$ as follows: $\Enc_i \colon \mathcal{R}_i \to \SX^n$ is the identity function and 
\[
    \Dec_i(a^n) = C_{a^n} \cap \mathcal{R}_i.
\]
Note that intersecting $C_{a^n}$ with $\mathcal{R}_i$ only decreases the size of the list, making the code weaker. This observation means that we will still be able to use \cref{lem:list-decoding_upperbound} for a list decoding with list size $L$. For each $x^n \in \mathcal{R}_i$, we have 
\[
    \sum_{a^n :\; \Dec_i(a^n) \ni x^n} P^{\times n}_\s{A|X}(a^n|x^n) \geq q_A(x^n) \geq 2^{-n\delta(i+1)},
\]
so $(\Enc_i, \Dec_i)$ defines an $(n, |\mathcal{R}_i|, L, 2^{-\delta(i+1)})$-code. By \cref{lem:list-decoding_upperbound}, we have 
\[
    \delta(i+1) \geq \min_{Q_\s{XA}}D(Q_\s{A|X} \| P_\s{A|X}\mid Q_\s{X}) + \max \left\{\frac{\log|\mathcal{R}_i|}{n} - \frac{\log(L_A)}{n}- I(X;A)_Q,0\right\}+O\pr*{\frac{\log n}{n}}.
\]
We find that if $q_A(x^n) >0$, then $q_A(x^n) \geq 2^{-n\mu}$, with $\mu := \max_{x,a:P_{\s{A|X}}(a|x) > 0}-\log(P_{\s{A|X}}(a|x))$. Thus, if $i \geq t := \lfloor \frac{\mu}{\delta}\rfloor$, then $\mathcal{R}_i$ is empty. Now, we find
\begin{align}
    \frac{1}{L_A}\sum_{x^n \in \SX^n} q_A(x^n)^2 &= \sum_{i=0}^t \sum_{x^n \in \mathcal{R}_i}\frac{1}{L_A}q_A(x^n)^2\\
    &\leq \sum_{i=0}^t \frac{|\mathcal{R}_i|}{L_A}2^{-2n\delta i }\\
    & \leq \sum_{i=0}^t 2^{n \left(\frac{\log|\mathcal{R}_i|}{n} - \frac{\log(L_A)}{n} -2\min_{Q_\s{XA}}\left(D(Q_\s{A|X} \| P_\s{A|X}\mid Q_\s{X}) + \max \left\{\frac{\log|\mathcal{R}_i|}{n} -\frac{\log(L)}{n}- I(X;A)_Q,0\right\}\right) +O\pr*{\frac{\log n}{n}} + \delta\right)}\\
    &\leq \sum_{i=0}^t 2^{n \left(\max_{Q_\s{XA}}\left(I(X;A)_Q -2D(Q_\s{A|X} \| P_\s{A|X} \mid Q_\s{X})\right) +O\pr*{\frac{\log n}{n}} + \delta\right)}\\
    &= \left(\lfloor \frac{\mu}{\delta}\rfloor+1\right) 2^{n \left(\max_{Q_\s{XA}}\left(I(X;A)_Q -2D(Q_\s{A|X} \| P_\s{A|X} \mid Q_\s{X})\right) +O\pr*{\frac{\log n}{n}} + \delta\right)}. \label{eq:ns-list-up}
\end{align}
Combining \eqref{eq:ns-vs-list} and \eqref{eq:ns-list-up}, taking logarithm from both sides, and choosing $\delta = 1/n$ yields the desired converse results.
\end{proof}

\subsection{Calculating the exponent for BSCs}

We calculate, for  BSCs, the value of the limit of the exponent in \cref{the:classical_exponennt}: $\max_{Q_\s{XA}} I(X;A)_Q - 2D(Q_\s{A|X} \| P_\s{A|X}\mid Q_\s{X}) - \log(|\SX|)$. To this extent, let $Q_\s{XA}$ be a distribution over $\{0,1\}\times \{0,1\}$.
Let us calculate the exponent one step at a time. First of all, we have 
\[
    I(X;A)_Q = H(X)_Q + H(A)_Q - H(X,A)_Q
\]
and
\[
    H(X)_Q = -\sum_{x=0}^1 Q_\s{X}(x) \log(Q_\s{X}(x)) = -\sum_{x=0}^1 \left(\sum_{a=0}^1 Q_\s{XA}(x,a)\right)\log\left(\sum_{a=0}^1 Q_\s{XA}(x,a)\right).
\]
We can find $H(A)_Q$ in a similar way. We also have
\[
    H(X,A)_Q = -\sum_{x,a=0}^1 Q_\s{XA}(x,a)\log(Q_\s{XA}(x,a)).
\]

Now let us find the value of $D(Q_\s{A|X}\|P_\s{A|X}\mid Q_\s{X})$:
\begin{align*}
    D(Q_\s{A|X}\|P_\s{A|X}\mid Q_\s{X}) &= \sum_{x=0}^1 Q_\s{X}(x) D(Q_{\s{A|X}=x} \| P_{\s{A|X}=x})\\
    &= \sum_{x=0}^1 \left(\sum_{a=0}^1 Q_\s{XA}(x,a)\right)\left(\sum_{a=0}^1 Q_\s{A|X}(a|x) \log\left(\frac{Q_\s{A|X}(a|x)}{P_\s{A|X}(a|x)}\right)\right).
\end{align*}
Using numerical analysis we found that the maximum $\max_{Q_\s{XA}} I(X;A)_Q - 2D(Q_\s{A|X} \| P_\s{A|X} \mid Q_\s{X}) - \log(|\SX|)$ is always achieved by a distribution $Q_\s{XA}$ for which $Q_\s{XA}(0,0) = Q_\s{XA}(1,1) =:c$ and $Q_\s{XA}(0,1) = Q_\s{XA}(1,0) =:d$. Using this property, we have 
\[
    H(X)_Q = H(A)_Q = -2(c+d)\log(c+d),
\]
and 
\[
    H(X,Y)_Q = -2c\log(c) - 2d\log(d).
\]
We also find
\[
    D(Q_\s{A|X}\|P_\s{X|A}\mid Q_\s{X}) = 2\left(c\log\left(\frac{c}{(c+d)(1-\alpha)}\right) + d\log\left(\frac{d}{(c+d)\alpha}\right)\right).
\]

Combining the expressions above, we find the value $I(X;A)_Q - 2D(Q_\s{A|X} \| P_\s{A|X}\mid Q_\s{X}) - \log(|\SX|)$. Note that for $Q_\s{XA}$ to be a distribution, we need $d=\frac{1}{2} - c$. This observation means that we only need to maximize with respect to the variable $c$ (we see $\alpha$ as a constant). We can solve this maximization by calculating the derivative, setting it to 0 and solving for $c$. Using a computer algebra system, we find 
\begin{equation}\label{for:limit}
    \max_{Q_\s{XA}} I(X;A)_Q - 2D(Q_\s{A|X} \| P_\s{A|X} \mid Q_\s{X}) - \log(|\SX|) = \log(1 - 2(1-\alpha)\alpha).
\end{equation}

In \cref{fig:pretty} we plotted this expression together with the exponent of the optimal winning probability achieved by the strategies $Q_n^d$ for some $n$ (see \cref{n_copies_section}). We can clearly see how this exponent approaches the limit calculated in \eqref{for:limit}.

\begin{figure}[ht]
    \centering
    \includegraphics[width=0.9\textwidth]{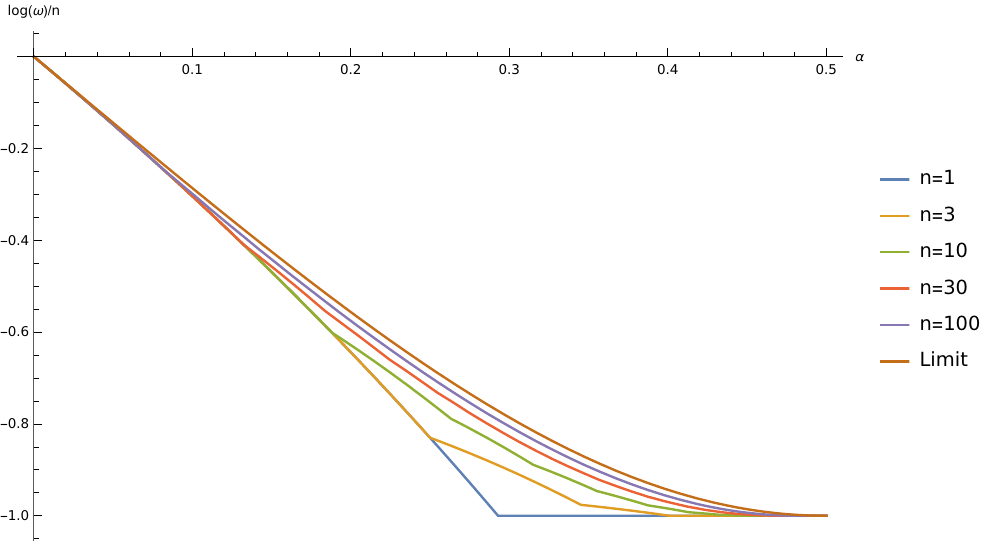}
    \caption{Plot of $\log(\omega)/n$ for different values of $n$ and the limit of this expression, given by \cref{for:limit}, against $\alpha$. We calculated $\omega$ as the optimal winning probability achieved by the strategies $Q_n^d$ (see \cref{n_copies_section}).}
    \label{fig:pretty}
\end{figure}

\subsection*{Acknowledgements}

We thank Christian Majenz for useful discussions.
MO was supported by an NWO Vidi grant (Project No.~VI.Vidi.192.109).
LEF was supported by the Dutch Ministry of Economic Affairs and Climate Policy (EZK), as part of the Quantum Delta NL programme.

\bibliographystyle{quantum}
\bibliography{references}

\appendix

\section{Proofs}

\subsection{Proof of \cref{the:symmetry}}\label{sec: proot theorem symmetry}

Let two functions $f \colon \SA \to \SX$ and $g \colon \SB \to \SX$ define a deterministic strategy.
We prove that either Alice and Bob both performing $f$ or both performing $g$ can only increase the winning probability. Note that Alice and Bob can perform the same strategy, since $\SA = \SB$. The winning probability of the strategy defined by $f$ and $g$ is given by 
\begin{align*}
    &\sum_{\substack{x\in\SX\\a\in\SA, b\in\SB}}P_\s{XAB}(x, a, b) \delta[f(a) = g(b) = x]\\ 
    & {=} \frac{1}{|\SX|} \sum_{\substack{x\in\SX\\a\in\SA, b\in\SB}}P_\s{AB|X}(a,b|x) \delta[f(a) = g(b) = x]\\ 
    &{=} \frac{1}{|\SX|} \sum_{x\in\SX} \left(\sum_{a \in \SA} P_\s{A|X}(a | x)\delta[f(a) = x]\right) \left(\sum_{b \in \SB} P_\s{B|X}(b | x)\delta[g(b) = x]\right)\\
    &{=} \frac{1}{|\SX|} \sum_{x\in\SX} P_\s{A|X}(f^{-1}(x)|x)P_\s{A|X}(g^{-1}(x)|x),
\end{align*}
where in the first, second and third equalities we have used hypotheses (i), (ii) and (iii) of \cref{the:symmetry}, respectively and notice that $f^{-1}(x)$ and $g^{-1}(x)$ might be sets.
Write $q_f(x) := P_\s{A|X}(f^{-1}(x)|x)$ and $q_g(x) := P_\s{A|X}(g^{-1}(x)|x)$. Notice that $q_f$ and $q_g$ are vectors indexed by $x \in \SX$, so we can write the winning probability as an inner product of these vectors: 
\begin{equation}\label{for:win_prob}
    \frac{1}{|\SX|} \left<q_f, q_g\right>.
\end{equation}
Using the Cauchy--Schwarz inequality,
\[
    |\left<q_f, q_g\right>|^2 \leq \left<q_f, q_f\right>\left<q_g,q_g\right>,
\]
and thus we cannot have $\left<q_f, q_g\right> > \left<q_f, q_f\right>$ and 
$\left<q_f, q_g\right> > \left<q_g, q_g\right>$. Therefore, we can conclude that Alice and Bob either both performing $f$ or both performing $g$ does not decrease the winning probability given in \cref{for:win_prob}. Now suppose we picked $f$ and $g$ to form an optimal strategy, then by the previous statement, we immediately find a symmetric deterministic strategy that is also optimal.

\subsection{Proof of \cref{lem:achievable_code}}\label{app:A}

The proof of \cref{lem:achievable_code} relies on concepts and theorems from the book by Csiszár and Körner \cite{csiszar2011information}. We will not be discussing these concepts here. We repeat the statement of \cref{lem:achievable_code} here for the reader's convenience.

\achievability*


\begin{proof}
Let $R = I(X;A)_Q - \delta$. By the packing lemma (Lemma~10.1 in \cite{csiszar2011information}), there exists a function $\Enc \colon [2^{nR}] \to \SX^n$ such that 
\begin{itemize}
    \item $\Enc(m)$ is of type $Q_X$ for all $m \in [2^{nR}]$;
    \item $|\edited{\mathcal{T}}_{Q_{A|X}}(\Enc(m)) \cap \bigcup_{m' \not=m} \edited{\mathcal{T}}_{Q_{A|X}}(\Enc(m'))| \leq |\edited{\mathcal{T}}_{Q_{A|X}}(\Enc(m))|2^{-n\frac{\delta}{2}}$
\end{itemize}
(Note that the conditions of the packing lemma are satisfied, because $H(X)_Q \geq I(X;A)_Q$).

Now define $\Dec \colon \SA^n \to [2^{nR}]$ by $\Dec(a^n) = m$ if $m$ is the unique message such that~$a^n \in \edited{\mathcal{T}}_{Q_{A|X}}(\Enc(m))$, otherwise we set $\Dec(a^n) = 0$. For all $m \in [2^{nR}]$, we have 
\begin{equation}\label{for:name}
    \sum_{a^n :\; \Dec(a^n) = m} P^{\times n}_\s{A|X}(a^n | \Enc(m)) = |\Dec^{-1}(m)| 2^{-n(D(Q_\s{A|X} \| P_\s{A|X} | Q_\s{X}) + H(A|X)_Q)}
\end{equation}
by Lemma~2.6 in \cite{csiszar2011information} (using that $\Enc(m)$ are all of type $Q_\s{X}$). By definition of the decoder, we also have  
\begin{align}
    |\Dec^{-1}(m)| & \geq |\edited{\mathcal{T}}_{Q_{A|X}}(\Enc(m)) \setminus \bigcup_{m' \not=m} \edited{\mathcal{T}}_{Q_{A|X}}(\Enc(m'))|\\
    & \geq |\edited{\mathcal{T}}_{Q_{\s{A|X}}}(\Enc(m))|(1-2^{-n\frac{\delta}{2}}) \label{for:by_enc}\\
    &\geq (n + 1)^{-|\SA|}(1-2^{-n\frac{\delta}{2}}) 2^{nH(A|X)_Q} \label{for:by_lem}
\end{align}
where \eqref{for:by_enc} follows from the second property of $\Enc$ and \eqref{for:by_lem} follows from \eqref{eq:type3}. By combining \eqref{for:by_lem} with \eqref{for:name} we conclude that $(\Enc, \Dec)$ is a \[
   \left(n, 2^{n(I(X;A)_Q - \delta)}, (n + 1)^{-|\SA|}(1-2^{-n\frac{\delta}{2}}){2^{-nD(Q_\s{A|X} \| P_\s{A|X}|Q_\s{X})}}\right) 
\]
code. \edited{We finally choose $p(n) = 2(n+1)^{\abs{\SA}}$ which is a polynomial in $n$ depending only on $\card{\SA}$ and for $n\geq \frac{2}{\delta}$ we have  $p(n)^{-1} \leq (n + 1)^{-|\SA|}(1-2^{-n\frac{\delta}{2}})$. This concludes the existence of an \[\left(n, 2^{n(I(X;A)_Q - \delta)}, \frac{{2^{-nD(Q_\s{A|X} \| P_\s{A|X}|Q_\s{X})}}}{p(n)}\right)\] code.}
\end{proof}

\subsection{Proof of \cref{lem:list-decoding_upperbound}}\label{app:proof-list-decode-conv}
We first repeat the statement of \cref{lem:list-decoding_upperbound} for the reader's convenience.

\converse*


\begin{proof}
Let $(\Enc, \Dec)$ be an $(n, 2^{nR}, 2^{nR_L}, 2^{-n\zeta})$ list-decoding code, i.e., $\Enc: [2^{nR}] \to \SX^n$, $\Dec(a)$ is a subset of size $2^{nR_{L}}$ for all $a\in \SA^n$, and for all $m\in [2^{nR}]$,
\begin{align}
    P_{\s{A|X}}^{\times n}(\Dec^{-1}(m)|\Enc(m)) \geq 2^{-n\zeta}.
\end{align}
By pigeon hole principle, there exist a type $Q\in\mathcal{P}_n(\SX)$ and a subset $S$ of size $\frac{2^{nR}}{(n+1)^{|\SX|}}$ of $[2^{nR}]$ such that $\Enc(m) \in \mathcal{T}_Q$ for all $m\in S$. 
Furthermore, for any $m\in S$, we have

\begin{align}
    2^{-n\zeta} 
    &\leq P_{\s{A|X}}^{\times n}(\Dec^{-1}(m)|\Enc(m))\\
    &= \sum_{Q_{\s{A|X}}} P_{\s{A|X}}^{\times n}(\mathcal{T}_Q(\Enc(m))\cap\Dec^{-1}(m)|\Enc(m))\\
    &= \sum_{Q_{\s{A|X}}} 2^{-nD(Q_{\s{A|X}}\| P_{\s{A|X}}| Q)} \frac{|\mathcal{T}_Q(\Enc(m))\cap\Dec^{-1}(m)|}{\edited{\abs{\mathcal{T}_Q(\Enc(m))}}}.
\end{align}
\edited{
Averaging the above inequality over all $m\in S$, we obtain that
\begin{align}
    {2^{-n\zeta}}\leq  \frac{1}{|S|} \sum_{m\in S} \sum_{Q_{\s{A|X}}}2^{-nD(Q_{\s{A|X}}\| P_{\s{A|X}}| Q)}  \frac{|\mathcal{T}_Q(\Enc(m))\cap\Dec^{-1}(m)|}{\abs{\mathcal{T}_Q(\Enc(m))}}
\end{align}
Since there are at most $(n+1)^{|\SA||\SX|}$ conditional types $Q_{\s{A|X}}$, by applying the pigeon hole principle once again, we derive that there exists $Q_{\s{A|X}}$ such that
\begin{align}
    \frac{2^{-n\zeta}}{(n+1)^{|\SA||\SX|}}  \leq 2^{-nD(Q_{\s{A|X}}\| P_{\s{A|X}}| Q)}\frac{1}{|S|} \sum_{m\in S}  \frac{|\mathcal{T}_Q(\Enc(m))\cap\Dec^{-1}(m)|}{\abs{\mathcal{T}_Q(\Enc(m))}}
\end{align}
We now provide two upper bounds on the right hand side of the above inequality.} Note that 
\begin{align}
    \frac{1}{|S|} \sum_{m\in S}  \frac{|\mathcal{T}_{Q_{\s{A|X}}}(\Enc(m))\cap\Dec^{-1}(m)|}{\abs{\mathcal{T}_{Q_{\s{A|X}}}(\Enc(m))}} \leq 1
\end{align}

We also have
    \begin{align}
        \frac{1}{|S|} \sum_{m\in S}  \frac{|\mathcal{T}_{Q_{\s{A|X}}}(\Enc(m))\cap\Dec^{-1}(m)|}{\abs{\mathcal{T}_{Q_{\s{A|X}}}(\Enc(m))}} 
        &\stackrel{(a)}{\leq} \frac{(n+1)^{|\SA|}}{|S|2^{nH(Y|X)_Q}}  \sum_{m\in S}  |\mathcal{T}_{Q_{\s{A|X}}}(\Enc(m))\cap\Dec^{-1}(m)|\\
        &\stackrel{(b)}{\leq }\frac{(n+1)^{|\SA| + |\SX|}}{2^{nR}2^{nH(Y|X)_Q}}  \sum_{m\in S}  |\mathcal{T}_{Q_{\s{A|X}}}(\Enc(m))\cap\Dec^{-1}(m)|\\
        &\stackrel{(c)}{\leq }\frac{(n+1)^{|\SA| + |\SX|}}{2^{nR}2^{nH(Y|X)_Q}}  |\mathcal{T}_{Q_{\s{A}}}| 2^{nR_L}\\
        &\stackrel{(d)}{\leq } (n+1)^{|\SA| + |\SX|} 2^{n(-R+R_L-H(A|X)_Q + H(A)_Q)}\\
        &= (n+1)^{|\SA| + |\SX|} 2^{n(-R+R_L+I(A;X)_Q)}
    \end{align}
where $(a)$ follows from \eqref{eq:type3}, $(b)$ follows since $|S| \geq \frac{2^{nR}}{(n+1)^{|\SX|}}$, $(c)$ follows since $\mathcal{T}_{Q_{\s{A|X}}}(\Enc(m))\cap\Dec^{-1}(m) \subset \mathcal{T}_{Q_{\s{A}}}$ and every element of $\mathcal{T}_{Q_{\s{A}}}$ appears in $\mathcal{T}_{Q_{\s{A|X}}}(\Enc(m))\cap\Dec^{-1}(m)$ for at most $2^{nR_L}$ $m$, and $(d)$ follows from \eqref{eq:type2}.
Combining above inequalities, we obtain that
\begin{align}
    \frac{2^{-n\zeta}}{(n+1)^{|\SA||\SX|}}  \leq 2^{-nD(Q_{\s{A|X}}\| P_{\s{A|X}}| Q_{\s{X}})} \min\left(1,  (n+1)^{|\SA| + |\SX|} 2^{n(-R+R_L+I(A;X)_Q)}\right)
\end{align}
Taking $\log$ from both sides of the above inequality results in the desired bound.
\end{proof}

\section{NPA hierarchy for LSSD}\label{section Appendix NPA}

In this appendix, we describe the NPA hierarchy adapted to the LSSD setting.
For more details on the original NPA hierarchy, see \cite{WatrousNPA,NPA2008}.

Recall from \cref{sec:LSSD} that LSSD game is played by two collaborating players, Alice and Bob, who receive inputs $a \in \SA, b \in \SB$ and must produce outputs $x_A, x_B \in \SX$, respectively (see \cref{fig:LSSD_schematic}).
Here we will consider the case when the game is defined by a joint probability distribution\footnote{More generally, $P_{\s{XAB}}$ can be replaced by a quantum state $\rho_{\s{XAB}}$, see \cref{sec:quantum resources}.} $P_{\s{XAB}}$ that describes how their inputs $a,b$ are correlated with an external variable $x \in \SX$ which they need to guess.
The LSSD task is to produce outputs $x_A$ and $x_B$ such that $x_A = x_B = x$.
We can equivalently describe this by the predicate $V(x,x_A,x_B) := \delta[x_A = x_B = x]$.

Depending on the physical scenario considered, Alice and Bob might share some resource that allows them to correlate their outputs.
For the sake of generality, let
$\mathcal{C} \subset \R^{\SX \times \SX \times \SA \times \SB}$
denote an arbitrary set of correlations they can utilize.
We will treat each element $Q \in \mathcal{C}$ as a vector
$\R^{\SX \times \SX \times \SA \times \SB}$
and write its entries as $Q(x_A,x_B|a,b)$
where $x_A, x_B \in \SX, a \in \SA, b \in \SB$.
This notation emphasizes that $Q$ can also be interpreted as a stochastic matrix.
Indeed, it will always be the case that
$Q(x_A,x_B|a,b) \geq 0$ and
$\sum_{x_A,x_B \in \SX} Q(x_A,x_B|a,b) = 1$
for all $a \in \SA$ and $b \in \SB$.
For example, when dealing with quantum strategies, $Q$ has the following form, see \cref{eq:quantum value}:
\begin{align}
    Q(x_A,x_B|a,b) = \bra{\psi} \of[\big]{M_{x_A}(a) \x N_{x_B}(b)} \ket{\psi},
    \qquad
    \forall a \in \SA, b \in \SB, x_A, x_B \in \SX
    \label{eq:quantum correlations}
\end{align}
for some finite-dimensional bipartite Hilbert space $\HH = \HH_{\s{A}} \x \HH_{\s{B}}$, pure state $\ket{\psi} \in \HH$, and collections of projective measurements $\set{M_x(a) : x \in \SX} \in \PM{\HH_{\s{A}}}$ and $\set{N_x(b) : x \in \SX} \in \PM{\HH_{\s{B}}}$ on $\HH_{\s{A}}$ and $\HH_{\s{B}}$, respectively.\footnote{One can assume without loss of generality that the shared quantum state is pure and both measurements are orthogonal.}
We will denote the set of all \emph{quantum correlations} by $\mathcal{Q} \subset \R^{\SX \times \SX \times \SA \times \SB}$.

The winning probability of the LSSD game defined by a distribution $P_{\s{XAB}}$ and played with assistance of correlations $\mathcal{C}$ is given by
\begin{equation}\label{eq w general for classical LSSD}
    \omega_{\mathcal{C}} (\s{X}|\s{A};\s{B})_{P_{\s{XAB}}}
    = \sup_{Q \in \mathcal{C}}
      \sum_{\substack{x,x_A,x_B \in \SX \\ a \in \SA, b \in \SB}}
      P_{\s{XAB}}(x,a,b) V(x,x_A,x_B) Q(x_A,x_B|a,b).
\end{equation}
If we let
\begin{equation}
    K(x_A,x_B,a,b) := \sum_{x \in \SX} P_{\s{XAB}}(x,a,b) V(x,x_A,x_B),
    \label{eq:K}
\end{equation}
we can rewrite \cref{eq w general for classical LSSD} as
\begin{equation}
     \omega_{\mathcal{C}}(\s{X}|\s{A};\s{B})_{P_{\s{XAB}}}
     = \sup_{Q \in \mathcal{C}} \langle K, Q \rangle
     \label{eq:omegaC}
\end{equation}
where we treat both $K$ and $Q$ as vectors in
$\R^{\SX \times \SX \times \SA \times \SB}$
and
\begin{equation}
    \langle K, Q \rangle := \sum_{\substack{x_A, x_B \in \SX \\ a \in \SA, b \in \SB}} K(x_A,x_B,a,b) Q(x_A,x_B|a,b).
\end{equation}

To define a \emph{commuting measurement strategy}, we relax the requirement that the finite-dimensional Hilbert space $\HH$ has a tensor product structure.
We consider a pure\footnote{The assumption that the state is pure and that the measurements are projective is without loss of generality.} state $\ket{\psi} \in \HH$ and two collections of measurements on the whole of $\HH$: one measurement $\set{M_x(a) : x \in \SX}$ for each of Alice's inputs $a \in \SA$ and one measurement $\set{N_x(b) : x \in \SX}$ for each of Bob's inputs $b \in \SB$.
We require that these are orthogonal projective measurements, and that Alice's and Bob's operators pair-wise commute.
Namely, for all $a \in \SA$, $b \in \SB$, $x_A \neq x_A' \in \SX$ and $x_B \neq x_B' \in \SX$ the measurement operators should satisfy
\begin{enumerate}
    \item $M_{x_A}(a)^{\dagger}=M_{x_A}(a)$ and $N_{x_B}(b)^{\dagger}=N_{x_B}(b)$,     
    \item $M_{x_A}(a)M_{x_A'}(a)=0$ and $N_{x_B}(b)N_{{x_B'}}(b)=0$,
    \item $\sum_{x_A\in\SX}M_{x_A}(a) = \Id$ and $\sum_{x_B\in\SX}N_{x_B}(b) = \Id$,
    \item $[M_{x_A}(a),N_{x_B}(b)]=0$.
\end{enumerate}
The resulting correlation $Q$ is then defined as
\begin{equation}\label{eq def behaviour}
    Q(x_A,x_B|a,b) := \bra{\psi} M_{x_A}(a) N_{x_B}(b) \ket{\psi},
    \qquad
    \forall a \in \SA, b \in \SB, x_A, x_B \in \SX.
\end{equation}
Compared to \cref{eq:quantum correlations}, here the two commuting sets of measurements are global since the underlying space $\HH$ has no tensor product structure.

We will now construct a positive semidefinite matrix $G$ whose entries contain the values $Q(x_A,x_B|a,b)$ from \cref{eq def behaviour}, and then impose linear constraints on $G$ that capture the above conditions on the measurement operators $M_{x_A}(a)$ and $N_{x_B}(b)$.
The rows and columns of $G$ will be indexed by\footnote{We assume the sets $\SA$ and $\SB$ are disjoint so that the disjoint union makes sense here.}
\begin{equation}
    \Sigma_1 := \set{\varepsilon} \sqcup \Sigma_A \sqcup \Sigma_B
    \qquad \text{where} \qquad
    \Sigma_A := \SX \times \SA, \quad
    \Sigma_B := \SX \times \SB.
\end{equation}
For each $s \in \Sigma_1$, define a vector in $\HH$ as follows:
\begin{equation}\label{eq:psi(s)}
    \ket{\psi(s)} :=
    \begin{cases}
        \ket{\psi} & \text{if $s = \varepsilon$}, \\
        M_x(a) \ket{\psi} & \text{if $s = (x,a) \in \Sigma_A$}, \\
        N_x(b) \ket{\psi} & \text{if $s = (x,b) \in \Sigma_B$},
    \end{cases}
\end{equation}
and let $G \in \R^{\Sigma_1 \times \Sigma_1}$ be the Gram matrix of these vectors:
\begin{equation}
    G_{s,t} := \braket{\psi(s)}{\psi(t)},
    \qquad \forall s,t \in \Sigma_1.
    \label{eq:Gst}
\end{equation}
Since $G$ is a Gram matrix, it is clearly positive semidefinite:
\begin{equation}
    G \succeq 0.
\end{equation}
Notice that $G$ contains all of the values $Q(x_A,x_B|a,b)$ from \cref{eq def behaviour}, as well as some additional values such as $\braket{\psi}{\psi}$, $\bra{\psi} M_x(a) \ket{\psi}$, and others.

Because of the various relations among the measurement operators $M_{x_A}(a)$ and $N_{x_B}(b)$ listed earlier, the Gram matrix $G$ is subject to the following linear constraints:
\begin{enumerate}
    \item Since $\ket{\psi}$ is a normalized state, $\braket{\psi}{\psi} = 1$ and thus
    \begin{equation}
        G_{\varepsilon,\varepsilon} = 1.
        \label{eq:G1}
    \end{equation}
    \item Due to the completeness relations $\sum_{x \in \SX} M_x(a) = \Id = \sum_{x \in \SX} N_x(b)$,
    we have that for any vector $\ket{v} \in \HH$,
    $\sum_{x \in \SX} \bra{\psi} M_x(a) \ket{v} = \braket{\psi}{v}$ and
    $\sum_{x \in \SX} \bra{v} M_x(a) \ket{\psi} = \braket{v}{\psi}$,
    and similarly for $N_x(b)$.
    Letting $\ket{v} = \ket{\psi(s)}$ for some $s \in \Sigma_1$, this translates to
    \begin{align}
        \sum_{x \in \SX} G_{(x,a),s} &= G_{\varepsilon,s}, &
        \sum_{x \in \SX} G_{s,(x,a)} &= G_{s,\varepsilon}, &
        \forall a \in \SA, s \in \Sigma_1,
        \label{eq:G21} \\
        \sum_{x \in \SX} G_{(x,b),s} &= G_{\varepsilon,s}, &
        \sum_{x \in \SX} G_{s,(x,b)} &= G_{s,\varepsilon}, &
        \forall b \in \SB, s \in \Sigma_1.
        \label{eq:G22}
    \end{align}
    \item Since within each measurement the projectors are orthogonal, we also have $\bra{\psi} M_x(a) M_{x'}(a) \ket{\psi} = 0 = \bra{\psi} N_x(b) N_{x'}(b) \ket{\psi}$ and thus
    \begin{align}
        G_{(x,a),(x',a)} &= 0, \qquad \forall x \neq x' \in \SX, a \in \SA,
        \label{eq:G31} \\
        G_{(x,b),(x',b)} &= 0, \qquad \forall x \neq x' \in \SX, b \in \SB.
        \label{eq:G32}
    \end{align}
    \item Since $M_x(a)$ are projectors,
    $\bra{\psi} M_x(a) M_x(a) \ket{\psi} = \bra{\psi} M_x(a) \ket{\psi}$ and likewise for $N_x(b)$, so
    \begin{align}
        G_{(x,a),(x,a)} &= G_{(x,a),\varepsilon} = G_{\varepsilon,(x,a)}, \qquad \forall x \in \SX, a \in \SA,
        \label{eq:G41} \\
        G_{(x,b),(x,b)} &= G_{(x,b),\varepsilon} = G_{\varepsilon,(x,b)}, \qquad \forall x \in \SX, b \in \SB.
        \label{eq:G42}
    \end{align}
    \item Since the two sets of projectors commute,
    $\bra{\psi} M_{x_A}(a) N_{x_B}(b) \ket{\psi} =
     \bra{\psi} N_{x_B}(b) M_{x_A}(a) \ket{\psi}$,
    we have
    \begin{align}
        G_{(x_A,a),(x_B,b)} = G_{(x_B,b),(x_A,a)}, \qquad \forall x_A, x_B \in \SX, a \in \SA, b \in \SB.
        \label{eq:G5}
    \end{align}
\end{enumerate}

Let $\mathcal{Q}_1 \subset \R^{\SX \times \SX \times \SA \times \SB}$ denote the set of all correlations $Q$ such that there exists a matrix $G \in \R^{\Sigma_1 \times \Sigma_1}$ which satisfies
\begin{equation}
    G_{(x_A,a),(x_B,b)} = Q(x_A,x_B|a,b), \qquad \forall x_A,x_B \in \SX, a \in \SA, b \in \SB,
    \label{eq:G and Q}
\end{equation}
as well as $G \succeq 0$ and the linear constraints in \cref{eq:G1,eq:G21,eq:G22,eq:G31,eq:G32,eq:G41,eq:G42,eq:G5}.
Note that deciding the membership of $Q$ in $\mathcal{Q}_1$ is a semidefinite feasibility problem -- it requires finding a positive semidefinite matrix $G \succeq 0$ subject to linear constraints.

Since the local measurement operators $M_{x_A}(a) \x \Id$ and $\Id \x N_{x_B}(b)$ commute, the original set of quantum correlations $\mathcal{Q}$ defined by \cref{eq:quantum correlations} satisfies $\mathcal{Q} \subseteq \mathcal{Q}_1$.
Therefore, based on \cref{eq:omegaC},
\begin{equation}
\label{eq upperbound 1st level NPA}
    \omegaq(\s{X|A;B})_{P_{\s{XAB}}}
      := \sup_{Q \in \mathcal{Q}}   \expectedbraket{K,Q}
    \leq \sup_{Q \in \mathcal{Q}_1} \expectedbraket{K,Q}
    =: \omega_{\mathrm{q}_1}(\s{X|A;B})_{P_{\s{XAB}}},
\end{equation}
where the vector $K \in \R^{\SX \times \SX \times \SA \times \SB}$ defined in \cref{eq:K} specifies the LSSD game in question.
The value $\omega_{\mathrm{q}_1}(\s{X|A;B})_{P_{\s{XAB}}}$ corresponds to the \emph{first level} of the NPA hierarchy.
We can compute it by a semidefinite program as follows.
Define a symmetric matrix $H \in \R^{\Sigma_1 \times \Sigma_1}$ with entries
\begin{equation}
    H_{(x_A,a),(x_B,b)}
 := H_{(x_B,b),(x_A,a)}
 := \frac{1}{2} K(x_A,x_B,a,b), \qquad
  \forall x_A,x_B \in \SX, a \in \SA, b \in \SB
\end{equation}
and $0$ otherwise.
Then $\expectedbraket{K,Q} = \tr(HG)$ is a linear function of $G$, so we can compute the value of $\omega_{\mathrm{q}_1}(\s{X|A;B})_{P_{\s{XAB}}}$ via a semidefinite program that maximizes $\tr(HG)$ over all positive semidefinite matrices $G$ satisfying the conditions listed above.

The \emph{second level} of the NPA hierarchy is obtained by a similar SDP that involves a larger \emph{extended} Gram matrix $G$ whose rows and columns are indexed by\footnote{We omit $\Sigma_B \times \Sigma_A$ since Alice and Bob's operators commute.}
\begin{equation}
    \Sigma_2 := \Sigma_1
    \sqcup (\Sigma_A \times \Sigma_A)
    \sqcup (\Sigma_A \times \Sigma_B)
    \sqcup (\Sigma_B \times \Sigma_B).
\end{equation}
We extend the original set of vectors $\ket{\psi(s)}$ from \cref{eq:psi(s)} by defining new vectors for the remaining elements $s \in \Sigma_2 \setminus \Sigma_1$ as follows:
\begin{equation}
    \ket{\psi(s)} :=
    \begin{cases}
        M_{x}(a) M_{x'}(a') \ket{\psi} & \text{if $s = ((x,a),(x',a')) \in \Sigma_A \times \Sigma_A$}, \\
        M_{x}(a) N_{x'}(b') \ket{\psi} & \text{if $s = ((x,a),(x',b')) \in \Sigma_A \times \Sigma_B$}, \\
        N_{x}(b) N_{x'}(b') \ket{\psi} & \text{if $s = ((x,b),(x',b')) \in \Sigma_B \times \Sigma_B$}.
    \end{cases}
\end{equation}
As before in \cref{eq:Gst}, the entries of the extended $G$ are also given by inner products $\braket{\psi(s)}{\psi(t)}$ for all $s,t \in \Sigma_2$, and we impose additional linear constraints on them similar to those in
\cref{eq:G1,eq:G21,eq:G22,eq:G31,eq:G32,eq:G41,eq:G42,eq:G5}
to capture the fact that Alice and Bob's operators describe mutually commuting projective measurements.

We denote by $\mathcal{Q}_2 \subset \R^{\SX \times \SX \times \SA \times \SB}$ the set of all correlations $Q$ for which there exists an extended Gram matrix $G \in \R^{\Sigma_2 \times \Sigma_2}$ that agrees with $Q$ on $\Sigma_1$, see \cref{eq:G and Q}, and which satisfies the linear constraints for the second level of the NPA hierarchy.
Note that $\mathcal{Q}_2 \subseteq \mathcal{Q}_1$ since the second level imposes additional constraints compared to the first level.
Intuitively, the $\ell$-\textit{th} level of the NPA hierarchy is obtained by considering the Gram matrix of the vectors of the level $\ell-1$ plus new vectors obtained from products of $\ell$ projectors, see \cite{WatrousNPA,NPA2008} for a more formal description.

For our analysis in \cref{sec:parallelrepetitionBSCgame}, we consider the SDP for an intermediate level of the NPA hierarchy between $\mathcal{Q}_1$ and $\mathcal{Q}_2$, where $G$ is the Gram matrix for the set of vectors labelled by
\begin{equation}
    \Sigma_{1+MN} := \Sigma_1
    \sqcup (\Sigma_A \times \Sigma_B).
\end{equation}
We define $\mathcal{Q}_{1+MN}$ analogously to $\mathcal{Q}_1$ and $\mathcal{Q}_2$.
Since $\Sigma_1 \subset \Sigma_{1+MN} \subset \Sigma_2$, we have
$\mathcal{Q}_1 \supseteq \mathcal{Q}_{1+MN} \supseteq \mathcal{Q}_2 \supseteq \mathcal{Q}$
and therefore
\begin{equation}
    \sup_{Q\in\mathcal{Q}_{1+MN}}\expectedbraket{K,Q}
    =: \omega_{\mathrm{q}_{1+MN}}(\s{X|A;B})_{P_{\s{XAB}}}
    \geq \omega_{\mathrm{q}_{2}}(\s{X|A;B})_{P_{\s{XAB}}}
    \geq \omegaq(\s{X|A;B})_{P_{\s{XAB}}}.
\end{equation}

\section{Three-party binary LSSD}\label{sec:binLSSD}

In this appendix, we show (partially numerically) that there exist no probability distribution $P_\s{XABC}$, where $x,a,b$ and $c$ are all binary, such that the corresponding LSSD game can be won with higher probability using no-signalling strategies than with classical strategies. We get to this conclusion by showing that none of the no-signalling correlations at the extreme points of the no-signalling polytope can ever perform better than classical strategies. 

In the next \edited{subsection} we discuss some results on optimal classical and no-signalling strategies. These results allow us to discard some no-signalling strategies of which we know that they cannot perform better than classical strategies. For the strategies that are left, we turn to linear programming to numerically show that they also cannot perform better than classical.

\subsection{Some results on optimal strategies}

\paragraph{Multi-partite no-signalling correlations}
Up until now, we have only looked at correlations between two parties. However, the concepts of locality and no-signalling can be extended to any finite number of parties. We show how to do this extension for no-signalling correlations. 

In the case of more than two parties, a correlation is no-signalling if no subset of parties~$J$ can collectively signal to the rest of the parties $I$. So the output of the parties indexed by $I$ cannot depend on the input to the parties indexed by $J$.
\begin{definition}[Definition 11 in \cite{buhrman_et_al}]\label{def:no-signalling}
An m-partite correlation $Q_{\s{X}_1\cdots\s{X}_m|\s{A}_1\cdots\s{A}_m}$ on  $\SX_1 \times \cdots \times \SX_m\times\SA_1 \times \cdots\times \SA_m$ is called no-signalling if for any index set $I \subset \{1, \dots, m\}$ and its complement $J = \{1, \dots, m\} \setminus I$ it holds that 
\begin{align} \label{for:no-signalling_cond}
    \sum_{x_J \in \SX_J} Q(x_I, x_J \big| a_I, a_J) = \sum_{x_J \in \SX_J} Q(x_I, x_J \big| a_I, a_J') ,
\end{align}
for all $x_I \in \SX_I, a_I \in \SA_I$ and $a_J,a_J' \in \SA_J$.
\end{definition}

The next lemma states that we can loosen the constraints a little and still be left with an equivalent definition of no-signalling. Specifically, it states that it is sufficient to require that any single party cannot signal to the rest.

\begin{lemma}\label{lem:no-signal_multiple_parties}
Suppose $Q$ is a m-partite correlation satisfying \cref{for:no-signalling_cond} for all index sets $I$ such that their complements $J$ have cardinality $1$ and for all $x_I \in \SX_I, a_I \in \SA_I$ and $a_J,a_J' \in \SA_J$. Then $Q$ is a no-signalling correlation. 
\end{lemma}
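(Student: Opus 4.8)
The plan is to prove the claim by induction on $s = \abs{J}$, where $J = \{1,\dots,m\}\setminus I$, peeling off one party at a time. The case $s = 0$ is vacuous and the case $s = 1$ is precisely the hypothesis of the lemma, so both serve as the base of the induction. For the inductive step, fix $I$ with $\abs{J} = s \ge 2$ and assume \eqref{for:no-signalling_cond} holds for every index set whose complement has size strictly less than $s$.

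Pick an arbitrary $j_0 \in J$ and write $J = J' \sqcup \{j_0\}$ with $\abs{J'} = s - 1 \ge 1$. First I would apply the inductive hypothesis to the index set $I \cup \{j_0\}$, whose complement is $J'$ of size $s-1$; summing the resulting identity over the output coordinate $x_{j_0}$ (which commutes with the sum over the remaining outputs) shows that $\sum_{x_J} Q(x_I, x_J \mid a_I, a_{j_0}, a_{J'})$ is independent of $a_{J'}$ for every fixed $a_{j_0}$. Next I would apply the size-one case of the hypothesis to the index set $I \cup J'$, whose complement is $\{j_0\}$; summing that identity over $x_{J'}$ shows that $\sum_{x_J} Q(x_I, x_J \mid a_I, a_{J'}, a_{j_0})$ is independent of $a_{j_0}$ for every fixed $a_{J'}$. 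Chaining the two invariances --- vary $a_{J'}$ first, then $a_{j_0}$ --- gives that $\sum_{x_J} Q(x_I, x_J \mid a_I, a_J)$ does not depend on $a_J$ at all, which is exactly \eqref{for:no-signalling_cond} for $I$, completing the induction.

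The argument involves no substantive difficulty; the only thing requiring care is the bookkeeping of which $x$-coordinates are marginalized and which input coordinates are held fixed in each application of the hypothesis, together with the (immediate) observation that summing a no-signalling identity for a larger index set over the extra output coordinates yields the no-signalling identity for the smaller one. The same template works verbatim if the alphabets $\SX_i$ and $\SA_i$ vary with $i$, so I would state and prove it at that level of generality.
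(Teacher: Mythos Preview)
Your proof is correct and follows essentially the same approach as the paper: induction on $\abs{J}$, splitting off a single index $j_0 \in J$ and combining the size-one hypothesis (for the complement $\{j_0\}$) with the inductive hypothesis (for the complement $J' = J \setminus \{j_0\}$). The only cosmetic difference is the order in which you chain the two invariances (you vary $a_{J'}$ first and then $a_{j_0}$, while the paper does the reverse), which is immaterial.
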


\begin{proof}
We prove this lemma by induction on the cardinality of the complement $J$ of an index set $I$. If $|J| = 1$, condition \eqref{for:no-signalling_cond} holds by assumption. Now suppose $|J| = n$, and let~$x_I \in \SX_I, a_I \in \SA_I$ and $a_J,a_J' \in \SA_J$. Take $j \in J$ and let $J' = J\setminus \{j\}$. We now find 
\begin{align*}
    \sum_{x_J \in \SX_J} Q(x_I, x_J \big| a_I, a_J) &= \sum_{x_{J'} \in \SX_{J'}} \sum_{x_j \in \SX_j} Q(x_I, x_{J'}, x_j \big| a_I, a_{J'}, a_j)\\
    &\overset{(i)}{=} \sum_{x_{J'} \in \SX_{J'}} \sum_{x_j \in \SX_j} Q(x_I, x_{J'}, x_j \big| a_I, a_{J'}, a_j')\\
    &\overset{(ii)}{=} \sum_{x_{J'} \in \SX_{J'}} \sum_{x_j \in \SX_j} Q(x_I, x_{J'}, x_j \big| a_I, a_{J'}', a_j')\\
    &=\sum_{x_J \in \SX_J} Q(x_I, x_J \big| a_I, a_J'),
\end{align*}
where (i) follows by assumption on $Q$ and (ii) by induction (we are free to exchange the sums).
\end{proof}

This first lemma is an extension of the classical part of Lemma~3.2 in the paper by Majenz et al.~\cite{majenz2021local}.
It gives a list of all deterministic strategies (or more accurately: winning probability thereof) we need to consider in finding the optimal classical winning probability. The proof of this lemma relies on the relatively simple observation that the players should have equal output sets (sets consisting of all values they could possibly output according to their strategy).

\begin{lemma}\label{lem:det_strat_reduction}
Let $P_\s{XABC}$ be a probability distribution over $\SX \times \SA \times \SB \times \mathscr{C}$ with $\SA = \SB = \mathscr{C} = \{0,1\}$ and $\SX = [d]$, $d \geq 2$. The classical winning probability for $P_\s{XABC}$ is given by
\begin{align} \label{max_det}
    \omegac(\s{X}\big|\s{A};\s{B};\s{C})_P = \max_{\substack{s,t\\s\not=t}}\max \left\{ \begin{matrix}
        P_\s{X}(s), \\
        P_\s{XABC}(s,0,0,0) + P_\s{XABC}(t,1,1,1), \\
        P_\s{XABC}(s,1,0,0) + P_\s{XABC}(t,0,1,1), \\
        P_\s{XABC}(s,0,1,0) + P_\s{XABC}(t,1,0,1), \\
        P_\s{XABC}(s,0,0,1) + P_\s{XABC}(t,1,1,0)
    \end{matrix}
    \right\}.
\end{align}
\end{lemma}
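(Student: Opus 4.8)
The plan is to enumerate the deterministic classical strategies and show that their winning probabilities form exactly five families, mirroring (and extending to three players) the classical part of Lemma~3.2 of \cite{majenz2021local}. As explained in \cref{sec:Classical} (and its obvious three-party analogue), it suffices to optimise over deterministic strategies, so a strategy is a triple of functions $f, g, h \colon \set{0,1} \to [d]$ for Alice, Bob and Charlie. Writing $A_w := f^{-1}(w)$, $B_w := g^{-1}(w)$, $C_w := h^{-1}(w)$, and, for $S, T, U \subseteq \set{0,1}$, $P_\s{XABC}(w,S,T,U) := \sum_{a\in S, b \in T, c \in U} P_\s{XABC}(w,a,b,c)$, the winning probability of $(f,g,h)$ is $\sum_{w \in [d]} P_\s{XABC}(w, A_w, B_w, C_w)$. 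The only values $w$ that contribute are those in $S' := \im f \cap \im g \cap \im h$, and since each image has at most two elements (the inputs are binary), $\abs{S'} \le 2$.

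I would then split on $\abs{S'}$. If $S' = \0$ the winning probability is $0$, which cannot be optimal because $d \ge 2$ already permits a constant strategy with winning probability $\max_s P_\s{X}(s) > 0$; discard this case. If $S' = \set{s}$, only $w=s$ contributes, so the winning probability is $P_\s{XABC}(s, A_s, B_s, C_s) \le P_\s{XABC}(s, \set{0,1}, \set{0,1}, \set{0,1}) = P_\s{X}(s)$, with equality attained by the constant strategy $f \equiv g \equiv h \equiv s$; this yields the term $P_\s{X}(s)$. Finally, if $S' = \set{s,t}$ with $s \ne t$, then each of $\im f, \im g, \im h$ has size $2$ and contains $S'$, hence equals $\set{s,t}$ — this is the ``equal output sets'' observation. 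Each of $f,g,h$ is then injective on $\set{0,1}$, so $\set{A_s, A_t} = \set{\set{0},\set{1}}$ and likewise for $B$ and $C$; after possibly interchanging the labels $s$ and $t$ there are bits $a,b,c$ with $A_s = \set{a}$, $B_s = \set{b}$, $C_s = \set{c}$ (and $A_t = \set{\bar a}$, $B_t=\set{\bar b}$, $C_t=\set{\bar c}$), and the winning probability equals exactly $P_\s{XABC}(s,a,b,c) + P_\s{XABC}(t,\bar a,\bar b,\bar c)$.

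It remains to match this with \eqref{max_det} and check achievability. As $(a,b,c)$ runs over $\set{0,1}^3$, flipping all three bits interchanges the roles of $s$ and $t$, so — maximising over all ordered pairs $s\ne t$ — the expression $P_\s{XABC}(s,a,b,c)+P_\s{XABC}(t,\bar a,\bar b,\bar c)$ realises precisely the four quantities in \eqref{max_det} other than $P_\s{X}(s)$, one for each complementary pair $\set{000,111}$, $\set{100,011}$, $\set{010,101}$, $\set{001,110}$. Conversely, the strategy $f(a)=s$, $f(\bar a)=t$ (and similarly for $g,h$) has winning probability \emph{exactly} $P_\s{XABC}(s,a,b,c)+P_\s{XABC}(t,\bar a,\bar b,\bar c)$: a routine check over the eight triples shows that the only inputs on which all three players output the same value are $(a,b,c)$ (value $s$) and $(\bar a,\bar b,\bar c)$ (value $t$). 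Hence $\omegac(\s{X}\big|\s{A};\s{B};\s{C})_P$ is the maximum of the five quantities. The only delicate point — and it is the main obstacle — is the bookkeeping in the $\abs{S'}=2$ case: verifying that any non-constant strategy whose three output sets do not all coincide is dominated by a constant strategy (because then $\abs{S'}\le 1$), and that the achievability strategies produce no spurious winning triples. Both are finite checks over $\set{0,1}$ and $\set{0,1}^3$.
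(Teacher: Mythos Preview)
Your proposal is correct and follows essentially the same approach as the paper. Both proofs reduce to deterministic strategies, then use the key observation that only values in $\im f \cap \im g \cap \im h$ can contribute, forcing all three images to coincide (either a singleton $\set{s}$ or a pair $\set{s,t}$); the paper phrases this as a strategy-modification argument (``there is always an optimal strategy with equal output sets''), while you phrase it as a direct case split on $\abs{S'}$, but the content is identical and the resulting enumeration into one constant case plus four bijective cases is the same.
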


\begin{proof}
First, remember that we only have to consider deterministic strategies (see \cref{sec:Classical}). Any deterministic strategy can be represented by three functions $f,g,h \colon \{0,1\} \to \SX$. Given such a strategy, the probability of winning is given by 
\begin{align}
    &\sum_{x,a,b,c} P_\s{XABC}(x,a,b,c) \delta[f(a) = g(b) = h(c) = x] = &\sum_{a,b,c} P_\s{XABC}(f(a), a,b,c) \delta[f(a) = g(b)=h(c)].\label{for:three_player_winprob}
\end{align}

Notice that there is always an optimal strategy such that $\{f(0),f(1)\} = \{g(0),g(1)\} = \{h(0),h(1)\}$. Suppose, for example, that for some $a^*$, we have that $f(a^*) \notin \{g(0), g(1)\}$. It follows that $\delta[f(a^*) = g(b) = h(c)] = 0$ for all $b,c$. Changing Alice's output on input~$a^*$, such that $f(a^*) \in \{g(0), g(1)\}$, causes $\delta[f(a^*) = g(b) = h(c)]$ to possibly be equal to $1$ for some $b,c$. This change introduces non-negative terms in the sum of \cref{for:three_player_winprob}, while not losing any others, thereby increasing the winning probability.

There are $5$ possible ways in which we have $\{f(0),f(1)\} = \{g(0),g(1)\} = \{h(0),h(1)\}$. The first is that all players ignore their input and always output some fixed $s$. In this case, the probability of winning is given by 
\begin{align*}
    \sum_{a,b,c} P_\s{XABC}(s,a,b,c) = P_X(s) \, ,
\end{align*}
yielding the first term in \cref{max_det}.
The other 4 possibilities are when they all take their input into account:
\begin{itemize}
    \item $f(0) = g(0) = h(0)$ and $f(1) = g(1) = h(1)$ or,
    \item $f(1) = g(0) = h(0)$ and $f(0) = g(1) = h(1)$ or,
    \item $f(0) = g(1) = h(0)$ and $f(1) = g(0) = h(1)$ or,
    \item $f(0) = g(0) = h(1)$ and $f(1) = g(1) = h(0)$.
\end{itemize}
defining $f(0) =: s$ and $f(1) =: t$, the winning probability in each of these cases is equal to a term in \cref{max_det}.
\end{proof}

Whereas the previous lemma reduced the number of interesting deterministic strategies, the next lemma and its corollary will do so for no-signalling strategies. 

\begin{lemma}\label{lem:ns_strat_reduction}
Let $P$ be a probability distribution over $\SX \times \SA_1 \times \cdots \times\SA_m$ with $|\SX| = d$ and $d \geq 2$. Let $Q$ be a no-signalling strategy for which 
\[
    Q(x,\dots, x \big| a_1, \dots, a_m) \leq \frac{1}{d},
\]
holds for all $x \in \SX$ and $a_1 \in \SA_1, \dots, a_m \in \SA_m$. Then its winning probability in the LSSD game defined by $P$ is at most the best classical winning probability:
\[
    \sum_{\substack{x\in \SX\\a_1 \in \SA_1, \dots, a_m \in \SA_m}} P(x,a_1, \dots, a_m)Q(x,\dots, x \big| a_1, \dots, a_m) \leq \omegac(\s{X} \big| \s{A}_1 ; \dots; \s{A}_m)_P .
\]
\end{lemma}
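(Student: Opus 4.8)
The plan is to bound the no-signalling winning probability by $1/d$ using only the pointwise hypothesis on $Q$, and then to observe that $1/d$ is already beaten (or matched) by a trivial constant classical strategy. Concretely, I would first write out the winning probability of $Q$ as the sum
\[
    \sum_{\substack{x\in \SX\\ a_1 \in \SA_1, \dots, a_m \in \SA_m}} P(x,a_1, \dots, a_m)\, Q(x,\dots, x \mid a_1, \dots, a_m),
\]
and substitute the assumption $Q(x,\dots,x\mid a_1,\dots,a_m)\le 1/d$ into each summand. Since the remaining factors $P(x,a_1,\dots,a_m)$ are nonnegative and sum to $1$, this immediately yields that the winning probability of $Q$ is at most $1/d$.

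Next I would exhibit a classical strategy achieving at least $1/d$. Take the deterministic strategy in which every player ignores their input and outputs the same fixed value $s\in\SX$; this wins precisely when the referee's value equals $s$, i.e.\ with probability $P_\s{X}(s)$. Choosing $s$ to maximize $P_\s{X}(s)$ and using that $\sum_{s\in\SX}P_\s{X}(s)=1$ with $|\SX|=d$, a pigeonhole argument gives $\max_{s}P_\s{X}(s)\ge 1/d$. Hence $\omegac(\s{X}\mid\s{A}_1;\dots;\s{A}_m)_P \ge \max_s P_\s{X}(s)\ge 1/d$.

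Combining the two steps, the winning probability of $Q$ is at most $1/d\le\omegac(\s{X}\mid\s{A}_1;\dots;\s{A}_m)_P$, which is exactly the claim. There is no real obstacle here: the lemma is essentially the statement that once no correlation places weight more than $1/d$ on any ``all-equal'' output tuple, the best one can hope for is what the uniform-guess (equivalently, most-likely-value) strategy already guarantees classically. The only point that deserves a sentence of care is the pigeonhole bound $\max_s P_\s{X}(s)\ge 1/d$, and the observation that the constant strategy is a legitimate classical (hence no-signalling) strategy so that it lower-bounds $\omegac$.
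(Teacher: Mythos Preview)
Your proposal is correct and follows essentially the same argument as the paper: bound the winning probability of $Q$ by $1/d$ via the pointwise hypothesis, then lower-bound $\omegac$ by $1/d$ using the constant ``output the most likely $x$'' strategy together with the pigeonhole bound $\max_s P_\s{X}(s)\ge 1/d$. The paper presents the two steps in the opposite order, but the content is identical.
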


\begin{proof}
The proof relies on the simple fact that the $m$ players can always use deterministic strategies to win with at least probability $1/d$ by ignoring their inputs and guessing the value of $x$ to be the one most likely in $P$. The probability that the referee picks a certain value $x$ is given by $P(x) = \sum_{a\in\SA_1 \times \cdots\times\SA_m} P(x,a)$ and since $\sum_{x} P(x) = 1$, there exists an~$x^*\in\SX$ such that $P(x^*) \geq 1/d$. We conclude that $\omegac(\s{X} \big| \s{A}_1; \dots; \s{A}_m)_P \geq 1/d$.

We use the previous argument to finish the proof:
\begin{align*}
    &\sum_{\substack{x\in\SX\\ a_1 \in \SA_1, \dots, a_m \in \SA_m}} P(x,a_1,\dots,a_m)Q(x,\dots, x \big|a_1, \dots, a_m)\\ \leq& \frac{1}{d}\sum_{\substack{x\in\SX\\ a_1 \in \SA_1, \dots, a_m \in \SA_m}} P(x,a_1,\dots,a_m) = \frac{1}{d} \leq \omegac(\s{X} \big| \s{A}_1; \dots; \s{A}_m)_P.
\end{align*}
\end{proof}

\begin{corollary}\label{cor:ns_strat_reduction}
Consider an LSSD problem with $m$ players defined by a distribution $P$ for which $\omegac(\s{X}|\s{A}_1; \dots; \s{A}_m)_P < \omegans(\s{X}|\s{A}_1; \dots; \s{A}_m)_P$. There is an optimal no-signalling strategy $Q$ at one of the vertices of the no-signalling polytope, such that there exist $x \in\SX$, with $|\SX| = d$, and $a_1 \in \SA_1, \dots, a_m \in \SA_m$ for which $Q(x,\dots, x \big| a_1, \dots, a_m) > 1/d$.
\end{corollary}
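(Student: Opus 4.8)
The plan is to combine the polytope structure of no-signalling correlations with the contrapositive of \cref{lem:ns_strat_reduction}. First I would recall the observation already made in \cref{sec:lin_prog} and in the subsection on no-signalling resources: the winning probability of a no-signalling strategy is a linear function of the entries $Q(x_1,\dots,x_m\mid a_1,\dots,a_m)$, and the no-signalling correlations form a convex polytope. Consequently the supremum defining $\omegans(\s{X}\mid\s{A}_1;\dots;\s{A}_m)_P$ is a maximum, attained at (at least) one vertex of this polytope. Fix such an optimal vertex strategy $Q$; its winning probability in the LSSD game defined by $P$ equals $\omegans(\s{X}\mid\s{A}_1;\dots;\s{A}_m)_P$.

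Next I would argue by contradiction. Suppose $Q(x,\dots,x\mid a_1,\dots,a_m)\leq 1/d$ for every $x\in\SX$ and all $a_1\in\SA_1,\dots,a_m\in\SA_m$. Then $Q$ satisfies the hypothesis of \cref{lem:ns_strat_reduction}, so its winning probability is at most $\omegac(\s{X}\mid\s{A}_1;\dots;\s{A}_m)_P$. But the winning probability of $Q$ equals $\omegans(\s{X}\mid\s{A}_1;\dots;\s{A}_m)_P$, and by the hypothesis of the corollary $\omegac(\s{X}\mid\s{A}_1;\dots;\s{A}_m)_P<\omegans(\s{X}\mid\s{A}_1;\dots;\s{A}_m)_P$, a contradiction. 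Hence there must exist $x\in\SX$ and $a_1,\dots,a_m$ with $Q(x,\dots,x\mid a_1,\dots,a_m)>1/d$, which is exactly the assertion.

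There is essentially no hard step here; the statement is a bookkeeping consequence of the two preceding facts. The only points worth a word of care are, first, that the existence of an optimal strategy located at a vertex is a generic property of linear optimization over a polytope and does not depend on the particular game, so invoking it at the outset is legitimate; and second, that \cref{lem:ns_strat_reduction} is applicable to this vertex strategy precisely because its hypothesis is a pointwise upper bound on the ``all-equal'' outputs $Q(x,\dots,x\mid a_1,\dots,a_m)$, which is exactly the quantity negated in the contradiction hypothesis. If anything could be called the main obstacle, it is merely making sure the chosen optimal strategy is both a vertex and optimal simultaneously, which the linear-programming argument delivers directly.
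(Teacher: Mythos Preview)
Your proposal is correct and follows essentially the same argument as the paper: invoke the linear-programming fact that an optimal no-signalling strategy can be taken at a vertex of the polytope, then apply \cref{lem:ns_strat_reduction} contrapositively to force some entry $Q(x,\dots,x\mid a_1,\dots,a_m)>1/d$. There is no meaningful difference in approach.
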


\begin{proof}
Since the set of all no-signalling strategies is a convex polytope, and the winning probability of a no-signalling strategy is a linear function, we know that the optimal winning probability is achieved by a strategy $Q$ at one of the vertices of the polytope (see \cref{sec:lin_prog}). We also know that there exist $x \in \SX$ and $a_1 \in \SA_1, \dots, a_m \in \SA_m$ such that $Q(x,\dots,x | a_1, \dots, a_m) > 1/d$, because otherwise this strategy would not achieve winning probability higher than $\omegac(\s{X}|\s{A}_1; \dots; \s{A}_m)_P$ by \cref{lem:ns_strat_reduction}. 
\end{proof}

In the case of two players, we would now be done in showing that there is no binary LSSD game with a gap between no-signalling and classical winning probabilities, since all no-signalling correlations at the extreme points of the no-signalling polytope satisfy the conditions of \cref{lem:det_strat_reduction} \cite[Theorem 1]{barrett2005nonlocal}. We will see in the next section that for three players, this is not the case. However, \cref{cor:ns_strat_reduction} is still very useful as it eliminates many of the no-signalling strategies.

\subsection{No gap between classical and no-signalling}\label{sec:no_gap_three_player} 

\begin{theorem}\label{thm:nogap}
$\omegans(\s{X}|\s{A};\s{B};\s{C})_P = \omegac(\s{X}|\s{A};\s{B};\s{C})_P$ for all probability distributions $P_\s{XABC}$ over binary inputs and outputs.
\end{theorem}

\begin{proof}
Thanks to \cref{eq:pgrelations}, we can equivalently show that 
\[
    \sup_{P} \of[\big]{\omegans(\s{X}|\s{A};\s{B};\s{C})_P - \omegac(\s{X}|\s{A};\s{B};\s{C})_P} = 0.
\]
Now we have turned the problem into an optimization problem. It is, however, not possible to solve this problem using a single linear program, since the target function is not linear: the target function is the maximum of the difference between two sets. Luckily, using \cref{cor:ns_strat_reduction} and some additional tricks, we can solve this problem using multiple linear programs.

First of all, we note that the set of all probability distributions $P_\s{XABC}$ forms a convex polytope in $\R^n$. The polytope is defined by the following linear constraints:
\[
    \forall x,a,b,c \;\; P_\s{XABC}(x,a,b,c) \geq 0,
\]
and
\[
    \sum_{x,a,b,c}P_\s{XABC}(x,a,b,c) = 1.
\]

Apart from the variables that describe a probability distribution, we also add two variables $c_\text{d}$ and $c_\text{ns}$ to the linear program, which represent $\omegac(\s{X}|\s{A};\s{B};\s{C})_P$ and $\omegans(\s{X}|\s{A};\s{B};\s{C})_P$ respectively. These two variables should satisfy the following constraints:
\[
    c_\text{d} \geq \sum_{x,a,b,c} P_\s{XABC}(x,a,b,c)Q_\text{d}(x,x,x|a,b,c),
\]
for all deterministic strategies $Q_\text{d}$ and 
\begin{align}\label{for:c2_constraints}
    c_\text{ns} \geq \sum_{x,a,b,c} P_\s{XABC}(x,a,b,c)Q_\text{ns}(x,x,x|a,b,c),
\end{align}
for all no-signalling strategies $Q_\text{ns}$ at the vertices of the no-signalling polytope. 

Now, the problem is to maximize $c_\text{ns} - c_\text{d}$, which is a linear function in two variables, so we can use a linear program. However, since we have not put an upper bound on~$c_\text{ns}$, this problem is obviously unbounded. We can work around this issue by changing one of the constraints in \cref{for:c2_constraints} to an equality. Solving the linear program with one of these constraints set to an equality constraint gives us the maximum gap under the  assumption that the corresponding no-signalling strategy is the best strategy. By considering all no-signalling strategies in this way we can find the maximum gap between classical and no-signalling winning probabilities.

All that is left is to find the no-signalling strategies at the extreme points of the no-signalling polytope. We can find them using a \textit{Python} package called \textit{cddlib}, which is based on a C package under the same name \cite{fukuda}. 
Similar to linear programs, this package can provide all vertices of the polytope corresponding to a given set of linear constraints. In our case the constraints say that the strategy $Q_\text{ns}$ is a conditional probability distribution on $\SX^3 \times \SA \times \SB \times \mathscr{C}$ and it is no-signalling (where we can use \cref{lem:no-signal_multiple_parties} to omit redundant constraints). 
\edited{We find with \file{three\_player\_polytope\_extrema.py} \cite{code} that this no-signalling polytope has $53856$ extreme points, which is in line with the findings of the paper by Pironio et al.~\cite[Section 2.2]{pironio2011extremal}.}

Since the above number of extremal no-signalling strategies is quite large, we would like to reduce it so that we need to solve fewer linear programs. Using \cref{cor:ns_strat_reduction}, there must be an optimal strategy of a specific form, which reduces the number of relevant no-signalling strategies from $53\,856$ to $174$. In addition, we can also use \cref{lem:det_strat_reduction} to reduce the number of relevant deterministic strategies from $2^6=64$ to $10$.
\edited{This calculation is performed by \file{filter\_three\_player\_strategies.py} \cite{code}.}

\edited{
Now that we have everything needed to find the maximum gap between binary three-party classical and no-signalling strategies, we use the \textit{Mathematica} notebook \file{Three-party binary LSSD.nb} \cite{code} to exactly solve the above $174$ linear programs.
In each case the optimal value is $0$, meaning that there is no binary LSSD game for three players such that no-signalling resources improve its winning probability.}
\end{proof}

\end{document}